\def\shortsodaversion{0}
\newif\ifhyper\IfFileExists{hyperref.sty}{\hypertrue}{\hyperfalse}
\ifhyper\usepackage{hyperref}\fi
\newtheorem{theorem}{Theorem}
\newtheorem{lemma}[theorem]{Lemma}
\newtheorem{corollary}[theorem]{Corollary}
\newtheorem{claim}[theorem]{Claim}
\newtheorem{fact}[theorem]{Fact}
\newtheorem{definition}[theorem]{Definition}
\newcommand{\eps}{\epsilon}
\renewcommand{\tt}{\bf}
\newcommand{\Var}{\operatorname{Var}}
\newcommand{\ignore}[1]{}
\newcommand{\cref}[1]{Corollary~\ref{cor:#1}}
\newcommand{\calD}{{\cal D}}
\newcommand{\dK}{d_{{\mathrm{K}}}}
\newcommand{\bit}{{\mathrm{bit}}}
\newcommand{\R}{{\mathbb{R}}}
\newcommand{\E}{\operatorname{{\bf E}}}
\newcommand{\littlesum}{\mathop{\textstyle \sum}}
\newcommand{\poly}{\mathrm{poly}}
\newcommand{\wh}{\widehat}
\newcommand{\eqdef}{\stackrel{\textrm{def}}{=}}
\newcommand{\obj}{\mathrm{Obj}}
\renewcommand{\Pr}{\operatorname{{\bf Pr}}}
\newcommand{\opt}{\mathsf{opt}}
\newcommand{\new}[1]{#1}
\newcommand{\newad}[1]{#1}
\date{}
\begin{document}

\setcounter{page}{0}

\title{A Polynomial-time Approximation Scheme for\\ Fault-tolerant Distributed Storage}

\author{Constantinos Daskalakis\footnote{MIT. Email: \texttt{ costis@csail.mit.edu.}} \and
Anindya De\footnote{UC Berkeley. Email: \texttt{ anindya@cs.berkeley.edu.}} \and
Ilias Diakonikolas\footnote{University of Edinburgh. Email: \texttt{ilias.d@ed.ac.uk}.} \and
Ankur Moitra\footnote{Princeton and IAS. Email: \texttt{moitra@ias.edu}.} \and
Rocco A. Servedio\footnote{Columbia University. Email: \texttt{rocco@cs.columbia.edu}.}
}



\maketitle

\thispagestyle{empty}

\begin{abstract}

We consider a problem which has received considerable attention in
systems literature because of its applications to routing in delay tolerant networks and replica placement
in distributed storage systems.
In abstract terms
the problem can be stated as follows: Given a random variable $X$
generated by a known product distribution over $\{0,1\}^n$ and a target
value $0 \leq \theta \leq 1$, output a non-negative vector $w$, with
$\|w\|_1 \le 1$, which maximizes the probability of the event $w \cdot X
\ge \theta$.  This is a challenging non-convex optimization problem for
which even computing the value $\Pr[w \cdot X \ge \theta]$ of a proposed
solution vector $w$ is \#P-hard.

We provide an \new{additive} EPTAS for this problem
which, for constant-bounded product distributions,
runs in $ \poly(n) \cdot 2^{\poly(1/\eps)}$ time and outputs an
$\eps$-approximately optimal solution vector $w$ for this problem. Our
approach is inspired by, and extends,
recent structural results from the complexity-theoretic
study of linear threshold functions. Furthermore, in spite of the objective function being non-smooth, we give a \emph{unicriterion} PTAS while previous work 
for such objective functions has typically led to a \emph{bicriterion} PTAS. We believe our techniques may be applicable to get unicriterion PTAS for other non-smooth objective functions.

\end{abstract}

\newpage

\section{Introduction}

\vspace{-0.2cm}
Many applications involve designing a system that will
perform well in an uncertain environment. Sources of uncertainty include
(for example) the demand when we are designing a server, the congestion
when we are designing a routing protocol, and the failure of the
system's own components when we are designing a distributed system. Such
uncertainties are often modeled as stochastic variables, giving rise
to non-linear and non-convex optimization problems.
In this paper, we study a non-convex
stochastic optimization problem that has received considerable attention
in the systems
literature~\cite{jainetal2,fall,dimakis1,dimakis2,dimakis3,sardari} but
has remained poorly understood.

The main motivation for studying this problem comes from {\em
distributed storage}~\cite{dimakis0,bruck1,dimakis1,sardari}. The goal
in this literature is to develop methods for storing data among a set of
faulty processors in a way that makes it possible to recover the data in
its entirety despite processor failures. Clearly, to perform
this task we need to use some form of redundancy, as otherwise a single
processor failure could cause permanent loss of data. In particular,
this task contains as subproblems both the choice of an error correcting
code and the decision of how to allocate the encoded data into the
failure-prone processors, resulting in an enormous design space.

An important observation that is used throughout the literature is that
these two subproblems can be decoupled through the use of erasure codes
(see, e.g.,~\cite{erasure0, erasure1,erasure2, erasure3}). Such codes
can be used to encode the original data so that with high probability
{\em any} large enough subset of encoded data can be used to reconstruct
the original data.  In view of this observation, we can formulate the
distributed storage problem as a much simpler to state problem:

Suppose that our original data has size $\theta$~GB for some $0 \leq
\theta \leq 1$, and we use an erasure code to generate 1~GB of encoded
data. The goal is to allocate the data among $n$ failure-prone nodes so
as to maximize the probability that the original data can be recovered.
The standard formulation of the
problem~\cite{dimakis0,bruck1,dimakis1,sardari} is that each node $i$ has
some known probability $1-p_i$ of failing, and that these failures are
independent across different nodes. So, mathematically our goal is to
solve the following problem, which we call Problem (P):

\begin{framed}

\noindent {\bf Input:} 
An $n$-vector of probabilities $p = (p_1, \ldots, p_n) \in [0,1]^n$
and a parameter $\theta \in [0,1]$.

\smallskip

For $i \in [n]$ let $\mu_{p_i}$ be the distribution on $\{0, 1\}$ with
$\mu_{p_i} (1) = p_i$, and let the corresponding product distribution
over $\{0,1\}^n$ be denoted by $\calD_p = \bigotimes_{i=1}^n \mu_{p_i}$.

\smallskip

\noindent {\bf Output:} 
A weight vector $w = (w_1,
\ldots, w_n) \in \R^{n}_{\geq 0}$ satisfying $\|w\|_1 \le 1$ (such a $w$
is said to be a \emph{feasible solution}).  The goal is to maximize
$$\obj(w) \eqdef \Pr_{X \sim \calD_p} \left[ w \cdot X \ge \theta
\right].$$ A feasible solution that maximizes $\obj(w)$ is said to be an
\emph{optimal solution}. 
\noindent \new{We will denote by $\opt = \opt(p, \theta)$ the maximum value of any feasible solution.} \end{framed}


In the above formulation $w_i$ denotes the amount of data that we decide
to store in the $i$-th storage node, and $X_i$ is the indicator random
variable of the event that the $i$-th storage node does not fail.

\new{Before we proceed, we point out a connection of the optimization problem (P) above with the class of {\em Boolean halfspaces}  or
{\em Linear Threshold Functions (LTFs)} that will be crucially exploited throughout this paper. Recall that a Boolean function $f: \{0,1\}^n \to \{0,1\}$
is a halfspace if there exists a weight-vector $v \in \R^n$ and a threshold $t \in \R$ so that 
$f(x) = 1$ if and only if $v \cdot x \ge t.$ Hence, the objective function value $\obj(w)$ of a feasible weight-vector $w$ 
(i.e., $w  \in \R^{n}_{\geq 0}$ with $\|w\|_1 \le 1$) can be equivalently expressed as $\obj(w) =  \Pr_{X \sim \calD_p} \left[ h_{w, \theta}(X)=1 \right]$, 
where $h_{w, \theta}(x) = \mathbf{1}_{ \left\{x \in \{0,1\}^n : w \cdot x \ge \theta \right\}}$ is the halfspace with weight-vector $w$ and threshold $\theta$.

We remark that, even though the feasible set is continuous, it is not difficult to show that there exists a rational optimal solution. 
In particular, analogous to the linear-algebraic arguments~\cite{MTT:61, Muroga:71, Raghavan:88}, we can also show that there always exists an optimal solution 
with bit-complexity polynomially bounded in $n$; in fact, one with at most $O(n^2\log n)$ bits which is best possible~\cite{Hastad:94}. 
(As a corollary, the supremum is always attained and problem (P) is well-defined.)}

\subsection{Previous \new{and Related Work}}


\noindent {\bf Previous Work on the Problem.} 
The stochastic design problem (P) stated above was formulated explicitly in the work of Jain et al.~\cite{jainetal2}.
That work was motivated by the problem of routing in Delay Tolerant
Networks~\cite{jainetal1}. These networks are characterized by a lack of
consistent end-to-end paths, due to interruptions that may be either
planned or unplanned, and selecting routing paths is considered to be
one of the most challenging problems. The authors of~\cite{jainetal2}
reduce the route selection problem to Problem~(P) in a range of settings
of interest, and study the structure of the optimal partition as well as
its computational complexity, albeit with inconclusive theoretical
results.

One of the special cases of the problem considered in~\cite{jainetal2}
is the case where all the $p_i$'s are equal, i.e., when
$p_1=\ldots=p_n=p$. Even in this case, the structure of the optimal
solution is not well-understood. It is natural to expect that the
optimal weight vector is obtained by equally splitting
the allowed unit of weight over a subset of the indices, and setting
the weights to be $0$ on all other indices
(in other words, set $w_1=w_2=\ldots=w_k =
{1 \over k}$ and $w_{k+1}=\ldots=w_n=0$, for some $k$).
The authors of~\cite{jainetal2} consider the performance of this strategy for
different values of $p$ and $\theta$, as do the
papers \cite{dimakis1,dimakis2,dimakis3}.
Surprisingly, such partitions are not
necessarily optimal. For a counter-example, communicated to us by
R.~Kleinberg~\cite{bobby}, consider the setting where $n=5$, $\theta=5/12$
and $p=1-\epsilon$, for sufficiently small $\epsilon$. In this case, the
allocation vector $w=(1/4, 1/4, 1/6, 1/6, 1/6)$ performs better than the
uniform weight vector over any subset of the coordinates
$\{1,\ldots,5\}$. There has also been work on 
a related distributed storage
problem~\cite{sardari} that uses a slightly different model of node failures. In this model, instead of every node failing with probability $p$, a random subset of nodes of size $pn$ is assumed to fail. 
In this setting, the  conjecture
that certain symmetric allocations are optimal
is related to a conjecture of Erdos \cite{erdos} on the maximum number of
edges in a $k$-uniform hypergraph whose (fractional) matching size is
at most $s$ (see~\cite{hyper} for a detailed discussion of the connection).

\medskip
\new{
\noindent {\bf Related Work.} Stochastic optimization is an important research area with diverse applications
having its roots in the work of Dantzig~\cite{Dan55} and Beale~\cite{Beal55} that has been
extensively studied since the 1950's (see e.g.,~\cite{BL97} for a book on the topic). 
During the past couple of decades,  there has been an extensive literature on efficient approximation algorithms 
for stochastic combinatorial optimization problems in various settings, see e.g.,~\cite{KRT97, DeanGV08, BGK11, Nik10, Swamy11, LD11, LiY13} and references therein.

In many of these works, 
one wants to select a subset of (discrete independent) random variables whose sum
optimizes a certain non-linear function. For example, the objective function of our problem (P) corresponds 
to the {\em threshold probability maximization} problem~\cite{Nik10, LiY13}. 
Note that, while the solution space in the aforementioned works is typically discrete and finite in nature, 
the solution space for our problem is continuous. 
In particular, it is not always possible to discretize the space 
without losing a lot in the objective function value (see Section~\ref{ssec:techniques} for a detailed explanation of the difficulties in our setting). 

Regarding {\em threshold probability maximization}, Li and Yuan~\cite{LiY13} obtained {\em bicriterion} additive PTAS 
for stochastic versions of classical combinatorial problems, such as shortest paths, spanning trees, matchings and knapsacks. 
Roughly, they obtain a bicriterion guarantee because the function to be optimized does not have a bounded Lipschitz constant. In contrast, even though the $\Pr[w \cdot X \geq \theta]$ function that we are optimizing does not have a bounded Lipschitz constant, we are able to obtain a \emph{unicriterion} PTAS by exploiting new structural properties of near-optimal solutions that we establish in this work, as described below.
In terms of techniques, \cite{LiY13} use Poisson approximation and discretization as a main component of their results.
We note that this approach is not directly applicable in our setting, since we are dealing with a {\em weighted} sum of Bernoulli 
random variables with arbitrary {\em real} weights and we are shooting for a unicriterion PTAS.
We view the unicriterion guarantee that we achieve as an important contribution of the techniques in this work.

}



\medskip

\subsection{Our results}
\new{It is unlikely that Problem (P) can be solved exactly in polynomial time. Note that (even for the very special case when each $p_i$ equals $1/2$)
(exactly) evaluating the objective function $\obj(w)$ of a candidate solution $w$ is $\#P$-hard. (This follows by a straightforward reduction from 
the counting version of knapsack, see e.g., Theorem 2.1 of \cite{KRT} for a proof.) 
In fact, problem (P) is easily seen to lie in $NP^{\#P}$, and we are not aware of a better upper bound. We conjecture that the exact problem is intractable,
namely $\#P$-hard.}


\medskip

\new{The focus of this paper is on efficient approximation algorithms.} As our main contribution, we give an 
additive EPTAS for (P) \new{for the case that each $p_i$ is bounded away from $0$}. That is, 
we give an algorithm that for every $\eps>0$, outputs a feasible solution $w$ such that $\obj(w)$ is within an additive $\eps$ of the optimal value. 
An informal statement of our main result follows (see Theorem~\ref{thm:main} for a detailed statement):
\begin{theorem} \label{thm:main-informal} [Main Result -- informal
statement] 
\new{Fix any $\eps>0$ and} let $p=(p_1,\dots,p_n)$ be any input instance such that
$\min_i p_i \geq \eps^{\Omega(1)}$.  There is a randomized algorithm which, for any
such input vector $p$ and any input threshold $0 \leq \theta \leq 1$, runs
in $\poly(n) \cdot 2^{\poly(1/\eps)}$ time and with high probability outputs
a feasible solution vector $w$ whose value is within an additive $\eps$ of the
optimal. \end{theorem}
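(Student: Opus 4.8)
\medskip
\noindent\textbf{Proof plan.}

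The plan is to prove a structural theorem showing that $\opt$ is $\eps$-approximated by a feasible vector of a special ``head--tail'' form, then to enumerate over a family of $2^{\poly(1/\eps)}$ such candidates, estimating the objective of each in $\poly(n) + 2^{\poly(1/\eps)}$ time. Fix an optimal $w^\star$ and reorder coordinates so that $w^\star_1 \ge \cdots \ge w^\star_n \ge 0$; set $\tau := \poly(\eps)$ and let $a$ be the $\tau$-critical index of $w^\star$, i.e.\ the least $i$ with $w^\star_{i+1} \le \tau\cdot\|(w^\star_{i+1},\dots,w^\star_n)\|_2$. The target statement is: there is a near-optimal feasible $w$ and a set $H$ of at most $k := \poly(1/\eps)$ ``head'' coordinates such that the $\ell_1$-budget not used on $H$ is spread \emph{flatly} over a set of ``tail'' coordinates. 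Coordinates with $p_i$ extremely close to $1$ are first set aside: fixing their indicator variables to $1$ and treating their total weight as one more coarsely quantized parameter shifts $\theta$ by only $\poly(\eps)$, by a Markov bound on $\sum w_i(1-X_i)$; afterwards every relevant $p_i$ lies in $[\poly(\eps),\,1-\poly(\eps)]$.

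The structural theorem is the heart of the argument and is where the LTF machinery enters, via a dichotomy on the critical index. If $a \le L := O(\tau^{-2}\log(1/\tau)) = \poly(1/\eps)$, take $H$ to be the top $a$ coordinates; then the tail is $\tau$-regular by definition, and a Berry--Ess\'een estimate for weighted sums of independent, non-identical Bernoulli variables gives that $\sum_{i\notin H} w^\star_i X_i$ is within $\poly(\eps)$ in CDF distance of $\mathcal N(\mu_T,\sigma_T^2)$ with $\mu_T = \sum_{i\notin H} w^\star_i p_i$ and $\sigma_T^2 = \sum_{i\notin H}(w^\star_i)^2 p_i(1-p_i)$ --- here $\tau$-regularity together with $p_i \in [\poly(\eps),1-\poly(\eps)]$ bounds the error by $O(\tau/\sqrt{\min_i p_i(1-p_i)})$. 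If instead $a > L$, the tail $\ell_2$-norm decays geometrically over the first $L$ coordinates, so $\|(w^\star_{L+1},\dots,w^\star_n)\|_2 \le \tau$ (hence the tail contributes essentially its mean), while the head-sum $\sum_{i\le L} w^\star_i X_i$ is spread out enough that $w^\star\cdot X$ is anticoncentrated at scale $\poly(\eps)$ around $\theta$; this anticoncentration is precisely what lets us round $\theta$, the $p_i$'s and $w^\star$ itself to coarse grids without moving $\obj$ by more than $\eps$, which is the point behind obtaining a \emph{unicriterion} guarantee for a non-Lipschitz objective. In both cases one then shows that replacing the tail by a flat allocation with the same approximate mean (and, when the tail variance is non-negligible, the same approximate variance) changes $\obj$ by at most $\eps$.

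Given the structural theorem, the algorithm rounds every $p_i$ to one of $\poly(1/\eps)$ buckets and enumerates: (i) the multiset of buckets used by the $\le k$ head coordinates --- $\binom{k+\poly(1/\eps)}{k} = 2^{\poly(1/\eps)}$ choices; (ii) quantized values of $(\mu_T,\sigma_T)$, of the set-aside budget, and of $\theta$ --- $\poly(1/\eps)$ choices each; and (iii) a description of the head weighting --- either its weights quantized to multiples of $\eta := \poly(\eps)$ (used when $\sigma_T$ is non-negligible, where $\obj$ is $\poly(1/\eps)$-Lipschitz in the head weights), or the threshold up-set of head assignments deemed ``winning'' (used when $\sigma_T$ is tiny, where $\obj$ reduces to the $\calD_p$-mass of that up-set and head realizability becomes an LP feasibility problem in $\le k+1$ variables) --- $2^{\poly(1/\eps)}$ choices either way. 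For each configuration we check realizability in $\poly(n)$ time (enough actual coordinates in each required bucket to populate the head; the leftover budget spreadable flatly over tail coordinates so as to attain $(\mu_T,\sigma_T)$ --- a greedy/transportation computation; plus the head LP when applicable) and estimate the objective, either by the $\calD_p$-mass of the up-set or by $\E_{X_H}\big[\,\overline\Phi\big((\theta - w_H\cdot X_H - \mu_T)/\sigma_T\big)\big]$, a sum over the $2^{|H|}$ head settings evaluated to accuracy $\eps$; we output the realizing vector of the best configuration. By the structural theorem some configuration has value $\ge \opt - O(\eps)$, and the total time is $\poly(n)\cdot 2^{\poly(1/\eps)}$ after rescaling $\eps$ (the realizability and estimation subroutines may be implemented by sampling, which is what makes the algorithm randomized).

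The main obstacle is the structural theorem, and within it the requirement that a single head set $H$ of only $\poly(1/\eps)$ coordinates simultaneously (a) carry enough anticoncentration of $w^\star\cdot X$ at $\theta$ that $\obj$ is robust to the coarse quantizations the enumeration relies on --- or else force us into the ``concentrated'' regime where $\obj$ is governed by $2^{\poly(1/\eps)}$ heavy atoms and must be controlled combinatorially --- and (b) leave behind a tail regular enough for the Bernoulli Berry--Ess\'een bound to be accurate, all uniformly over the continuous set of near-optimal solutions. Reconciling these competing demands, and in particular handling the genuinely non-Lipschitz concentrated case so as to preserve the exact budget constraint $\|w\|_1 \le 1$ rather than relaxing to a bicriterion guarantee, is the crux.
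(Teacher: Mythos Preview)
Your architecture is essentially the paper's---split on the $\tau$-critical index, use Berry--Ess\'een for the regular-tail branch, use geometric decay plus head anticoncentration for the large-critical-index branch, then enumerate $2^{\poly(1/\eps)}$ candidates---but one load-bearing claim is unjustified. In the large-critical-index branch you assert that the head sum is ``anticoncentrated at scale $\poly(\eps)$.'' The geometric-decay argument only gives anticoncentration at scale $\Theta(w^\star_{i_s})$, where $w^\star_{i_s}$ is the last weight in the geometrically decreasing subsequence inside the head, and nothing you have written lower-bounds this in absolute terms: indeed $w^\star_{i_s}\le 3^{-(s-1)}$ is automatically $2^{-\poly(1/\eps)}$, and if the tail $L_1$-mass is comparably tiny there is no rescue from below either. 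Your coarse quantization of $\mu_T$ to $\poly(1/\eps)$ values then shifts the effective head threshold by far more than that radius, and the unicriterion guarantee is lost---this is exactly the non-Lipschitz obstruction you flag in your last paragraph but do not actually resolve. The same issue resurfaces in the small-$\sigma_T$ sub-branch of the regular case: the head LP needs $\theta-\mu_T$ to the precision at which the winning up-set is determined, and you have no such bound independent of the unknown $w^\star$.

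The paper closes this gap with a separate structural theorem (Theorem~\ref{thm:large-tail}): every feasible $w$ is dominated, as a halfspace, by a feasible $v$ that is either an $L$-junta or satisfies $\sum_{i\le L} v_i \le (L+2)^{(L+2)/2}\sum_{i>L} v_i$. Combined with weight monotonicity this lower-bounds $v_{i_s}$ (hence the anticoncentration scale in the large-CI case) and $\sigma_T$ (in the small-CI case) by $\poly(1/n,\,2^{-\poly(L)})$, which is precisely what fixes the rounding granularity $\kappa$ and keeps the tail search to $\poly(n,2^{\poly(L)})$ states. The proof is not a regularity or anticoncentration argument at all: it is linear-fractional programming, reduced via the Charnes--Cooper transformation to an LP whose vertex coordinates are then bounded by Cramer's rule and Hadamard's inequality. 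This is the main new structural idea, and your plan does not contain it. A secondary issue: replacing the tail by a \emph{flat} allocation is over-constrained (you must match mean, variance, and $L_1$-budget simultaneously while staying regular), which is why the paper instead runs a dynamic program over all $\kappa$-granular tails achieving a prescribed tuple of these statistics rather than insisting on a single flat shape.
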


\subsection{Our techniques} \label{ssec:techniques}

\new{ \noindent {\bf Background.} In recent years, there has been a surge of research interest in concrete complexity theory
on various problems concerning halfspaces. These include constructions of low weight
approximations of halfspaces~\cite{Servedio:07cc,DiakonikolasServedio:09,DDFS:12stoc}, PRGs for halfspaces~\cite{DGJ+:10, MZstoc10},
property testing algorithms~\cite{MORS:10} and approximate reconstruction of halfspaces
from low-degree Fourier coefficients~\cite{OS11:chow, DDFS:12stoc} among others.

All these results use a ``structure versus randomness'' tradeoff
for halfspaces which can be described roughly as follows:  Consider
the weights of a halfspace $\mathbf{1}_{\left\{ x \in \{0,1\}^n: w \cdot x \geq \theta \right\}}$ in order of
decreasing magnitude.  If the largest-magnitude weight is ``small'' compared to the {\em 2-norm} of the weight-vector $w$,
then the Berry-Ess{\'e}en theorem (a quantitative version of the Central Limit Theorem with explicit
error bounds) implies that for independent $\{0, 1\}$ random variables 
$X_i$ (that are not too biased towards $0$ or $1$), 
the distribution of $w \cdot X$ will be well-approximated by the Gaussian distribution with the same mean and variance. 
This is a very useful statement because it implies that the discrete random variable  $w \cdot X$ essentially inherits
several nice properties of the Gaussian distribution (such as anti-concentration, strong tail bounds, and so on). 
On the other hand, if the largest-magnitude weight accounts for a
significant fraction of the $2$-norm, then the weight-vector obtained by erasing this weight has
significantly smaller $2$-norm, and we have ``made progress;'' intuitively, after a bounded number
of steps of this sort, the 2-norm of the remaining weights will be extremely small, so the
halfspace essentially depends only on the first few variables
and should be ``easy to handle'' for that reason.
These arguments can be made quantitatively precise
using the notion of the ``critical index'' (introduced in \cite{Servedio:07cc}; see
Definition \ref{def:ci}) which plays an important
role in much of the  work described above.

\smallskip

\noindent {\bf Our Contribution.} 
In this paper we show how tools from the  complexity-theoretic literature on halfspaces 
alluded to above can be leveraged in order to make algorithmic progress on our optimization problem (P).  
As we will explain below, several non-trivial technical issues arise in the context of problem (P) which require careful treatment.

At a high-level, in this work we adapt and enhance this technical machinery in order to obtain a structural understanding of the problem,
which is  then combined with  algorithmic and probabilistic techniques to obtain a PTAS.
Very roughly, we proceed as follows: We partition the space of {\em optimal} solution vectors $v^{\ast}$ into a constant number of subsets, 
based on the value of the critical index of $v^{\ast}$.
For each subset we apply a (different) algorithm which outputs a candidate (feasible) solution which is guaranteed 
to be $\eps$-optimal, {\em assuming $v^{\ast}$ belongs to the particular subset.} Since at least one subset contains an optimal solution, the best candidate
solution will be $\eps$-approximately optimal as desired. 

Of course, we need to explain how to compute a candidate solution for each subset.
A basic difficulty comes from the fact that our problem
is not combinatorial. The space of feasible solutions is continuous and even though one can easily argue that there exists
a rational optimal solution with polynomially many bits, a priori we do not know anything more about its structure.
We note that a natural first approach one would think to try in this context
would be to appropriately ``discretize'' the weights (e.g., by using a geometric subdivision, etc) 
and then use dynamic programming to optimize in the discretized space.
However, it is far from clear how to show that such a naive discretization works; 
one can easily construct examples of weight vectors $w$ such that ``rounding'' the coefficients of $w$ 
to an appropriate (inverse polynomial in $n$) granularity radically changes the value of the objective function\footnote{Moreover, we note that discretization of the space followed
by standard approaches, e.g., along the lines of~\cite{ChekuriK05}, seems to inherently lead 
to {\em bicriteria} guarantees.}. 

To compute an approximately optimal solution for each case (i.e., for $v^{\ast}$ in a particular subset as described above) 
one needs a better understanding of the structure of the optimal solutions. The reason why ``rounding'' the coefficients
may substantially change the objective function value is because for certain weight vectors $w$ the random variable
$w \cdot X$ is very concentrated, i.e., it puts a substantial fraction of its probability mass in a small interval. If on the other hand, 
$w \cdot X$  is sufficiently {\em anti-concentrated}, i.e., it puts small mass on every small interval, then it is easy to argue
that ``rounding'' does not affect the objective function by a lot. Known results~\cite{TaoVu:annals} show that the anti-concentration of $w \cdot X$ depends strongly on the {\em additive structure} of $w$.
While it is hopeless to show that all feasible weight-vectors are anti-concentrated, one could hope to show that there exists
a {\em near-optimal} solution that has good anti-concentration. Essentially, this is what we do.

Our main structural theorem (Theorem~\ref{thm:large-tail}) shows that, except in degenerate cases, 
there always exists an optimal solution whose ``tail'' has sufficiently large $L_1$-norm compared with the ``head
\footnote{If the optimal weight vector only has nonzero coordinates in the $L$ coordinates in the ``head'' 
(think of $L$ as a constant -- it will depend only on $\eps$), 
then as we show we can find an optimal vector exactly in $\poly(n) \cdot 2^{\poly(L)}$ time by an enumeration-based approach.}.
We remark that, while results of a broadly similar flavor appear
in many of these previous papers (see e.g.,~\cite{Servedio:07cc,OS11:chow,DiakonikolasServedio:09}) there are 
a few crucial differences. First, the previous works compare the $L_2$ norms of the ``head'' and the ``tail''. Most importantly, 
 all previous such results consist of re-expressing the LTFs in a ``nice" form (which includes changing the value of the threshold $\theta$). 
 Indeed, the previous arguments which assert the existence of these nice forms do not control the value of the threshold as its exact value is immaterial. In contrast, for our problem the exact threshold
in comparison to the $L_1$-norm of the weight vector is a crucial
parameter. Our structural theorem   says that every
LTF has a well-structured equivalent version in which (1) the threshold stays exactly the same
relative to the \emph{$L_1$-norm} of the weights, and (2) \emph{$L_1$-norm} of the ``tail weights''
is ``large.''   Our proof of this theorem is based on {\em linear fractional programming}, 
which is novel in this context of structural results for LTFs.
Conceptually, our structural theorem serves as a ``pre-processing'' step which ensures
that the optimal weight-vector may be assumed to be well-structured; our algorithm crucially
exploits this nice structure of the optimal solution to efficiently find a near-optimal
solution.

}



\section{Preliminaries}
\vspace{-0.1cm}
\subsection{Simplifying assumptions about the problem instance}
It is clear that if $\theta=0$ or $\theta=1$ then it is trivial to output
an optimal solution; hence throughout the rest of the paper we assume that
$0 < \theta < 1.$

Without loss of generality we may make the following assumptions about the
input $(p_1,\dots,p_n)$:
\begin{enumerate}
\item [(A1)] $p_1 \geq \cdots \geq p_n$.

\item [(A2)] $p_1 < 1-\eps$ and all $p_i \in \{\eps/(4n),\dots,k\eps/(4n)\}$, where
$k\eps/(4n)$ is the largest integer multiple of $\eps/(4n)$ that is less than $1-\eps$.
For the first claim, note that if $p_1 \geq 1-\eps$ then the solution $w=(1,0,\dots,0)$
has $\Pr_{X \sim \calD_p} \left[  w \cdot X \ge \theta \right] \geq 1-\eps$
and hence $(1,0,\dots,0)$ is an $\eps$-optimal solution as desired.
For the second claim, given an input vector of arbitrary values $p' =(p'_1,\dots,p'_n)
\in [0,1-\eps)^n$, if we round the $p'_i$ values to integer multiples
of $\eps/4n$ to obtain $p=(p_1,\dots,p_n)$, then  a simple
coupling argument gives that for any event $S$, we have
$\left|\Pr_{X \sim \calD_p}[S] - \Pr_{X \sim \calD_{p'}}[S]\right|
\leq \eps/4.$
Hence for our purposes, we may assume that the initial
$p_i$ values are ``$\eps/(4n)$-granular'' as described above.

\end{enumerate}

We further make some easy observations about optimal solutions that will be useful later.  First, it is clear that there exists an
optimal solution $w$ with $\|w\|_1=1.$  (If $\|w\|_1 < 1$ then rescaling by
$\|w\|_1$ gives a new feasible solution whose value is at least as good as the original
one.)  Second, by assumption (A1) there exists
an optimal solution $w \in \R^n_+$ that satisfies $w_i \ge w_{i+1}$
for all $i \in [n-1]$.  (If $w_i < w_{i+1}$ it is easy to see that
by swapping the two values we obtain a solution whose value is
at least as good as the original one.)

\subsection{Tools from structural analysis of LTFs:
regularity and the critical index}

\begin{definition}[regularity]
Fix any real value $\tau > 0.$  We say that a vector $w = (w_1, \ldots, w_n) \in \R^n$ is \emph{$\tau$-regular}
if $\max_{i \in [n]} |w_i| \leq \tau \|w\|_2.$
A linear form $w \cdot x$ is said to be $\tau$-regular if $w$ is $\tau$-regular.
\end{definition}

Intuitively, regularity is a helpful notion because if $w$ is $\tau$-regular then the
Berry-Ess{\'e}en theorem can be used to show that for $X \sim {\calD_p}$,
the linear form $w \cdot X$ is distributed like a Gaussian (with
respect to Kolmogorov distance) up to an error of $\eta$, where
$\eta$ depends on the regularity parameter and the parameters
$p_1,\dots,p_n$
\ifnum\shortsodaversion=1
(see Corollary 19 in the full version).
\else
(see Corollary \ref{cor:BE}).
\fi

A key ingredient in our analysis is the notion of the ``critical index''
of a linear form $w \cdot x$.
The critical index was implicitly introduced and used in
\cite{Servedio:07cc} and was explicitly used in
\cite{DiakonikolasServedio:09,DGJ+:10,OS11:chow,DDFS:12stoc}
and other works.
Intuitively, the critical index of $w$ is the first index $i$ such that
from that point on, the vector $(w_i,w_{i+1},\dots,w_n)$ is regular.  A precise definition follows:

\begin{definition}[critical index] \label{def:ci}
Given a vector $w \in \R^n$ such that $|w_1| \geq \cdots \geq |w_n| > 0$,
for $k \in [n]$ we denote by $\sigma_k$ the quantity $\sqrt{\littlesum_{i=k}^n w_i^2}$.
We define the \emph{$\tau$-critical index $c(w, \tau)$ of $w$} as the smallest
index $i \in [n]$ for which $|w_i| \leq \tau \cdot \sigma_i$. If
this inequality does not hold for any $i \in [n]$, we define $c(w, \tau) = \infty$.
\end{definition}

Given a problem instance $p$ satisfying (A1) and (A2) and a value $\eps$,
we define
\begin{equation} \label{eq:L}
L=L(\eps,\gamma)=\min\{n, \Theta(1/(\eps^2 \gamma^2) \cdot (1/\gamma) \cdot
(\log 1/(\eps\gamma))\cdot(\log(1/\eps))\},
\end{equation}
where $\gamma= \min\{p_n,1-p_1\}\geq \eps/4n$.  The idea behind this choice of
$L$ is that it is the cutoff for ``having a large
$(\eps\gamma)/\new{200}$-critical index.''

\subsection{A useful structural theorem about solutions}

\ifnum\shortsodaversion=1
In Section \ref{sec:LP} we sketch a proof that given any feasible
solution, there is another feasible solution whose value is at least
as good as the original one and which has a ``heavy tail''
with respect to the $L_1$ norm (a detailed proof is given in the full version):
\else
In Section \ref{sec:LP} we prove that given any feasible
solution, there is another feasible solution whose value is at least
as good as the original one and which has a ``heavy tail''
with respect to the $L_1$ norm:
\fi

\begin{theorem} \label{thm:large-tail}
Fix $K \in [n]$, $0 < \theta < 1$, and $w_1 \geq \cdots \geq w_n \geq 0$ such that
$\sum_{i=1}^n w_i=1$.  Let $S=\{x \in \{0,1\}^n: w \cdot x \geq \theta\}.$
Then there is a vector $v=(v_1,\dots,v_n)$ such that
\begin{enumerate}

\item [(a)] $\sum_{i=1}^n v_i = 1$ and $v_1 \geq \cdots \geq v_n \geq 0$;

\item [(b)] every $x \in S$ has $v \cdot x \geq \theta$; and

\item [(c)] either $v_{K+1}=\cdots=v_n=0$ or else $\sum_{i=1}^k v_i \leq  \new{(K+2)^{(K+2)/2}} \cdot \sum_{i=K+1}^n v_i $.

\end{enumerate}
\end{theorem}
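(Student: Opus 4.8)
The plan is to find $v$ by solving a linear-fractional optimization problem whose feasible region encodes conditions (a) and (b) and whose objective measures the ratio of the ``head'' $L_1$-mass to the ``tail'' $L_1$-mass. Concretely, let $H = \sum_{i=1}^K v_i$ and $T = \sum_{i=K+1}^n v_i$, and consider minimizing $H$ (or equivalently the ratio $H/T$) over all vectors $v \ge 0$ that are non-increasing and satisfy $v \cdot x \ge \theta$ for every $x \in S$. Since $S$ is a fixed finite set, ``$v \cdot x \ge \theta$ for all $x\in S$'' is a finite system of linear inequalities in $v$, and ``non-increasing'' is the linear system $v_i \ge v_{i+1}$; so the feasible region $\mathcal{P}$ is a polyhedron. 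The original $w$ (scaled to have $\|\cdot\|_1 = 1$, which we may do freely since scaling preserves membership in $S$-satisfaction up to rescaling $\theta$; but here we want to keep $\theta$ fixed, so instead we simply observe $w$ itself lies in $\mathcal{P}$ up to the normalization already assumed) witnesses $\mathcal{P} \ne \emptyset$. We then normalize the output at the end so that $\sum_i v_i = 1$, which is legitimate precisely because both the constraint $v\cdot x \ge \theta$ and the monotonicity constraints are \emph{positively homogeneous} — wait, $v \cdot x \ge \theta$ is not homogeneous. So the right formulation is: work on the affine slice $\{v : \sum_i v_i = 1\}$ from the start, and on that slice minimize the linear functional $H(v) = \sum_{i=1}^K v_i$ subject to $v \in \mathcal{P} \cap \{\sum v_i = 1\}$.

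\textbf{Main steps.} First, set up the polyhedron $\mathcal{Q} = \{v \in \R^n : v_i \ge v_{i+1} \ge 0, \ \sum_i v_i = 1, \ v\cdot x \ge \theta \ \forall x \in S\}$; note $\mathcal{Q} \ne \emptyset$ since $w \in \mathcal{Q}$, and $\mathcal{Q}$ is bounded (it sits inside the simplex), hence a polytope. Second, minimize the linear functional $v \mapsto \sum_{i=1}^K v_i$ over $\mathcal{Q}$; let $v^\star$ be an optimal vertex. Properties (a) and (b) hold by construction. Third, and this is the crux, analyze the vertex $v^\star$: it is the unique solution of $n$ linearly independent tight constraints drawn from $\{v_i = v_{i+1}\}$, $\{v_n = 0\}$, $\{\sum v_i = 1\}$, and $\{v\cdot x = \theta : x \in S\}$. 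I would argue that if $v^\star_{K+1}, \dots, v^\star_n$ are not all zero, then among the tight constraints there can be at most one of the ``hard'' constraints $v\cdot x = \theta$ that actually involves a tail coordinate — more precisely, I want to bound the number of \emph{distinct values} among $v^\star_{K+1},\dots,v^\star_n$, and the number of tail coordinates, in terms of $K$. The monotonicity and nonnegativity constraints chain the tail coordinates into blocks of equal value; the tight $v\cdot x = \theta$ constraints (together with $\sum v_i = 1$) are the only ones that ``break'' the chain or fix a nonzero level. Counting degrees of freedom: if there are $t$ tail coordinates taking $d$ distinct nonzero values and the head has $K$ coordinates, then we have at most $K + d$ ``free'' parameters, and at most $d+1$ non-monotonicity tight constraints (one being $\sum v_i = 1$), forcing $d$ bounded by roughly $K$; combined with monotonicity this gives $\sum_{i=1}^K v^\star_i \le \mathrm{poly}(K)$-type bound on the ratio. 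I expect the precise constant $(K+2)^{(K+2)/2}$ to come out of a determinant/Cramer's-rule estimate: solving the $O(K)\times O(K)$ linear system for the distinct values in terms of $\theta$ and $1$, the entries are $0/1$ (coordinates of points in $S$) plus the all-ones row, so by Hadamard's inequality the solution entries are bounded by $(K+2)^{(K+2)/2}$, and hence $H/T \le H/(v^\star_{\min}) \le$ that bound.

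\textbf{The main obstacle.} The delicate part is the vertex/degrees-of-freedom argument showing that a minimizing vertex of $\mathcal{Q}$ either zeroes out all tail coordinates or has only $O(K)$ distinct values among the nonzero coordinates, so that Cramer's rule on an $O(K) \times O(K)$ system applies. One has to be careful that the tight $v\cdot x = \theta$ constraints need not be ``few'' as stated — there could be many of them — but any maximal linearly independent subset has size at most $n$, and the monotonicity constraints already span a large subspace, so only $O(d)$ of the $\{v \cdot x = \theta\}$ can be independent modulo the monotonicity block structure. Making ``modulo the block structure'' precise — i.e., passing to the quotient where each equal-valued block of coordinates becomes a single variable, and checking that in this quotient the $v\cdot x = \theta$ constraints become linear equations with $0/1/$ integer coefficients bounded by the block sizes, then applying Hadamard — is where the real work and the exact form of the constant lie. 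A secondary point to verify is that we may assume $v^\star_n$ is genuinely $0$ in the ``degenerate'' branch and strictly positive tail otherwise, and that the ratio bound in (c) is vacuous/automatic when the tail vanishes, which is exactly the first disjunct.
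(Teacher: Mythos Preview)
Your setup has a genuine gap at exactly the point you flag as the ``main obstacle.'' You work over the full $n$-dimensional polytope $\mathcal{Q}$ and hope that a minimizing vertex has only $O(K)$ distinct coordinate values, so that after quotienting by equal-valued blocks you get an $O(K)\times O(K)$ linear system. But nothing you wrote forces this. At any vertex of $\mathcal{Q}$ with $d$ distinct values, $n-d$ of the tight constraints are monotonicity/nonnegativity and the remaining $d$ come from $\sum_i v_i=1$ together with $d-1$ constraints of the form $v\cdot x=\theta$; there is no a priori bound on $d$ in terms of $K$, and your degree-of-freedom sentence (``at most $K+d$ free parameters \dots forcing $d$ bounded by roughly $K$'') does not actually derive one. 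Even granting the quotient, the coefficients in the resulting system are block sizes (integers up to $n$), not $0/1$, so Hadamard yields a bound depending on $n$, not the claimed $(K+2)^{(K+2)/2}$.

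The paper avoids all of this by \emph{not} optimizing over the tail coordinates at all. It freezes the tail to be a single scalar multiple $t\cdot(w_{K+1},\dots,w_n)$ of the \emph{original} tail and optimizes only over the $K$ head coordinates $u_1,\dots,u_K$ and a threshold parameter $r$, maximizing the linear-fractional objective $r/(\sum_i u_i + W_T)$. Via the Charnes--Cooper transformation this becomes a linear program in $K+2$ variables $(t,s_1,\dots,s_K,\delta)$. At an optimal vertex, either $t^\ast=0$ --- which yields the junta alternative $v_{K+1}=\cdots=v_n=0$ --- or $t^\ast>0$, in which case Cramer's rule on the $(K+2)\times(K+2)$ system applies directly: all columns except the $t$-column are $0/1$, the $t$-column has entries bounded by $W_T=\sum_{i>K}w_i$, and Hadamard gives $|\det|\le (K+2)^{(K+2)/2}W_T$, hence $t^\ast\ge (K+2)^{-(K+2)/2}/W_T$. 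Setting $v_i=t^\ast w_i$ for $i>K$ then gives $\sum_{i>K}v_i=t^\ast W_T\ge (K+2)^{-(K+2)/2}$ immediately. The specific constant is thus an artifact of this $(K+2)$-dimensional reduction, and the trick you are missing is precisely that the tail need only be \emph{rescaled}, not re-optimized.
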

Applying Theorem \ref{thm:large-tail} with $K=L$ as defined in (\ref{eq:L}),
we get that there exists an optimal solution $v^\ast$ that satisfies
(a) and (b), and either $v^\ast_{L+1}=\cdots=v^\ast_n=0$
or else
$\sum_{i=1}^L v^{\ast}_i \leq  \new{(L+2)^{(L+2)/2}} \cdot \sum_{i=L+1}^n v^{\ast}_i$.
 {\em Throughout the paper, we fix $v^\ast$ to be \new{such an} optimal solution vector.}

\subsection{Our approach and formal statement of the main result.}
At a high level, our approach is to
consider three mutually exclusive and exhaustive cases for $v^\ast$:
\begin{itemize}
\item {\bf Case 1:}  $v^\ast$ has $v^\ast_{L+1}=\cdots=v^\ast_n=0.$
In this case we say $v^\ast$ is an \emph{$L$-junta}.
(Note that if $L = n$ then we are in this case; hence in Cases 2 and
3 we have that $L < n.$)

\item {\bf Case 2:}  $v^\ast$ is not an $L$-junta and
$c(v^\ast,\eps\gamma/\new{200})>L.$  In this case we say that
$v^\ast$ is of \emph{type $L+1$.}

\item {\bf Case 3:}  $v^\ast$ is not an $L$-junta and
$c(v^\ast,\eps \gamma/\new{200})=K$ for some $K \in \{1,\dots,L\}.$
In this case we say that $v^\ast$ is of \emph{type $K$.}
\end{itemize}
We show (see Section \ref{sec:L-junta})
that in Case~1 it is possible to \new{efficiently compute} an {\em exactly} optimal
solution. In both Cases~2 and~3 (see Sections \ref{sec:type-Lplus1} and \ref{sec:type-i}
respectively) we show that it is possible (using two
different algorithms) to efficiently construct a set of $N
\leq \poly(n,2^{\poly(L)})$ feasible solutions such that one of them (call it $w'$) has $\obj(w') \geq \opt - \eps/2.$
Running all three procedures, we \new{thus obtain} a set of $O(nN)=\poly(n,2^{\poly(L)})$
candidate solutions such that one of them (call it $\widetilde{w}$)
is guaranteed to have $\obj(\widetilde{w}) \geq \opt - \eps/2.$  From this it
is simple to obtain an $\eps$-approximate optimal solution
(see Section \ref{sec:together}).

A precise version of our main result is given below, where by $\bit(\theta)$ we denote the bit-length of
$\theta$:

\begin{theorem} \label{thm:main} [Main Result]
There is a randomized algorithm with the following performance guarantee:
It takes as input a vector of probabilities
$p=(p_1,\dots,p_n)$ satisfying (A1) and (A2), a threshold
value $0 < \theta < 1$,
and a confidence parameter $0<\delta<1.$
It runs in $\poly(n,2^{\poly(1/\eps,1/\gamma)},\bit(\theta))
\cdot \log(1/\delta)$ time, where $\gamma= \min\{p_n,1-p_1\}\geq \eps/4n$.
With probability $1-\delta$ it outputs a feasible solution $\widetilde{w}$
such that $\obj(\widetilde{w}) \geq \opt-\eps$, and
an estimate $\widetilde{\obj}(\widetilde{w})$ of $\obj(\widetilde{w})$ that satisfies
$|\widetilde{\obj}(\widetilde{w}) - \obj(\widetilde{w})| \leq \eps.$
\end{theorem}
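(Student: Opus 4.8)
The plan is to combine the structural theorem with the three case-specific algorithms and cap it off with a standard sampling-based estimation step. We do not know \emph{a priori} which of the three cases the (fixed) optimal solution $v^\ast$ falls into, so the strategy is to run all the relevant procedures, pool their outputs into one polynomial-size list of candidate feasible solutions, empirically estimate the objective value of every candidate, and return the one with the largest estimate.

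Concretely, first I would set $\gamma = \min\{p_n, 1-p_1\}$ (so $\gamma \ge \eps/(4n)$ by (A2)) and $L = L(\eps,\gamma)$ as in~\eqref{eq:L}, and fix $v^\ast$ to be the optimal solution produced by applying Theorem~\ref{thm:large-tail} with $K = L$; by construction $v^\ast$ lies in exactly one of Case~1, Case~2, Case~3. Then I would invoke: the algorithm of Section~\ref{sec:L-junta} for the $L$-junta case; the algorithm of Section~\ref{sec:type-Lplus1} for type $L+1$; and, for each $K \in \{1,\dots,L\}$, the algorithm of Section~\ref{sec:type-i} for type $K$ (if $L=n$ only the first is needed, but running the others is harmless). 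Each of these runs in $\poly(n, 2^{\poly(L)}, \bit(\theta))$ time and, after standard confidence amplification with parameter $\delta/6$, outputs with probability $\ge 1-\delta/6$ a list of at most $\poly(n, 2^{\poly(L)})$ feasible solutions. Let $\mathcal C$ be the union of all these lists, so $M := |\mathcal C| = \poly(n, 2^{\poly(L)})$. A union bound over the (at most $L+2$) invocations shows that with probability $\ge 1 - \delta/2$, whichever case actually holds, $\mathcal C$ contains a feasible $w'$ with $\obj(w') \ge \opt - \eps/2$ (indeed an exactly optimal $w'$ if Case~1 holds).

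Next comes the estimation step. For each $w \in \mathcal C$ I would draw $m = O\big((1/\eps^2)\log(M/\delta)\big)$ independent samples $X \sim \calD_p$ and set $\widetilde{\obj}(w)$ to be the fraction of them satisfying $w \cdot X \ge \theta$; since each candidate has bit-complexity $\poly(n)$ (as the constructions in the respective sections guarantee, consistent with the $O(n^2\log n)$-bit bound noted in the introduction), one such evaluation costs $\poly(n,\bit(\theta))$ time. By a Hoeffding bound and a union bound over the $M$ candidates, with probability $\ge 1-\delta/2$ we have $|\widetilde{\obj}(w) - \obj(w)| \le \eps/4$ simultaneously for all $w \in \mathcal C$. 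Output $\widetilde w := \arg\max_{w \in \mathcal C} \widetilde{\obj}(w)$ together with $\widetilde{\obj}(\widetilde w)$. Conditioning on both good events (total probability $\ge 1-\delta$): $\widetilde{\obj}(\widetilde w) \ge \widetilde{\obj}(w') \ge \obj(w') - \eps/4 \ge \opt - 3\eps/4$, hence $\obj(\widetilde w) \ge \widetilde{\obj}(\widetilde w) - \eps/4 \ge \opt - \eps$, and $|\widetilde{\obj}(\widetilde w) - \obj(\widetilde w)| \le \eps/4 \le \eps$, as required. The total running time is dominated by the case algorithms and the $Mm$ sample evaluations, i.e.\ $\poly(n, 2^{\poly(L)}, \bit(\theta)) \cdot \log(1/\delta)$; since $L = \poly(1/\eps, 1/\gamma)$ this is $\poly(n, 2^{\poly(1/\eps, 1/\gamma)}, \bit(\theta)) \cdot \log(1/\delta)$.

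The only subtlety at this (assembly) level is bookkeeping the error budget — splitting $\eps$ into the $\eps/2$ slack of the candidate guarantee plus two $\eps/4$ terms for estimation error — and checking that candidates have polynomial bit-complexity so that evaluating $w\cdot X \ge \theta$ on a sample is efficient; neither is hard. The genuine obstacles lie in the results being assumed: establishing Theorem~\ref{thm:large-tail} (via linear-fractional programming) and, above all, designing the Case~2 and Case~3 algorithms of Sections~\ref{sec:type-Lplus1} and~\ref{sec:type-i} that convert structural information about the critical index of $v^\ast$ into a small list provably containing a near-optimal feasible vector.
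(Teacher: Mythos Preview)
Your proposal is correct and follows essentially the same assembly as the paper: run all three case-specific procedures (one {\tt Find-Optimal-Junta} call, one {\tt Find-Near-Opt-Large-CI} call, and $L$ calls to {\tt Find-Near-Opt-Small-CI}), pool the outputs, empirically estimate each candidate's objective via $\Theta((1/\eps^2)\log(M/\delta))$ samples, and return the empirical argmax. The error-budget split ($\eps/2$ for the candidate guarantee plus two $\eps/4$'s for estimation) matches the paper exactly.

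One small bookkeeping slip: with confidence parameter $\delta/6$ per invocation, a union bound over $L+2$ invocations gives failure probability $(L+2)\delta/6$, which is not $\le \delta/2$ once $L\ge 2$. The paper sidesteps this by passing $\delta/(2L)$ to each {\tt Find-Near-Opt-Small-CI} call (the only randomized one). Alternatively, and more to the point, no union bound is actually needed here: since the three cases are mutually exclusive, only the single invocation corresponding to the true case of $v^\ast$ must succeed, so confidence $\delta/2$ for that one call already suffices. Either fix is immediate and does not affect the running time.
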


\section{There exist well-structured optimal
solutions:  Proof of Theorem \ref{thm:large-tail}}
\label{sec:LP}

Fix $K \in [n]$, $0<\theta<1$, and $w = (w_1, \ldots, w_n)$ with $w_1 \ge w_2 \ge \ldots w_n \ge 0$ and $\sum_{i=1}^n w_i=1$.
If $w_{i} = 0$ for all $i \in [K+1, n]$ it is clear that the weight-vector $w$ satisfies conditions (a)-(c). So, we will henceforth assume that
$W_T \eqdef \sum_{i=K+1}^n w_i >0$.

We start by defining the following {\em linear--fractional program}
$(\mathcal{LFP})$ over variables $u_1, \ldots, u_K$ and $r$.
$(\mathcal{LFP})$ is defined by the following set of linear constraints:
\begin{enumerate}
\item[(i)] For all $x \in S$, it holds $\sum_{i=1}^K u_i x_i + \sum_{i=K+1}^n w_i x_i \ge r.$

\item[(ii)] For all $i \in [K-1]$, $u_i \ge u_{i+1}$; and $u_K \ge w_{K+1}.$
\end{enumerate}
The (fractional) objective function to be maximized is
\ifnum\shortsodaversion=0
$$f_0(u_1, \ldots, u_K, r) =  \frac{r}{\sum_{i=1}^K u_i + W_T}.$$
\else
$f_0(u_1, \ldots, u_K, r) =  \frac{r}{\sum_{i=1}^K u_i + W_T}.$
\fi
Observe that $(u_1, \ldots, u_K, r) = (w_1, \ldots, w_K, \theta)$ is a feasible solution, hence the maximum value of $(\mathcal{LFP})$ is at least $\theta$.

\ifnum\shortsodaversion=1
In the full version, we show how the Charnes--Cooper transformation~\cite{CC:62} can be used to
turn $(\mathcal{LFP})$ into an essentially equivalent linear program $(\mathcal{LP})$.  Apart from one degenerate case (which corresponds to the case when $v_{K+1}=\cdots=v_n=0$),
there is a bijection between the solutions of  $(\mathcal{LFP})$ and $(\mathcal{LP})$.  It is useful to have this
equivalent linear program because (as is the case for all linear programs with finite-valued optima) the linear
program achieves its optimal value at some vertex of the polytope.  Since the vertex of
the polytope is by definition the point at which a set of constraints becomes tight, we can analyze the
corresponding system of linear equations and, using bounds on the coefficients of the equations in the system,
prove Theorem \ref{thm:large-tail}.  We give details in the full version.

\else

We now proceed to turn $(\mathcal{LFP})$ into an essentially equivalent  linear program $(\mathcal{LP})$, using the standard Charnes--Cooper transformation~\cite{CC:62}.
The linear program $(\mathcal{LP})$ has variables $t$, $s_1, \ldots, s_K$ and $\delta$ and is defined by the following set of linear constraints:
\begin{enumerate}
\item[(i)] For all $x \in S$, it holds $\sum_{i=1}^K s_i x_i + \left( \sum_{i=K+1}^n w_i x_i \right) \cdot t \ge \delta.$

\item[(ii)] For all $i \in [K-1]$, $s_i \ge s_{i+1}$; and $s_K \ge w_{K+1} \cdot t.$

\item[(iii)] $\sum_{i=1}^K s_i + W_T \cdot t = 1$; and

\item[(iv)] $t \ge 0$.

\end{enumerate}
The linear objective function to be maximized is $\delta.$

The following standard claim (see e.g.~\cite{boyd2}) quantifies the relation between the two aforementioned optimization problems:
\begin{claim} \label{claim:equiv}
The optimization problems $(\mathcal{LFP})$ and $(\mathcal{LP})$ are equivalent.
\end{claim}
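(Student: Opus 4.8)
The plan is to write the Charnes--Cooper substitution down explicitly and to check that it is an \emph{objective-preserving bijection} between the feasible region of $(\mathcal{LFP})$ and the ``$t>0$'' part of the feasible region of $(\mathcal{LP})$; the leftover ``$t=0$'' slice of $(\mathcal{LP})$ is exactly the degenerate case and will be handled separately. The one thing to verify before the substitution even makes sense is that the denominator of $f_0$ is bounded away from $0$ on the feasible region of $(\mathcal{LFP})$: constraint (ii) of $(\mathcal{LFP})$ forces $u_1\ge\cdots\ge u_K\ge w_{K+1}\ge 0$, so $D(u)\eqdef\sum_{i=1}^K u_i+W_T\ge W_T>0$, using that we are in the case $W_T>0$.

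For the forward direction, given a feasible $(u_1,\dots,u_K,r)$ of $(\mathcal{LFP})$ I would put $t=1/D(u)>0$, $s_i=t\,u_i$ for $i\in[K]$, and $\delta=t\,r$. Multiplying constraint (i) (resp.\ (ii)) of $(\mathcal{LFP})$ through by $t>0$ yields constraint (i) (resp.\ (ii)) of $(\mathcal{LP})$; constraint (iii) holds since $\sum_{i=1}^K s_i+W_T\,t=t\,D(u)=1$; and (iv) is immediate. The objectives agree: $\delta=r/D(u)=f_0(u,r)$. Hence $\val(\mathcal{LP})\ge\val(\mathcal{LFP})$. Conversely, given a feasible $(t,s_1,\dots,s_K,\delta)$ of $(\mathcal{LP})$ \emph{with $t>0$}, set $u_i=s_i/t$ and $r=\delta/t$; dividing (i) and (ii) through by $t$ recovers the constraints of $(\mathcal{LFP})$, and by (iii) one gets $D(u)=(1/t)\big(\sum_i s_i+W_T\,t\big)=1/t$, so $f_0(u,r)=(\delta/t)/(1/t)=\delta$. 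This is the inverse of the forward map, so we obtain an objective-preserving bijection between the feasible region of $(\mathcal{LFP})$ and $\{t>0\}$ inside the feasible region of $(\mathcal{LP})$.

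The remaining point, which I expect to be the only real obstacle since for $t>0$ the argument is textbook, is the $t=0$ slice of $(\mathcal{LP})$. There constraints (ii) and (iii) give $s_1\ge\cdots\ge s_K\ge 0$ with $\sum_i s_i=1$, and constraint (i) reads $\sum_{i=1}^K s_i x_i\ge\delta$ for every $x\in S$. I would take the reference point $(\bar u,\bar r)=(w_1,\dots,w_K,\theta)$, which is feasible for $(\mathcal{LFP})$ with $D(\bar u)=\sum_{i=1}^n w_i=1$, and for $\lambda\ge 0$ consider $(\bar u+\lambda s,\ \bar r+\lambda\delta)$. Using $s_i\ge s_{i+1}$, $s_K\ge 0$ and $\sum_i s_i x_i\ge\delta$, one checks directly that this point stays feasible for $(\mathcal{LFP})$, and its objective equals $(\theta+\lambda\delta)/(1+\lambda)$, which tends to $\delta$ as $\lambda\to\infty$. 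Hence $\val(\mathcal{LFP})\ge\delta$, so the $t=0$ points of $(\mathcal{LP})$ cannot exceed $\val(\mathcal{LFP})$ either; together with the two directions above this gives $\val(\mathcal{LP})=\val(\mathcal{LFP})$.

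Finally, $(\mathcal{LP})$ is a linear program over a closed and bounded polytope (constraint (iii) together with $s_i\ge s_{i+1}\ge\cdots\ge s_K\ge w_{K+1} t\ge 0$ and $t\ge 0$ forces $t\le 1/W_T$ and $0\le s_i\le 1$), so it attains its optimum at a vertex. If this vertex has $t>0$, the inverse map above produces an optimal solution of $(\mathcal{LFP})$; if it has $t=0$, we are in the degenerate case, which is precisely the alternative $v_{K+1}=\cdots=v_n=0$ of part (c) of \tref{large-tail}. In both cases an optimal \emph{vertex} of $(\mathcal{LP})$ is what we actually compute and feed into the proof of \tref{large-tail}, with the quantitative bound in part (c) then following from bounding the coefficients of the linear system tight at that vertex.
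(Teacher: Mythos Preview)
Your proof is correct and follows essentially the same approach as the paper: both use the Charnes--Cooper substitution to set up an objective-preserving bijection between the feasible region of $(\mathcal{LFP})$ and the $t>0$ part of $(\mathcal{LP})$, and both handle the $t=0$ slice by taking a ray $(\bar u+\lambda s,\ \bar r+\lambda\delta)$ from a feasible point of $(\mathcal{LFP})$ and letting $\lambda\to\infty$. Your version is slightly more explicit (you verify $D(u)\ge W_T>0$, pick the concrete reference point $(w_1,\dots,w_K,\theta)$, and note boundedness of the LP polytope), but the argument is the same.
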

\begin{proof}
Let $(u_1^{\ast}, \ldots, u_K^{\ast}, r^{\ast})$ be a feasible solution to $(\mathcal{LFP})$. It is straightforward to verify
that the vector $(t^{\ast}, s_1^{\ast}, \ldots, s_K^{\ast}, \delta^{\ast})$
with $$t^{\ast} = \frac{1}{\sum_{i=1}^K u_i^{\ast} + W_T},$$
$s_i^{\ast} = t^{\ast} u_i^{\ast}$, for $i \in [K]$,  and $\delta^{\ast} = t^{\ast} r^{\ast}$ is a feasible solution to $(\mathcal{LP})$ with the same objective function value.
It follows that the linear program $(\mathcal{LP})$ is also feasible with maximum value at least $\theta$.
Moreover, the maximum value of  $(\mathcal{LP})$ is greater than or equal to the maximum value of  $(\mathcal{LFP})$.

Conversely, if $(t^{\ast}, s_1^{\ast}, \ldots, s_K^{\ast}, \delta^{\ast})$ is a feasible solution to $(\mathcal{LP})$ with $t^{\ast} \neq 0$, then $(u_1^{\ast}, \ldots, u_K^{\ast}, r^{\ast})$
with $u_i^{\ast} = s_i^{\ast} / t^{\ast}$ and $r^{\ast} = \delta^{\ast}/t^{\ast}$ is feasible for $(\mathcal{LFP})$, with the same objective function value
$$\delta^{\ast} =\frac{r^{\ast}}{\sum_{i=1}^K u^{\ast}_i + W_T} .$$
If $(t^{\ast}, s_1^{\ast}, \ldots, s_K^{\ast}, \delta^{\ast})$ is a feasible solution to $(\mathcal{LP})$ with $t^{\ast} = 0$ and
$(u_1^{\ast}, \ldots, u_K^{\ast}, r^{\ast})$ is feasible to $(\mathcal{LFP})$ then
$$(\widetilde{u_1}, \ldots, \widetilde{u_K}, \widetilde{r}) = (u_1^{\ast}, \ldots, u_K^{\ast}, r^{\ast}) + \lambda (s_1^{\ast}, \ldots, s_K^{\ast}, \delta^{\ast})$$
is feasible to $(\mathcal{LFP})$ for all $\lambda \ge 0.$ Moreover, note that
$$\lim_{\lambda \rightarrow \infty} f_0 (\widetilde{u_1}, \ldots, \widetilde{u_K}, \widetilde{r})  = \frac{\delta^{\ast}}{\sum_{i=1}^K s_i^{\ast}} = \delta^{\ast}.$$
So, we can find feasible solutions to $(\mathcal{LFP})$ with objective values arbitrarily close to the objective value of $(t^{\ast}=0, s_1^{\ast}, \ldots, s_K^{\ast}, \delta^{\ast})$.
Therefore, the maximum value of  $(\mathcal{LFP})$ is greater than or equal to the maximum value of  $(\mathcal{LP})$.

Combining the above completes the proof of the claim.
\end{proof}

We now proceed to analyze the linear program  $(\mathcal{LP})$. 
We will show that there exists a feasible solution to  $(\mathcal{LP})$
with properties that will be useful for us. \new{Note that $S$ is by definition non-empty. In particular, the all $1$'s vector belongs to $S$. Hence, 
because of constraint (iii), the optimal value $\delta^{\ast}$ of $(\mathcal{LP})$ is at most $1$ (i.e., $(\mathcal{LP})$ is bounded).} 
Consider a vertex $\mathbf{v}^{\ast}  = (t^{\ast}, s_1^{\ast}, \ldots, s_K^{\ast}, \delta^{\ast})$ of the feasible set of $(\mathcal{LP})$
maximizing the objective function $\delta$. Claim~\ref{claim:equiv} and the observation that the optimal value of $(\mathcal{LFP})$ is at least $\theta$ imply that 
$\delta^{\ast} \ge \theta$. We consider the following two cases:

\medskip

\noindent [{\bf Case I:} $t^{\ast}=0$.] In this case, we select the desired vector $v = (v_1, \ldots, v_n)$ as follows:
We set $v_i = s_i^{\ast}$ for all $i \in [K]$ and $v_i = 0$ for $i \in [K+1, n]$. Observe that condition (c) of the theorem statement is immediately satisfied.
For condition (a), we note that
constraint (ii) of  $(\mathcal{LP})$ implies that $v_i \ge v_{i+1}$ for all $i \in [n-1]$,
while constraint (iii) implies that $\sum_{i=1}^n v_i = \sum_{i=1}^K s_i^{\ast}=1$. Finally, for Condition (b) note that by constraint (i) it follows
that $\sum_{i=1}^K v_i x_i \ge \delta^{\ast} \ge \theta$. This completes the analysis of this case.

\medskip

\noindent [{\bf Case II:} $t^{\ast} \neq 0$.] In this case, we show that $t^{\ast}$ cannot be very close to $0$.
It follows from basic LP theory that the vertex $\mathbf{v}^{\ast} = (t^{\ast}, s_1^{\ast}, \ldots, s_K^{\ast}, \delta^{\ast})$
is the unique solution of a linear system  $A' \cdot \mathbf{v}^{\ast}  = b'$ obtained from a subset of tight constraints in $(\mathcal{LP})$. We record the following fact:
\begin{fact} Consider the linear program $(\mathcal{LP})$:\\
(a) All the entries of the constraint matrix $A$ are bounded from above by $\max \{1, W_T \}$. \\
(b) The constant vector $b$ has entries in $\{0, 1\}$.\\
(c) Any coefficient not associated with the variable $t$ is in $\{0,1\}$.
\end{fact}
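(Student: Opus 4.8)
The plan is to write the linear program $(\mathcal{LP})$ out explicitly in canonical matrix form $A\mathbf{v} \ge b$ (with the single row coming from (iii) being an equality), where $\mathbf{v} = (t, s_1, \ldots, s_K, \delta)$, and then simply to inspect the entries that appear; the Fact is then an immediate consequence of the normalization $\sum_{i=1}^n w_i = 1$ together with $x \in \{0,1\}^n$. Concretely, moving every variable to the left-hand side, constraint (i) becomes, for each $x \in S$, the inequality $\sum_{i=1}^K x_i s_i + (\sum_{i=K+1}^n w_i x_i) t - \delta \ge 0$; the two parts of constraint (ii) become $s_i - s_{i+1} \ge 0$ for $i \in [K-1]$ and $s_K - w_{K+1} t \ge 0$; constraint (iii) is the equality $\sum_{i=1}^K s_i + W_T t = 1$; and (iv) is $t \ge 0$. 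This fixes $A$ and $b$.

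For part (b) I would read off the right-hand sides directly: every row coming from (i), (ii), (iv) has right-hand side $0$, and the single row coming from (iii) has right-hand side $1$, so every entry of $b$ lies in $\{0,1\}$. For part (c), the coefficients that do not multiply $t$ are: the $x_i \in \{0,1\}$ appearing in (i); the coefficient $-1$ of $\delta$ in (i); the coefficients $\pm 1$ of $s_i$ and $s_{i+1}$ in the ordering constraints; the coefficient $1$ of each $s_i$ in (iii); and the coefficient $1$ of $s_K$ in the last part of (ii). Thus every such coefficient lies in $\{0, \pm 1\}$, so in particular has absolute value at most $1 \le \max\{1, W_T\}$; and under the convention (equivalent for our purposes) of recording only magnitudes --- equivalently, of writing the ordering constraints with the larger variable first, as $s_{i+1} - s_i \le 0$ --- the non-$t$ coefficients literally lie in $\{0,1\}$.

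Finally, for part (a) it remains to bound the coefficients that do multiply $t$: these are exactly $\sum_{i=K+1}^n w_i x_i$ (one per $x \in S$, from (i)), $-w_{K+1}$ (from the last part of (ii)), $W_T$ (from (iii)), and $1$ (from (iv)). Since $w_i \ge 0$ and $x_i \in \{0,1\}$ we have $0 \le \sum_{i=K+1}^n w_i x_i \le \sum_{i=K+1}^n w_i = W_T$, and since $\sum_{i=1}^n w_i = 1$ we have $0 \le w_{K+1} \le w_1 \le 1$ and $W_T \le 1$; combining these, every entry of $A$ has absolute value at most $\max\{1, W_T\}$ (which here in fact equals $1$), establishing (a). There is essentially no real obstacle in this argument: the only inputs are $\sum_i w_i = 1$ (which gives $w_i \le 1$ and $W_T \le 1$) and $x \in \{0,1\}^n$ (which gives $\sum_{i > K} w_i x_i \le W_T$); the only point worth stating carefully is the $\pm 1$ bookkeeping convention mentioned above, which is all that the subsequent Cramer's-rule / determinant estimates --- used to show that $t^{\ast}$ cannot be too small in Case II --- will need.
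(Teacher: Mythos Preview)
Your proposal is correct and matches the paper's treatment, which states this Fact without proof as immediate from inspection of the constraints of $(\mathcal{LP})$. Your careful note about the $\pm 1$ signs (the $-1$ on $\delta$ in (i) and on $s_{i+1}$ in (ii)) is a valid observation---the paper's phrasing is slightly loose here---but as you correctly point out this is harmless for the only downstream use, namely that $\det(A'_t)$ is a nonzero integer and hence $|\det(A'_t)| \ge 1$ in the Cramer's-rule step.
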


As mentioned above $\mathbf{v}^{\ast}$ is the unique solution of a \new{$(K+2) \times (K+2)$} 
linear system $A' \cdot \mathbf{v}^{\ast}  = b'$, where $(A', b')$ is obtained from $(A, b)$ by
selecting a subset of the rows. By Cramer's rule, we have that $t^{\ast} = \det(A'_t)/\det(A')$ where $A'_t$ is obtained by replacing the column in $A'$ corresponding
to $t^{\ast}$ with the vector $b'$. Since $A'_t$ has only $0, 1$ entries, if $\det(A'_t) \not =0$, then $\det(A'_t) \ge 1$. Since we assumed that $t^{\ast} \neq 0$, we will indeed
have that $\det(A'_t) \ge 1$. Now observe that all the columns of $A'$ except the one corresponding to $t^{\ast}$ have entries bounded from above by $1$.
The column corresponding to $t$ has all its entries bounded from above by $W_T$. By Hadamard's inequality we obtain
$$|\det(A')| \le \prod_{i=1}^{\new{K+2}} \Vert A'_i \Vert_2 \le \new{(K+2)^{(K+2)/2}} \cdot W_T.$$
By combining the above we get $$t^{\ast} \ge \new{(K+2)^{-(K+2)/2}} \cdot (1/W_T).$$

We are now ready to define the vector $v = (v_1, \ldots, v_n)$. We select $v_i = s_i^{\ast}$ for $i \in [K]$ and $v_i = t^{\ast} w_i$ for $i \in [K+1, n]$.
It is easy to verify that $v$ satisfies conditions (a)-(c) of the theorem. Indeed, we use the fact that
$\mathbf{v}^{\ast} = (t^{\ast}, s_1^{\ast}, \ldots, s_K^{\ast}, \delta^{\ast})$ is feasible for $(\mathcal{LP})$.

Constraint (iii) of $(\mathcal{LP})$ yields
$\sum_{i=1}^n v_i = \sum_{i=1}^K s^{\ast}_i + t^{\ast}  \sum_{i=K+1}^n w_i   = \sum_{i=1}^K s^{\ast}_i + t^{\ast}  W_T   =1$ as desired.
Constraint (ii) similarly implies that $v_1 \ge v_2 \ge \ldots v_n \ge 0$, which establishes condition (a).

We now proceed to establish condition (b). Let $x \in S$.
We have that
$$
\sum_{i=1}^n v_i x_i - \theta  \ge\sum_{i=1}^n v_i x_i - \delta^{\ast} =   \sum_{i=1}^K s^{\ast}_i x_i + t^{\ast} \left( \sum_{i=K+1}^n w_i x_i \right) -\delta^{\ast} \ge 0
$$
where the last inequality uses constraint (i) of $(\mathcal{LP})$.

For condition (c), since $t^{\ast} \ge \new{(K+2)^{-(K+2)/2}}  \cdot (1/W_T)$, constraint (iii) of $(\mathcal{LP})$ gives
$$\sum_{i=1}^K v_i  = \sum_{i=1}^K s^{\ast}_i = 1 - t^{\ast} W_T \le 1 - \new{(K+2)^{-(K+2)/2}}.$$
Using the fact that $\sum_{i=K+1}^n v_i = t^{\ast} W_T \ge \new{(K+2)^{-(K+2)/2}} $, we conclude that 
$$\sum_{i=1}^k v_i \le \new{(K+2)^{(K+2)/2}}  \cdot \sum_{i=K+1}^n v_i$$
This completes the proof of Theorem \ref{thm:large-tail}.  \qed

\fi

\section{Case 1:  $v^\ast$ is an $L$-junta} \label{sec:L-junta}
\vspace{-0.1cm}
In this section we prove the following theorem.

\begin{theorem} \label{thm:L-junta}
There is a (deterministic) algorithm {\tt Find-Optimal-Junta} with the
following performance guarantee:
The algorithm takes as input a vector of probabilities
$p=(p_1,\dots,p_L)$ satisfying
(A1) and (A2), a threshold value $0 < \tau < 1$, and
a parameter $0 \leq W \leq 1.$
It runs in \newad{$\poly(n,2^{\poly(L)},\bit(\tau))$} time
and outputs a head vector
$w' \in \R^{L}_{\geq 0}$ such that $\sum_{i=1}^L w'_i \leq W.$
Moreover, the vector $w'$ maximizes $\Pr[w \cdot X^{(H)} \geq \tau]$
over all $w \in \R^{L}_{\geq 0}$ that have $\sum_{i=1}^L w_i  \leq W.$
\end{theorem}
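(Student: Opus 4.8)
The plan is to exploit the fact that, although the feasible set $F \eqdef \{\, w \in \R^{L}_{\geq 0} : \sum_{i=1}^L w_i \leq W \,\}$ is a continuum, the objective $g(w) \eqdef \Pr[w \cdot X^{(H)} \geq \tau]$ (where $X^{(H)} \sim \bigotimes_{i=1}^L \mu_{p_i}$) depends on $w$ only through the finite \emph{pattern} $\mathcal{A}(w) \eqdef \{\, x \in \{0,1\}^L : w\cdot x \geq \tau \,\}$, since $g(w) = \sum_{x \in \mathcal{A}(w)} \prod_{i:x_i=1} p_i \prod_{i:x_i=0}(1-p_i)$. For any $\mathcal{A} \subseteq \{0,1\}^L$ I would introduce the closed polytope $Q_{\mathcal{A}} \eqdef \{\, w \in \R^L : w \geq 0,\ \sum_{i=1}^L w_i \leq W,\ \text{and}\ w\cdot x \geq \tau\ \text{for all}\ x \in \mathcal{A} \,\}$; note that only the ``accepting'' (hence closed) constraints appear, so no strict inequalities are needed. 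If $w \in Q_{\mathcal{A}}$ then $\mathcal{A}(w) \supseteq \mathcal{A}$ and hence $g(w) \geq \Pr[X^{(H)} \in \mathcal{A}]$, while every $w \in F$ lies in $Q_{\mathcal{A}(w)}$ with $g(w) = \Pr[X^{(H)} \in \mathcal{A}(w)]$. Since $g$ takes only finitely many values on $F$ (one per realizable pattern), it attains a maximum; fix an optimal $w^\ast \in F$ and set $\mathcal{A}^\ast \eqdef \mathcal{A}(w^\ast)$. Then $g^\ast \eqdef \max_{w\in F} g(w) = \Pr[X^{(H)} \in \mathcal{A}^\ast]$, and combining the two inclusions above, \emph{every} point of $Q_{\mathcal{A}^\ast}$ attains $g = g^\ast$. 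Because $Q_{\mathcal{A}^\ast}$ is a nonempty bounded polytope it has a vertex $\hat w$, which is therefore optimal, and as a vertex in $\R^L$ it is the unique solution of a system of $L$ linearly independent tight constraints, each of the form $v\cdot w = c$ with $v \in \{0,1\}^L$ (a standard basis vector, the all-ones vector, or some $x \in \mathcal{A}^\ast$) and $c \in \{0, W, \tau\}$.

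Given this structural fact, the algorithm {\tt Find-Optimal-Junta} would simply enumerate all at most $(3\cdot 2^L)^L = 2^{O(L^2)}$ linear systems $\{\, v^{(j)} \cdot w = c^{(j)} \,\}_{j=1}^{L}$ with $v^{(j)} \in \{0,1\}^L$ and $c^{(j)} \in \{0, W, \tau\}$; for each nonsingular such system it computes the unique solution $\hat w$, discards $\hat w$ unless $\hat w \geq 0$ and $\sum_{i=1}^L \hat w_i \leq W$, and otherwise evaluates $g(\hat w)$ by brute force over the $2^L$ points of $\{0,1\}^L$; it outputs the surviving $\hat w$ of largest value. By (A2) the $p_i$ are $\eps/(4n)$-granular, so all quantities are rationals of bit-length $\poly(L, \log(n/\eps), \bit(\tau), \bit(W))$; each system is solved (Gaussian elimination) and its candidate evaluated in time $\poly(2^L, L, \bit(\tau), \bit(W))$, for a total of $2^{O(L^2)} \cdot \poly(2^L, L, \bit(\tau), \bit(W))$, which matches the stated $\poly(n, 2^{\poly(L)}, \bit(\tau))$ bound (recalling $L \leq n$, and that $W \in [0,1]$ is a fixed rational of bounded bit-length in our applications).

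For correctness, every $\hat w$ the algorithm outputs is by construction a head vector in $\R^L_{\geq 0}$ with $\sum_i \hat w_i \leq W$, and by the first paragraph the optimal vertex $\hat w$ of $Q_{\mathcal{A}^\ast}$ appears among the enumerated candidates with $g(\hat w) = g^\ast$; hence the best candidate has value $\geq g^\ast$, i.e.\ exactly $g^\ast$, so it maximizes $\Pr[w \cdot X^{(H)} \geq \tau]$ over all feasible $w$. I expect the main obstacle to be precisely the reduction from the continuum of feasible solutions to only $2^{\poly(L)}$ explicitly computable candidates: the number of \emph{distinct} patterns $\mathcal{A}(w)$ is a priori only bounded by a doubly-exponential (Dedekind-type) quantity, so the savings must come from passing to a vertex of the one-sided polytope $Q_{\mathcal{A}^\ast}$ and observing that such a vertex is cut out by a $0/1$-coefficient linear system with right-hand sides in $\{0, W, \tau\}$. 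The fact that $Q_{\mathcal{A}}$ uses only closed ``$\geq \tau$'' constraints (never strict ``$< \tau$'' ones) is what keeps the whole argument within exact linear algebra and also guarantees that the supremum is attained, which is what licenses claiming \emph{exact} optimality of the output.
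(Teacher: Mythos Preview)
Your proposal is correct and shares the paper's core observation: the objective $g(w)$ depends on $w$ only through the LTF pattern $\mathcal{A}(w)=\{x:w\cdot x\geq\tau\}$, so it suffices to produce $2^{\poly(L)}$ candidate vectors, evaluate each by brute force over $\{0,1\}^L$, and return the best. The implementations differ. The paper enumerates the $2^{\Theta(L^2)}$ distinct halfspace patterns on $L$ variables directly (invoking the classical Chow/Muroga bounds to enumerate them and to guarantee polynomial-size representations), and for each pattern $S$ solves the LP $\{w\cdot x\geq\tau\ \forall x\in S;\ w\geq 0;\ \sum w_i\leq W\}$ to get a feasible $w^{(S)}$. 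You instead fix the optimal pattern $\mathcal{A}^\ast$, pass to a \emph{vertex} of the one-sided polytope $Q_{\mathcal{A}^\ast}$, and observe that any such vertex is cut out by $L$ tight constraints with coefficient vectors in $\{0,1\}^L$ and right-hand sides in $\{0,W,\tau\}$; enumerating all $(3\cdot 2^L)^L=2^{O(L^2)}$ such $L\times L$ systems and solving each by Gaussian elimination yields the candidates. Your route is a bit more self-contained in that it never invokes the $2^{\Theta(L^2)}$ LTF-count or the integer-weight bound; the paper's route makes the dependence on the LTF pattern more explicit and, incidentally, shows that your ``Dedekind-type'' worry is unfounded---since every $\mathcal{A}(w)$ is a threshold set, there are only $2^{\Theta(L^2)}$ of them, not doubly exponentially many. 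Either way one arrives at $2^{O(L^2)}$ candidates and the same running-time bound.
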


Note that Theorem \ref{thm:L-junta} is somewhat more
general than we need \new{in  order} to establish the desired result in Case 1; this is
because {\tt Find-Optimal-Junta} will also be used as a component of
the algorithm for Case 2.
As a direct corollary of Theorem \ref{thm:L-junta} we get that
{\tt Find-Optimal-Junta} finds an optimal solution in Case 1:

\begin{corollary} \label{cor:L-junta}
If $v^\ast$ is an $L$-junta, then {\tt Find-Optimal-Junta}$((p_1,\dots,p_L),
\theta, 1)$ outputs a vector $w'=(w'_1,\dots,w'_L)$ such that
$(w', \new{\mathbf{0}_{n-L}}) \in \R^n_{\geq 0}$ is an optimal solution, i.e.,
$\obj((w',\new{\mathbf{0}_{n-L}}))=\opt.$
\end{corollary}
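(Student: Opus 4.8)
The plan is to exploit the fact that $L=L(\eps,\gamma)$ is a \emph{constant} (it depends only on $\eps$ and $\gamma$), so the ``head problem'' lives in $\R^L$ and its objective is piecewise constant with respect to an arrangement of only $2^L$ hyperplanes. Write $\mu(x):=\prod_{i=1}^L p_i^{x_i}(1-p_i)^{1-x_i}$ for $x\in\{0,1\}^L$; by (A2) each $p_i\in(0,1)$, so $\mu(x)>0$ for every $x$, and for any $w\in\R^L_{\ge 0}$ we have $\Pr[w\cdot X^{(H)}\ge\tau]=\sum_{x:\,w\cdot x\ge\tau}\mu(x)$, which depends on $w$ only through the set $S_w:=\{x\in\{0,1\}^L:w\cdot x\ge\tau\}$. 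As $w$ ranges over the feasible simplex $P:=\{w\in\R^L: w\ge 0,\ \sum_i w_i\le W\}$, the set $S_w$ is determined by the signs of the affine forms $w\mapsto\langle w,x\rangle-\tau$ at the $2^L$ points $x$. The algorithm {\tt Find-Optimal-Junta} therefore does the following. Let $\mathcal{E}$ be the list of $2^L+L$ linear equations in $w=(w_1,\dots,w_L)$ consisting of $w_i=0$ for $i\in[L]$, of $\sum_{i=1}^L w_i=W$, and of $\langle w,x\rangle=\tau$ for each nonzero $x\in\{0,1\}^L$. For every size-$L$ subset $\mathcal{E}'\subseteq\mathcal{E}$ whose equations are linearly independent, solve for the unique common solution $w^{\mathcal{E}'}$; discard it if it is infeasible (some coordinate negative, or $\sum_i w^{\mathcal{E}'}_i>W$), and otherwise record the value $\sum_{x:\,\langle w^{\mathcal{E}'},x\rangle\ge\tau}\mu(x)$. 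Output a recorded $w^{\mathcal{E}'}$ of maximum recorded value; the recorded list is never empty, since $\mathcal{E}'=\{w_1=0,\dots,w_L=0\}$ yields the feasible point $0$. This procedure is deterministic.

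For the running time, there are $\binom{2^L+L}{L}=2^{O(L^2)}=2^{\poly(L)}$ subsets to try. The coefficient matrix of any such subsystem has entries in $\{0,1\}$ and right-hand side in $\{0,W,\tau\}$, so by Cramer's rule $w^{\mathcal{E}'}$ is rational with bit-length polynomial in $L$, $\bit(\tau)$ and $\bit(W)$, and independence-testing plus solving cost $\poly(L,\bit(\tau),\bit(W))$. Each objective evaluation inspects the $2^L$ points $x$, compares $\langle w^{\mathcal{E}'},x\rangle$ with $\tau$, and sums the relevant $\mu(x)$'s, whose bit-lengths are $O(L\log(n/\eps))$ by (A2); this is $2^{O(L)}\poly(L,\bit(\tau),\bit(W),\log n)$. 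Multiplying out, the total running time is $\poly(n,2^{\poly(L)},\bit(\tau))$ (with $W$ entering only through $\bit(W)$, which is a constant in our applications where $W=1$), as claimed.

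The one genuinely substantive step is correctness: since the objective is discontinuous on $P$, I must argue that restricting attention to the ``vertices of the refined polytope'' enumerated above loses nothing, and this is where I expect all the care to be needed. I will prove: for \emph{every} feasible $w\in P$ there is an enumerated feasible $w^{\mathcal{E}'}$ with $\Pr[w^{\mathcal{E}'}\cdot X^{(H)}\ge\tau]\ge\Pr[w\cdot X^{(H)}\ge\tau]$. Granting this, the algorithm's output, being a feasible candidate of maximum value, beats every feasible $w$ and is hence optimal (in particular the maximum is attained). Fix feasible $w$, let $S:=S_w$, and consider the polyhedron $Q_S:=P\cap\{w'\in\R^L:\langle w',x\rangle\ge\tau\text{ for all }x\in S\}$. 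It is nonempty (it contains $w$) and bounded (it lies in the bounded simplex $P$), hence is a polytope and has a vertex $\hat w$. Being a vertex in $\R^L$, $\hat w$ is the unique common solution of $L$ linearly independent constraints active at $\hat w$; since $0\notin S$ (because $\tau>0$), every active constraint of $Q_S$ lies in the list $\mathcal{E}$, so $\hat w=w^{\mathcal{E}'}$ for one of the enumerated $\mathcal{E}'$. Moreover $\hat w\in Q_S\subseteq P$ is feasible and satisfies $\langle\hat w,x\rangle\ge\tau$ for all $x\in S$, so $S_{\hat w}\supseteq S$; since every $\mu(x)>0$ this yields $\sum_{x\in S_{\hat w}}\mu(x)\ge\sum_{x\in S}\mu(x)$, i.e.\ $\Pr[\hat w\cdot X^{(H)}\ge\tau]\ge\Pr[w\cdot X^{(H)}\ge\tau]$. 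The positivity $\mu(x)>0$ (``enlarging $S$ never hurts'') together with the passage from an arbitrary feasible point to a vertex of $Q_S$ is exactly what tames the discontinuity; everything else is the bookkeeping above.

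Finally, \cref{L-junta} follows immediately from \tref{L-junta}. If $v^\ast$ is an $L$-junta then $v^\ast=(v^\ast_1,\dots,v^\ast_L,\mathbf{0}_{n-L})$ with $\sum_{i=1}^L v^\ast_i=\|v^\ast\|_1=1$, and since $v^\ast$ is supported on the first $L$ coordinates, $\obj(v^\ast)=\Pr_{X\sim\calD_p}[v^\ast\cdot X\ge\theta]=\Pr[(v^\ast_1,\dots,v^\ast_L)\cdot X^{(H)}\ge\theta]$; thus $(v^\ast_1,\dots,v^\ast_L)$ is a feasible input to the head problem with $\tau=\theta$, $W=1$ whose head-objective equals $\opt$. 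By \tref{L-junta}, {\tt Find-Optimal-Junta}$((p_1,\dots,p_L),\theta,1)$ returns $w'\in\R^L_{\ge 0}$ with $\sum_i w'_i\le 1$ and $\Pr[w'\cdot X^{(H)}\ge\theta]\ge\opt$. Then $(w',\mathbf{0}_{n-L})\in\R^n_{\ge 0}$ has $\ell_1$-norm at most $1$, so it is feasible for (P), and $\obj((w',\mathbf{0}_{n-L}))=\Pr[w'\cdot X^{(H)}\ge\theta]\ge\opt$; since $\opt$ is the optimum, equality holds, i.e.\ $(w',\mathbf{0}_{n-L})$ is an optimal solution.
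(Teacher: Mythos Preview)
Your derivation of the corollary from \tref{L-junta} (the final paragraph) is correct and is exactly what the paper means by ``direct corollary'': if $v^\ast$ is an $L$-junta then its head $(v^\ast_1,\dots,v^\ast_L)$ is feasible for the head problem with value $\opt$, so the head-optimizer returned by {\tt Find-Optimal-Junta} extends to an optimal feasible solution of (P).

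The substantive part of your write-up is really an alternative proof of \tref{L-junta}, and here your route differs from the paper's. The paper's {\tt Find-Optimal-Junta} enumerates the $2^{\Theta(L^2)}$ Boolean halfspaces on $\{0,1\}^L$ (invoking the classical counting bound of Chow and the integer-weight bound of Muroga--Toda--Takasu to do the enumeration), and for each such set $S$ solves a feasibility LP to find a weight vector realizing $S$ inside the simplex. You instead enumerate the $\binom{2^L+L}{L}=2^{O(L^2)}$ vertices of the arrangement obtained by intersecting the simplex facets with the $2^L-1$ hyperplanes $\langle w,x\rangle=\tau$, solve each $L\times L$ system by Cramer's rule, and evaluate the objective directly. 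Your correctness argument---pass from an arbitrary feasible $w$ to a vertex of the polytope $Q_{S_w}$ and use $\mu(x)>0$ to conclude the objective can only go up---is clean and valid; the observation that $0\notin S$ since $\tau>0$ is exactly what guarantees every active constraint of $Q_S$ appears in your list $\mathcal{E}$. Your approach is more self-contained (no appeal to the cited LTF enumeration results, no LP solver), while the paper's approach makes the connection to the space of LTFs explicit, which is thematically aligned with the rest of the paper. Both yield the same $2^{O(L^2)}$ enumeration cost. One small caveat: you are describing a \emph{different} algorithm than the paper's boxed {\tt Find-Optimal-Junta}, so strictly speaking your argument proves the theorem for your variant; since both procedures meet the same specification in \tref{L-junta}, the corollary follows either way.
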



\begin{framed}
\noindent {\bf Algorithm
{\tt Find-Optimal-Junta}:}

\medskip

\noindent {\bf Input:}  vector of probabilities $(p_1,\dots,p_L)$;
threshold $0 < \tau < 1$; parameter $W>0$

\noindent {\bf Output:} vector $w' \in \R^L_{\geq 0}$ that maximizes $\Pr[w \cdot X^{(H)} \geq \tau]$
over all $w \in \R^{L}_{\geq 0}$ that have $\sum_{i=1}^L w_i \leq W$
\begin{enumerate}
\item Let ${\cal S}$ be the set of all $2^{\Theta(L^2)}$ sets $S \subseteq \{0,1\}^L$ such that
$S=\{x \in \{0,1\}^L:  u \cdot x \geq c \}$ for some $u \in \R^L, c \in \R.$

\item For each $S \in {\cal S}$, check whether the following linear program over variables $w_1,\dots,w_L$
is feasible and if so let $w^{(S)} \in \R^L$ be a feasible solution:
\vspace{-0.1cm}
\[
\text{For all~}x \in S,
w \cdot x \geq \tau; \quad
 w_1, \cdots, w_L \geq 0; \quad
w_1+\cdots+w_L \leq W.
\]
\vspace{-0.3cm}
\item For each $w^{(S)}$ obtained in the previous step, compute
$\Pr[w^{(S)} \cdot X^{(H)} \geq \tau]$
and output the vector $w^{(S)}$ for which this is the largest.
\end{enumerate}
\end{framed}


\ignore{
}

\ifnum\shortsodaversion=1

This case is rather simple.  Procedure {\tt Find-Optimal-Junta}
outputs a vector $w'=(w'_1,\dots,w'_L)$
that maximizes the desired probability over all
non-negative vectors whose coordinates sum to at most $W$.
This is done in a straightforward way, using
linear programming and an exhaustive enumeration of all
linear threshold functions that depend only on the first $L$ variables.
The key fact used in the proof \cite{Chow:61,MTT:61} is that there are only
$2^{\Theta(L^2)}$ distinct Boolean functions that can be represented as linear
threshold functions, and these can be enumerated in essentially
$2^{\Theta(L^2)}$ time.  Details are given in the full version.

\else

This case is rather simple.  Procedure {\tt Find-Optimal-Junta}
outputs a vector $w'=(w'_1,\dots,w'_L)$
that maximizes the desired probability over all
non-negative vectors whose coordinates sum to at most $W$.
This is done in a straightforward way, using
linear programming and an exhaustive enumeration of all
linear threshold functions that depend only on the first $L$ variables.

We now proceed with the proof of Theorem \ref{thm:L-junta}.
We first give the simple running time analysis.  It is well known
(see e.g.,~\cite{Chow:61}) that, as claimed in Step~1
of {\tt Find-Optimal-Junta}, there are $2^{\Theta(L^2)}$ distinct Boolean
functions over $\{0,1\}^L$ that can be represented as halfspaces
$u \cdot x \geq c$.  It is also well known
(see \cite{MTT:61}) that for every $S \in {\cal S}$, there is a
vector $u=(u_1,\dots,u_L)$ and a threshold $c$ such that
$S=\{x \in \{0,1\}^L:  u \cdot x \geq c\}$ where each $u_i$ and
$c$ is an integer of absolute value at most $2^{\Theta(L \log L)}$.
Thus it is possible to enumerate over all elements $S \in {\cal S}$
in $2^{\Theta(L^2 \log L)}$ time.  Since for each fixed $S$ the linear
program in Step~2 has $O(2^L)$ constraints over $L$ variables, the
claimed running time bound follows.

The correctness argument is equally simple.
There must be some $S \in {\cal S}$ which is precisely the set of those
$x \in \{0,1\}^L$ that maximizes
$\Pr_{X \sim \mu_{p_1} \times \cdots \times \mu_{p_L}}[w \cdot X \geq \tau]$
over all $w \in \R^{L}_{\geq 0}$ that have $\littlesum_{i=1}^L w_i \leq W.$
Step~2 will identify a feasible solution
for this $S$, and hence the vector $w'=(w'_1,\dots,w'_L)$
that {\tt Find-Optimal-Junta} outputs will achieve this maximum
probability.  This concludes the proof of Theorem \ref{thm:L-junta}.

\fi


\section{Case 2:  $v^\ast$ is type $L+1$} \label{sec:type-Lplus1}
\vspace{-0.1cm}
Recall that in Case 2 the optimal solution $v^\ast$ is not an $L$-junta,
so it satisfies
$\littlesum_{i=1}^L v^\ast_i \leq \new{(L+2)^{(L+2)/2}} \cdot \littlesum_{i=L+1}^n v^\ast_i $, 
and $c(v^\ast,\eps) > L.$ For this case we prove the following theorem:
\begin{theorem} \label{thm:type-Lplus1}
There is a (deterministic) algorithm {\tt Find-Near-Opt-Large-CI} with the
following performance guarantee:
The algorithm takes as input a vector of probabilities
$p=(p_1,\dots,p_n)$ satisfying
(A1) and (A2) and a threshold value $0 < \theta < 1.$
It runs in $\poly(n,2^{\poly(L)},\bit(\theta))$ time
and outputs a set of $N \leq \poly(n,2^{\poly(L)})$ many feasible solutions.
If $v^\ast$ is of type $L+1$ then one of the feasible solutions $w'$
that it outputs satisfies $\obj(w') \geq \opt - \eps/2.$
\end{theorem}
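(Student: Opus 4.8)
The plan is to exploit the fact that $v^\ast$ has large critical index (namely $c(v^\ast,\eps\gamma/200) > L$), which means that the ``tail'' $v^\ast_{L+1},\dots,v^\ast_n$ is $(\eps\gamma/200)$-regular, combined with the heavy-tail property guaranteed by Theorem~\ref{thm:large-tail}: $\sum_{i=1}^L v^\ast_i \leq (L+2)^{(L+2)/2}\cdot \sum_{i=L+1}^n v^\ast_i$. The idea is to ``guess'' the head of $v^\ast$ approximately and to replace the (regular) tail by a fresh regular tail that we can build explicitly. First I would fix $v^\ast$ and write it as $v^\ast = (v^\ast_H, v^\ast_T)$ with $v^\ast_H \in \R^L$ and $v^\ast_T \in \R^{n-L}$; set $W_H = \sum_{i=1}^L v^\ast_i$ and $W_T = \sum_{i=L+1}^n v^\ast_i$, so $W_H \leq (L+2)^{(L+2)/2} W_T$ and $W_H + W_T = 1$. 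Because the tail is $(\eps\gamma/200)$-regular, the Berry--Ess\'een bound (Corollary~\ref{cor:BE}) says that $v^\ast_T \cdot X^{(T)}$, where $X^{(T)}$ is the restriction of $X\sim\calD_p$ to the tail coordinates, is within Kolmogorov distance $O(\eps)$ of a Gaussian with mean $\mu_T = \sum_{i>L} p_i v^\ast_i$ and variance $\sigma_T^2 = \sum_{i>L} p_i(1-p_i) (v^\ast_i)^2$.

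The key structural observation I would use is that the objective $\obj(v^\ast) = \Pr[v^\ast_H \cdot X^{(H)} + v^\ast_T\cdot X^{(T)} \geq \theta]$ depends on the tail essentially only through the one-dimensional Gaussian $N(\mu_T,\sigma_T^2)$: conditioning on $X^{(H)} = y$, the probability that the tail clears the residual threshold $\theta - v^\ast_H\cdot y$ is, up to additive error $O(\eps)$, equal to $\Pr_{g\sim N(\mu_T,\sigma_T^2)}[g \geq \theta - v^\ast_H\cdot y]$. So it suffices to (1) guess $v^\ast_H$ to sufficient accuracy — there are only $L$ coordinates and we may assume (after the granularity reductions and after noting $\|v^\ast_H\|_\infty \leq W_H \leq 1$) that guessing each coordinate to additive accuracy $\eps/\poly(L)$ over a grid of size $\poly(L/\eps)$ yields $\poly(n,(1/\eps)^{\poly(L)})$ candidates, one of which, $\widehat v_H$, is coordinate-wise $\eps/\poly(L)$-close to $v^\ast_H$ (and hence $|\widehat v_H\cdot y - v^\ast_H\cdot y|\le \eps/\poly(L)$ for all $y\in\{0,1\}^L$, changing $\obj$ by at most $O(\eps)$ once we also have tail anti-concentration from regularity); (2) guess $W_T$ to additive accuracy $\eps$ from the grid $\{0,\eps,2\eps,\dots,1\}$, which fixes the total $L_1$-budget $W_H := 1 - W_T$ available to the head and, crucially, is consistent with the ratio constraint; and (3) guess the pair $(\mu_T,\sigma_T^2)$ to additive accuracy $\poly(\eps)$ over a grid — again $\poly(1/\eps)$ choices — so that the induced Gaussian is within Kolmogorov distance $O(\eps)$ of the true one. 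For each such triple of guesses I would then build an explicit \emph{regular} tail vector $w'_T\in\R^{n-L}_{\ge 0}$ supported on the tail coordinates with prescribed $L_1$-mass $\approx W_T$, prescribed mean $\approx\mu_T$ and variance $\approx\sigma_T^2$, and small $\|\cdot\|_\infty/\|\cdot\|_2$ ratio; this is a simple moment-matching construction (spread the mass $W_T$ over enough coordinates, solving for two scalar parameters), and its correctness uses only that the tail $p_i$'s are $\gamma$-granular and bounded away from $0$ and $1$, so that the required regularity and the Berry--Ess\'een conclusion hold. Finally, having fixed $w'_T$, I would run {\tt Find-Optimal-Junta} from Theorem~\ref{thm:L-junta} — but not on $\theta$ directly; rather, I would replace the tail by its Gaussian surrogate and observe that $\Pr[w_H\cdot X^{(H)} + w'_T\cdot X^{(T)}\ge\theta]$ is, up to $O(\eps)$, a \emph{monotone} function of $w_H\cdot X^{(H)}$ composed with the Gaussian tail, so maximizing it over feasible heads $w_H$ with $\sum w_{H,i}\le W_H$ is equivalent to maximizing $\Pr[w_H\cdot X^{(H)}\ge \tau]$ for a threshold $\tau$ that we also guess from a grid of $\poly(1/\eps)$ values; {\tt Find-Optimal-Junta}$((p_1,\dots,p_L),\tau,W_H)$ then returns the optimal such head exactly. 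Collecting the output vectors $(w_H, w'_T)$ over all guesses gives the claimed set of $N\le\poly(n,2^{\poly(L)})$ feasible solutions, and the chain of $O(\eps)$-errors shows that the candidate corresponding to the correct guesses has $\obj \ge \opt - \eps/2$.

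For the running time: the number of guesses is $(L/\eps)^{O(L)}\cdot\poly(1/\eps)\cdot\poly(1/\eps)\cdot\poly(1/\eps) = 2^{\poly(L)}\cdot\poly(1/\eps)$ (absorbing $L=\poly(1/\eps,1/\gamma)$), each invokes {\tt Find-Optimal-Junta} which runs in $\poly(n,2^{\poly(L)},\bit(\theta))$ time, and building each $w'_T$ and writing it down takes $\poly(n)$ time; the total is $\poly(n,2^{\poly(L)},\bit(\theta))$, and the output list has size $N\le\poly(n,2^{\poly(L)})$, as required. Since all steps are deterministic, so is {\tt Find-Near-Opt-Large-CI}.

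\textbf{Main obstacle.} The delicate point is controlling the approximation error when we simultaneously (i) replace the true regular tail $v^\ast_T\cdot X^{(T)}$ by a Gaussian, (ii) replace that Gaussian by the surrogate $N(\mu_T,\sigma_T^2)$ built from guessed moments, and (iii) perturb the head from $v^\ast_H$ to a nearby grid point $\widehat v_H$ and from $\theta$ to a grid threshold — because the objective is a threshold probability with no Lipschitz control, a naive triangle inequality over these three perturbations does not obviously give $O(\eps)$. The resolution is that \emph{anti-concentration} of the regular tail (again from Berry--Ess\'een, since a $\tau$-regular sum has every interval of width $\delta$ receiving mass $O(\delta/\sigma_T + \tau)$) absorbs exactly the head/threshold perturbations of size $\eps/\poly(L)$ and the $O(\poly(\eps))$ moment-guessing errors; so the bookkeeping must be done carefully, pushing all errors through the anti-concentration bound rather than directly through the (discontinuous) threshold. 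A secondary subtlety is ensuring the moment-matching tail construction $w'_T$ is itself regular enough for Berry--Ess\'een to apply — this needs the tail $p_i$'s bounded away from $0$ and $1$ (guaranteed by $\gamma\ge\eps/4n$ and assumption (A2)) and enough tail coordinates to spread the mass over, which is where one uses that we are in the case $L < n$ and that the true $v^\ast_T$ already certifies such a regular configuration exists.
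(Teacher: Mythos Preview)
Your proposal rests on a misreading of what ``critical index $> L$'' means. You write that $c(v^\ast,\eps\gamma/200) > L$ ``means that the tail $v^\ast_{L+1},\dots,v^\ast_n$ is $(\eps\gamma/200)$-regular.'' This is backwards. By Definition~\ref{def:ci}, $c(v^\ast,\tau)>L$ means that for every $i\le L$ we have $|v^\ast_i| > \tau\,\sigma_i$; i.e., each of the first $L$ weights is \emph{large} relative to the remaining $L_2$ mass. It says nothing about $v^\ast_{L+1}$ being small relative to $\sigma_{L+1}$, and in general the tail from $L{+}1$ need not be $\tau$-regular at all. The case where the tail \emph{is} regular is precisely Case~3 (type $K$ with $K\le L$), and your outline is essentially the paper's Case~3 argument transplanted into Case~2.

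The paper's actual proof of Theorem~\ref{thm:type-Lplus1} uses large critical index in the opposite direction. Because each $v^\ast_i$ ($i\le L$) dominates $\sigma_i$, a geometrically decreasing subsequence among the head weights exists (Lemma~5.5 of \cite{DGJ+:10}), which yields strong \emph{anti-concentration of the head} $(v^\ast)^{(H)}\cdot X^{(H)}$. Simultaneously, the same decrease forces $\sigma_L(v^\ast)$ to be tiny, so by a Hoeffding bound the \emph{tail is sharply concentrated} around its mean $\mu$ --- no Berry--Ess\'een or Gaussian approximation is used here. The tail thus acts as a near-constant $\mu'$, and the problem reduces to finding an optimal head for the shifted threshold $\theta-\mu'+\sqrt{A'\ln(200/\eps)}\,\kappa$, solved exactly by {\tt Find-Optimal-Junta}; dynamic programming over $(A,B,C)$ triples enumerates the finitely many relevant tails. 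Your step~(1), grid-searching the head to accuracy $\eps/\poly(L)$ and absorbing the error via ``tail anti-concentration from regularity,'' fails here for exactly this reason: in Case~2 the tail is \emph{concentrated}, not anti-concentrated, so an $\eps/\poly(L)$ perturbation of the head can flip the threshold event with probability close to~$1$, and no $O(\eps)$ bound on the objective change is available.
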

\subsection{Useful probabilistic tools and notation.}
\noindent {\bf Anti-concentration.}
We say that a real-valued random variable
$Z$ is \emph{$\eps$-anti-concentrated at radius $\delta$} if for every
interval of radius $\delta$, $Z$ lands in that interval with probability at most $\eps$, i.e.,
\ifnum\shortsodaversion=0
\[
\text{for all $t \in \R$,}\quad
\Pr[|Z-t|\leq\delta] \leq \eps.
\]
\else
for all $t \in \R$, $\Pr[|Z-t|\leq\delta] \leq \eps.$
\fi
\ifnum\shortsodaversion=0
We will use the following simple result, which says that anti-concentration
of a linear form under a product distribution can only improve by
adding more independent coordinates:
\begin{lemma} \label{lem:extension}
Fix $(q_1,\dots,q_n) \in [0,1]^n$ and let
$\bigotimes_{i=1}^n \mu_{q_i}$
denote the corresponding product distribution
over $\{0,1\}^n.$
Fix any weight-vector $w^{(k)} \in \R^k$ and suppose that
the random variable $w^{(k)} \cdot X^{(k)}$, where $X^{(k)} \sim \bigotimes_{i=1}^k \mu_{q_i}$,
is $\eps$-anti-concentrated at radius
$\delta$.  Then for any $w^{(n-k)} \in \R^{n-k}$, the random variable
$w \cdot X$, where $w = (w^{(k)}, w^{(n-k)} )$ and $X \sim \bigotimes_{i=1}^n \mu_{q_i}$ is also $\eps$-anti-concentrated
at radius $\delta.$
\end{lemma}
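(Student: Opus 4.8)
The plan is to reduce the anti-concentration of $w \cdot X$ to that of $w^{(k)} \cdot X^{(k)}$ by conditioning on the ``new'' coordinates. Write $X = (X^{(k)}, X^{(n-k)})$ with $X^{(k)} \sim \bigotimes_{i=1}^k \mu_{q_i}$ and $X^{(n-k)} \sim \bigotimes_{i=k+1}^n \mu_{q_i}$, these two being independent, and correspondingly $w \cdot X = w^{(k)} \cdot X^{(k)} + w^{(n-k)} \cdot X^{(n-k)}$. Fix an arbitrary $t \in \R$ and an arbitrary realization $y \in \{0,1\}^{n-k}$ of $X^{(n-k)}$; set $c = c(y) \eqdef w^{(n-k)} \cdot y$, which is a constant once $y$ is fixed. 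Then conditioned on $X^{(n-k)} = y$ we have
\[
\Pr\!\left[\,|w \cdot X - t| \le \delta \,\middle|\, X^{(n-k)} = y\right]
= \Pr\!\left[\,|w^{(k)} \cdot X^{(k)} - (t - c)| \le \delta\,\right] \le \eps,
\]
where the inequality is exactly the hypothesis that $w^{(k)} \cdot X^{(k)}$ is $\eps$-anti-concentrated at radius $\delta$, applied with the shifted center $t - c$.

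First I would record the independence decomposition above; then I would apply the law of total probability,
\[
\Pr[\,|w \cdot X - t| \le \delta\,] = \sum_{y \in \{0,1\}^{n-k}} \Pr[X^{(n-k)} = y]\cdot \Pr\!\left[\,|w \cdot X - t| \le \delta \,\middle|\, X^{(n-k)} = y\right] \le \sum_{y} \Pr[X^{(n-k)} = y]\cdot \eps = \eps.
\]
Since $t$ was arbitrary, this gives that $w \cdot X$ is $\eps$-anti-concentrated at radius $\delta$, as desired.

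There is essentially no obstacle here: the only point requiring a tiny bit of care is that $w^{(n-k)}$ is completely arbitrary (it may contain large or negative entries), but this is harmless because after conditioning on $X^{(n-k)}$ its contribution is a deterministic additive shift, and anti-concentration of $w^{(k)} \cdot X^{(k)}$ holds with respect to \emph{every} center $t'\in\R$, in particular $t' = t - c(y)$. The argument uses only the independence of the two blocks of coordinates and the translation-invariant (quantified-over-all-$t$) form of the anti-concentration definition.
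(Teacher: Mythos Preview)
Your proof is correct. The paper does not actually give a proof of this lemma---it is stated as a ``simple result'' and left without proof---and your conditioning argument (fix the last $n-k$ coordinates, use independence to reduce to the anti-concentration hypothesis with a shifted center, then average) is precisely the standard argument one would supply.
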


\fi
\medskip

\noindent {\bf Notation.}
Much of our analysis in this section will deal separately with the coordinates
$1,\dots,L$ and the coordinates $L+1,\dots,n$; hence the following
terminology and notation will be convenient.  For an $n$-dimensional
vector $w \in \R^n$, in this section
we refer to $(w_1,\dots,w_L)$ as the ``head''
of $w$ and we write $w^{(H)}$ to denote this vector; similarly
we write $w^{(T)}$ to denote the ``tail'' $(w_{L+1},\dots,w_n)$ of $w$.
We sometimes refer to a vector in $\R^L$ as a ``head vector'' and to a
vector in $\R^{n-L}$ as a ``tail vector.''
In a random variable $w^{(H)} \cdot X^{(H)}$ the randomness is over the
draw of $X^{(H)} \sim \bigotimes_{i=1}^L \mu_{p_i}$,
and similarly for a random variable $w^{(T)} \cdot X^{(T)}$
the randomness is over the draw of $X^{(T)}
\sim \bigotimes_{i=L+1}^n \mu_{p_i}.$

\subsection{The algorithm and its analysis.}
Case 2 is more involved than Case 1.  We first explain some of the analysis
that motivates our approach
\ifnum\shortsodaversion=0
(Lemmas \ref{lem:1} and \ref{lem:2} below)
\else
(Lemmas 12 and 13 in the full version)
\fi
and then explain how the
algorithm works (see Steps 1 and 2 of {\tt Find-Near-Opt-Large-CI}).

Let us say that a vector $w=(w_1,\dots,w_n) \in \R^n$ has a
\emph{$\kappa$-granular tail} if the following condition holds
\ifnum\shortsodaversion=1
(throughout the rest of Section \ref{sec:type-Lplus1},
$\kappa=\poly(1/n,1/2^{\poly(L)})$):
\else
(throughout the rest of Section \ref{sec:type-Lplus1},
$\kappa=\poly(1/n,1/2^{\poly(L)})$; we will specify its value more
precisely later):
\fi
\begin{itemize}
\item {\bf [$w=(w_1,\dots,w_n)$ has a $\kappa$-granular tail]:}
For $L+1 \leq i \leq n$, each coordinate $w_i$ is an integer multiple of $\kappa.$
\end{itemize}
\ifnum\shortsodaversion=1
The first stage of our analysis is to show 
that there is a feasible solution $w'$ such that both the head
and tail have some useful properties:  the tail weights are
granular and the tail random variable is sharply concentrated around its
mean, while the head random variable is anti-concentrated and moreover 
gives a high-quality solution to a problem with a related threshold.
(See Lemma 12 in the full version.)
\else
The first stage of our analysis is to show (assuming that $v^\ast$
is type $L+1$) that there is a feasible solution such that both the head
and tail have some useful properties:  the tail weights are
granular and the tail random variable is sharply concentrated around its
mean, while the head gives a high-quality solution to a
problem with a related threshold
(see condition (3) below):
\begin{lemma} \label{lem:1}
Suppose $v^\ast$ is type $L+1$.  Then there is a feasible
solution $w'=(w'_1,\dots,w'_n) \in \R^n_{\geq 0}$ such that
$w'_1 \geq \cdots \geq w'_n \geq 0$ \ignore{and $\obj(w') \geq \opt
- \eps/4$ }which satisfies the following:
\begin{enumerate}
\item The vector $w'$ has a $\kappa$-granular tail.  Hence for
$M \eqdef \poly(1/\kappa)$, there are
non-negative integers $A',B',C' \leq M$ such that
$ \sum_{i=L+1}^n (w'_i)^2 = A' \kappa^2$,
$\sum_{i=L+1}^n w'_i p_i = B' \kappa (\eps/(4n)),$ and
$\sum_{i=L+1}^n w'_i = C' \kappa.$

\item Let $\mu'$ denote $\E[w'^{(T)} \cdot X^{(T)}]$, i.e., $\mu' = B' \kappa
(\eps/(4n)).$
The random variable $w'^{(T)} \cdot X^{(T)}$
is strongly concentrated around its
mean:
\begin{equation} \label{eq:tailconc}
\Pr[|w'^{(T)} \cdot X^{(T)} - \mu'| \geq
\sqrt{A' \cdot \ln(200/\eps)}\cdot
\newad{\kappa}
] \leq \eps/100.
\end{equation}

\item The head random variable $w'^{(H)} \cdot X^{(H)}$ satisfies
\begin{equation} \label{eq:headgood}	
\sum_{i=1}^L w'_i \leq 1 - C' \kappa \quad \text{and} \quad
\Pr[w'^{(H)} \cdot X^{(H)} \geq \theta - \mu' +
\sqrt{A' \cdot \ln(200/\eps)}\cdot
\newad{\kappa}] \geq
\opt - \eps/40.
\end{equation}

\end{enumerate}

\end{lemma}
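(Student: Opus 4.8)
The plan is to start from the fixed optimal solution $v^\ast$ and massage it into a feasible $w'$ with the stated head/tail structure, losing only $O(\eps)$ in the objective along the way. First I would handle the tail granularity: round each tail coordinate $v^\ast_i$ (for $i\in[L+1,n]$) down to the nearest integer multiple of $\kappa$, producing candidate tail weights $\tilde w_i$. Since we are rounding \emph{down}, the event $\{w\cdot X\ge\theta\}$ can only shrink, so to compensate I would slightly lower the effective threshold; equivalently, I'd argue that because $c(v^\ast,\eps\gamma/200)>L$ (Case 2), the tail vector $v^{\ast(T)}$ is $(\eps\gamma/200)$-regular, hence by Berry--Ess\'een (Corollary~\ref{cor:BE}) the tail random variable $v^{\ast(T)}\cdot X^{(T)}$ is anti-concentrated at a suitable radius, and therefore the $O(n\kappa)$-size perturbation caused by rounding changes $\Pr[v^\ast\cdot X\ge\theta]$ by at most $\eps/40$ or so (choosing $\kappa$ small enough, $\kappa=\poly(1/n,1/2^{\poly(L)})$, makes $n\kappa$ negligible against the anti-concentration radius). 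The granularity immediately yields the integers $A',B',C'\le M=\poly(1/\kappa)$ in condition (1), since each of $\sum (\tilde w_i)^2$, $\sum \tilde w_i p_i$ (using (A2) that $p_i$ is an integer multiple of $\eps/(4n)$), and $\sum \tilde w_i$ is then an integer multiple of $\kappa^2$, $\kappa\cdot\eps/(4n)$, and $\kappa$ respectively.

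Next I would establish condition (2), the tail concentration bound~\eqref{eq:tailconc}. Here I'd invoke a standard Bernstein/Hoeffding-type tail inequality for the sum of independent bounded random variables $\tilde w_i X_i$: the variance is $\sum_{i>L}\tilde w_i^2 p_i(1-p_i)\le \sum_{i>L}\tilde w_i^2 = A'\kappa^2$, and each summand is bounded in magnitude by $\tilde w_{L+1}\le$ (something controlled by regularity), so a Bernstein bound gives deviation at most $\sqrt{A'\ln(200/\eps)}\cdot\kappa$ (up to the additive max-weight term, which regularity makes lower order) with failure probability $\le\eps/100$. The $(\eps\gamma/200)$-regularity of the tail is exactly what guarantees the max tail weight is small relative to $\sigma=\sqrt{A'}\,\kappa$, so the sub-Gaussian regime dominates and the clean bound~\eqref{eq:tailconc} holds.

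For condition (3), I would set $w'^{(H)}=v^{\ast(H)}$ (possibly after the same downward rounding, if needed, but the head has only $L$ coordinates so I can afford to keep it as is or handle it separately) and $w'^{(T)}=\tilde w^{(T)}$, normalizing so that $\|w'\|_1\le 1$; note $\sum_{i\le L}w'_i \le 1-\sum_{i>L}w'_i = 1-C'\kappa$ gives the first part of~\eqref{eq:headgood}. For the probability statement, I'd condition on $X^{(T)}$ and use that on the good event (probability $\ge 1-\eps/100$ by step~(2)) we have $w'^{(T)}\cdot X^{(T)}\le \mu' + \sqrt{A'\ln(200/\eps)}\cdot\kappa$, hence
\[
\{w'^{(H)}\cdot X^{(H)}\ge \theta-\mu'+\sqrt{A'\ln(200/\eps)}\cdot\kappa\}\ \subseteq\ \{w'\cdot X\ge\theta\}
\]
up to the good-event caveat; combining this with the fact that $\Pr[v^\ast\cdot X\ge\theta]=\opt$ and that our rounding cost only $\le\eps/40$ in total gives $\Pr[w'^{(H)}\cdot X^{(H)}\ge\theta-\mu'+\sqrt{A'\ln(200/\eps)}\cdot\kappa]\ge\opt-\eps/40$, as claimed. (The bookkeeping of the various $\eps/40,\eps/100$ losses is routine once $\kappa$ is fixed small enough.)

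The main obstacle I anticipate is the first step: proving that the downward rounding of the tail coordinates to $\kappa$-granularity costs only $O(\eps)$ in the objective. This is precisely the ``rounding can radically change the objective'' pitfall flagged in the introduction, and it works here only because we are in Case 2, where the critical-index hypothesis forces the tail to be regular, hence anti-concentrated (via Berry--Ess\'een and Lemma~\ref{lem:extension}, which lets us extend anti-concentration from the tail to the whole vector). Getting the quantitative relationship between the regularity parameter $\eps\gamma/200$, the granularity $\kappa$, the cutoff $L$, and the target accuracy $\eps$ to line up — so that $n\kappa$ is genuinely small relative to the anti-concentration radius $\delta\approx \eps\cdot\sigma_{L+1}$ — is the delicate part and is exactly where the specific polynomial form of $L=L(\eps,\gamma)$ in~\eqref{eq:L} gets used.
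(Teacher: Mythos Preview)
Your proposal has a genuine gap at the very first step, and it propagates through the rest of the argument.

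\textbf{The tail is not regular in Case~2.} You write that ``because $c(v^\ast,\eps\gamma/200)>L$ (Case 2), the tail vector $v^{\ast(T)}$ is $(\eps\gamma/200)$-regular,'' and you repeat this in your final paragraph (``the critical-index hypothesis forces the tail to be regular''). This is backwards. By definition, $c(v^\ast,\tau)>L$ means that for every $i\le L$ we have $|v^\ast_i|>\tau\sigma_i$; it says nothing about index $L+1$, so the tail $(v^\ast_{L+1},\dots,v^\ast_n)$ need \emph{not} be $\tau$-regular. (Tail regularity is exactly what characterizes Case~3, not Case~2.) Consequently your Berry--Ess\'een argument for anti-concentration of the tail is unsupported, and with it your mechanism for absorbing the $O(n\kappa)$ rounding error collapses. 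What large critical index actually buys you is structure in the \emph{head}: there is a geometrically decreasing subsequence of head weights (Lemma~5.5 of \cite{DGJ+:10}), from which one gets strong anti-concentration of the head random variable $(v^\ast)^{(H)}\cdot X^{(H)}$ by a purely combinatorial argument (Claim~5.7 of \cite{DGJ+:10}). The paper uses this head anti-concentration, together with Lemma~\ref{lem:extension}, to absorb both the rounding error and the sign flip discussed next.

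\textbf{The derivation of (3) goes the wrong way.} Even granting tail concentration~\eqref{eq:tailconc}, your containment argument proves the wrong inequality. From $\opt=\Pr[v^\ast\cdot X\ge\theta]$ and $\Pr[\text{tail}\ge\mu'+r]\le\eps/100$ you obtain $\Pr[\text{head}\ge\theta-\mu'-r]\ge\opt-\eps/100$, i.e.\ with a \emph{minus} $r$. Condition~(3) requires the \emph{plus}-$r$ version, and bridging the gap of width $2r$ needs anti-concentration of the head on the interval $[\theta-\mu'-r,\theta-\mu'+r]$. Your displayed containment (which, as written, also uses the wrong one-sided tail bound) would at best yield $\Pr[w'\cdot X\ge\theta]\ge\Pr[\text{head}\ge\theta-\mu'+r]-\eps/100$, the reverse of what is needed. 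The paper handles this flip, and the separate $\mu\to\mu'$ shift, by two applications of the head anti-concentration bound derived above; without that ingredient the argument does not close.
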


\fi

\ifnum\shortsodaversion=1

Let us say that a triple of non-negative
integers $(A,B,C)$ with 
\ifnum\shortsodaversion=1
$A,B,C \leq \poly(1/\kappa)$
\else
$A,B,C \leq M$
\fi
is a \emph{conceivable} triple.
We say that a conceivable triple
$(A,B,C)$ is \emph{achievable} if there exists a
vector $(u_{L+1},\dots,u_n) \in \R^{n-L}_{\geq 0}$ whose coordinates
are non-negative integer multiples of $\kappa$ such that
$\sum_{i=L+1}^n (u_i)^2 = A \kappa^2$,
$\sum_{i=L+1}^n u_i p_i = B \kappa (\eps/(4n)),$ and
$\sum_{i=L+1}^n u_i = C \kappa$, and we
say that such a vector $(u_{L+1},\dots,u_n)$ \emph{achieves} the
triple $(A,B,C).$
Next, our analysis shows that for \emph{any} vector $w''$
with a $\kappa$-granular
tail which matches the $A',B',C'$ values 
\ifnum\shortsodaversion=1
achieved by $(w')^{(T)}$,
\else
from above,
\fi 
the value of its overall
solution is essentially determined by the value that its head
random variable $w''^{(H)} \cdot X^{(H)}$ achieves for the
related-threshold problem.  (See Lemma 13 in the full version.)

\else

Next, our analysis shows that for \emph{any} vector $w''$
with a $\kappa$-granular
tail which matches the $A',B',C'$ values from above, the value of its overall
solution is essentially determined by the value that its head
random variable $w''^{(H)} \cdot X^{(H)}$ achieves for the
related-threshold problem.
More precisely, let us say that a triple of non-negative
integers $(A,B,C)$ with $A,B,C \leq M$ is a \emph{conceivable} triple.
We say that a conceivable triple
$(A,B,C)$ is \emph{achievable} if there exists a
vector $(u_{L+1},\dots,u_n) \in \R^{n-L}_{\geq 0}$ whose coordinates
are non-negative integer multiples of $\kappa$ such that
$\sum_{i=L+1}^n (u_i)^2 = A \kappa^2$,
$\sum_{i=L+1}^n u_i p_i = B \kappa (\eps/(4n)),$ and
$\sum_{i=L+1}^n u_i = C \kappa$, and we
say that such a vector $(u_{L+1},\dots,u_n)$ \emph{achieves} the
triple $(A,B,C).$

\begin{lemma} \label{lem:2}
As above suppose that $v^\ast$ is type $L+1$.
Let $w',A',B',C'$ be as described in Lemma \ref{lem:1}.

Let $w''=(w''_1,\dots,w''_L,w''_{L+1},\dots,w''_n)$ be any vector
with a $\kappa$-granular tail\ignore{whose $L$ head coordinates match
those of $w'$ and} whose $n-L$ tail coordinates $(w''_{L+1},\dots,w''_n)$
achieve the triple $(A',B',C').$ Then
like $w'^{(T)} \cdot X^{(T)}$, the random variable $w''^{(T)} \cdot X^{(T)}$
is strongly concentrated around its mean:
\begin{equation} \label{eq:tailconc2}
\Pr[|w''^{(T)} \cdot X^{(T)} - \mu'| \geq
\newad{\sqrt{A' \cdot \ln(200/\eps)} \cdot \kappa}
] \leq \eps/100,
\end{equation}
and hence
\begin{equation} \label{eq:wprimeprimegood}
\Pr[w'' \cdot X \geq \theta] \geq
\Pr[w''^{(H)} \cdot X^{(H)} \geq \theta - \mu' +\newad{\sqrt{A' \cdot \ln(200/\eps)} \cdot \kappa}
] - \eps/100.
\end{equation}
\end{lemma}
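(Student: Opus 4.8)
The plan is to establish the two displayed inequalities, \eqref{eq:tailconc2} and then \eqref{eq:wprimeprimegood}, in that order: the first is a direct Hoeffding-type tail bound on the tail linear form, and the second follows from it by conditioning on the tail coordinates and using independence of the head and tail. I will use throughout that $A' \ge 1$ (this is forced by $v^\ast$ being of type $L+1$; see point (ii) at the end for why), so that the exponent appearing below is well defined.

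For \eqref{eq:tailconc2}, observe that $w''^{(T)} \cdot X^{(T)} = \sum_{i=L+1}^n w''_i X_i$ is a sum of independent random variables, the $i$-th of which takes values in $\{0, w''_i\}$. Because $(w''_{L+1},\dots,w''_n)$ achieves the triple $(A',B',C')$, its mean is $\E[w''^{(T)} \cdot X^{(T)}] = \sum_{i=L+1}^n w''_i p_i = B'\kappa(\eps/(4n)) = \mu'$ — exactly the mean of $w'^{(T)} \cdot X^{(T)}$ — and the sum of squared ranges of the summands is $\sum_{i=L+1}^n (w''_i)^2 = A'\kappa^2$. Hoeffding's inequality then gives
\begin{align*}
\Pr\Big[\big|\,w''^{(T)} \cdot X^{(T)} - \mu'\,\big| \ge \sqrt{A'\ln(200/\eps)}\cdot\kappa\Big]
&\le 2\exp\!\Big(-\tfrac{2 A'\ln(200/\eps)\,\kappa^2}{A'\kappa^2}\Big) \\
&= 2(\eps/200)^2 \;\le\; \eps/100 ,
\end{align*}
which is \eqref{eq:tailconc2}. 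The key point — and the reason the notion of an ``achievable triple'' was set up in exactly this way — is that the two quantities entering this bound (the mean, and the sum of squared weights) are determined by $(w''_{L+1},\dots,w''_n)$ only through the triple $(A',B',C')$, so the concentration statement proved for $w'^{(T)}$ in \lref{1} transfers verbatim to $w''^{(T)}$.

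For \eqref{eq:wprimeprimegood}, write $w'' \cdot X = w''^{(H)} \cdot X^{(H)} + w''^{(T)} \cdot X^{(T)}$ and recall that $X^{(H)}$ and $X^{(T)}$ are independent. Put $t \eqdef \sqrt{A'\ln(200/\eps)}\cdot\kappa$ and let $\mathcal{E}$ be the event $\{w''^{(T)} \cdot X^{(T)} \ge \mu' - t\}$; by \eqref{eq:tailconc2}, $\Pr[\mathcal{E}] \ge 1 - \eps/100$. Whenever $\mathcal{E}$ holds and in addition $w''^{(H)} \cdot X^{(H)} \ge \theta - \mu' + t$, we get $w'' \cdot X \ge (\theta - \mu' + t) + (\mu' - t) = \theta$. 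Since the head event depends only on $X^{(H)}$ while $\mathcal{E}$ depends only on $X^{(T)}$, they are independent, so
\begin{align*}
\Pr[w'' \cdot X \ge \theta] &\ge \Pr\big[w''^{(H)} \cdot X^{(H)} \ge \theta - \mu' + t\big]\cdot \Pr[\mathcal{E}] \\
&\ge \Pr\big[w''^{(H)} \cdot X^{(H)} \ge \theta - \mu' + t\big] - \eps/100 ,
\end{align*}
where the last step uses $\Pr[\mathcal{E}] \ge 1-\eps/100$ together with the trivial bound that a probability is at most $1$. Recalling the value of $t$, this is exactly \eqref{eq:wprimeprimegood}.

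I do not expect a genuine obstacle here: \lref{2} is essentially a bookkeeping step that decouples, for the purposes of the algorithm, the search over tail vectors (later replaced by a search over the polynomially many conceivable/achievable triples) from the value contributed by the head. The only two points that need (minor) care are: (i) that ``achieving $(A',B',C')$'' pins down both the mean and the sum of squared weights of the tail form exactly — this is precisely why applying Hoeffding, which uses $\sum_{i>L}(w''_i)^2 = A'\kappa^2$ rather than the true variance $\sum_{i>L}(w''_i)^2 p_i(1-p_i)$, yields a clean bound independent of the particular tail vector; and (ii) checking that $A' \ge 1$, which holds because $v^\ast$ is of type $L+1$ and hence not an $L$-junta, so (given the smallness of $\kappa$ and the heavy-tail bound for $v^\ast$) the tail of $w'$ — and therefore any vector achieving the same triple — has at least one coordinate of value $\ge \kappa$.
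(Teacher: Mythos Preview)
Your proof is correct and matches the paper's approach essentially line for line: both apply Hoeffding using that achieving $(A',B',C')$ fixes the mean to $\mu'$ and the sum of squared tail weights to $A'\kappa^2$, and both derive \eqref{eq:wprimeprimegood} by noting that the head event together with the tail lower-deviation event forces $w''\cdot X\ge\theta$. Your treatment is slightly more explicit (you invoke independence and check $A'\ge 1$), whereas the paper just uses $\Pr[A\cap B]\ge\Pr[A]-\Pr[\overline B]$ without needing independence, but these are cosmetic differences.
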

\fi
Intuitively, these two lemmas are useful because they allow us to
``decouple'' the problem of finding an $n$-dimensional solution
vector $w$ into two pieces, finding a head-vector and a tail-vector.
For the tail, these lemmas say that it is enough to search over the
(polynomially many) conceivable triples $(A,B,C)$;
if we can identify the achievable triples from within the conceivable
triples, and for each achievable triple construct \emph{any}
$\kappa$-granular tail vector that achieves it, then this is essentially
as good as finding the actual tail vector of $w'$.  For
the right triple $(A',B',C')$ 
\ifnum\shortsodaversion=1
given by the vector $(w')^{(T)}$ from Lemma 12,
\else
given by \ref{lem:1},
\fi
all that remains is to come up with
a vector of head coordinates that yields a high-value solution
\ifnum\shortsodaversion=1
to the related-threshold problem.
\else
to the related-threshold problem (note that part (3) of Lemma \ref{lem:1}
establishes that indeed such a head-vector must exist).
\fi  
This is
highly reminiscent of Case 1, and indeed we can apply machinery
(the {\tt Find-Optimal-Junta} procedure) from that case for this purpose.
These lemmas thus motivate the two main steps
of the algorithm, Steps 1 and 2, which we describe below.

While there are only polynomially many conceivable triples,
it is a nontrivial task to identify whether any given
conceivable triple is achievable (note that there are exponentially
many different vectors $(u_{L+1},\dots,u_n)$ that might
achieve a given triple).  However, this does turn out to be a
feasible task;
Algorithm {\tt Construct-Achievable-Tails}, called in Step 1
of {\tt Find-Near-Opt-Large-CI}, is
an efficient algorithm (based on dynamic programming) which searches
across all conceivable triples $(A,B,C)$ and identifies those which are
achievable.  For each triple that is found to be achievable,
{\tt Construct-Achievable-Tails}
constructs a $\kappa$-granular tail which achieves it. We have the following lemma:
\begin{lemma} \label{lem:dp}
There is a (deterministic) algorithm {\tt Construct-Achievable-Tails}
that outputs a list consisting precisely of all the achievable $(A,B,C)$
triples, and for each achievable triple it outputs a corresponding tail
vector $(w''_{L+1},\dots,w''_n)$ that achieves it.
The algorithm runs in time $\poly(n,1/\kappa)=\poly(1/\kappa).$
\end{lemma}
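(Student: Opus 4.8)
The plan is to solve the problem by a dynamic program over the tail coordinates $i = L+1, \dots, n$. The starting observation is that, by assumption (A2), each $p_i$ is a non-negative integer multiple of $\eps/(4n)$; write $p_i = m_i \cdot (\eps/(4n))$ with $m_i \in \Z_{\geq 0}$. A $\kappa$-granular tail vector $(u_{L+1},\dots,u_n)$ is the same thing as a vector of non-negative integers $a_i := u_i/\kappa$, and in terms of these the three quantities defining a triple become purely integral: $\sum_{i=L+1}^n u_i^2 = (\sum_i a_i^2)\,\kappa^2$, $\sum_{i=L+1}^n u_i p_i = (\sum_i a_i m_i)\,\kappa(\eps/(4n))$, and $\sum_{i=L+1}^n u_i = (\sum_i a_i)\,\kappa$. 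Hence a conceivable triple $(A,B,C)$ is achievable if and only if there are non-negative integers $a_{L+1},\dots,a_n$ with $\sum_i a_i^2 = A$, $\sum_i a_i m_i = B$, and $\sum_i a_i = C$.

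Next I would build tables $T_L, T_{L+1}, \dots, T_n$, where $T_j$ consists of every triple $(A,B,C)$ with $0 \le A,B,C \le M$ that is ``partially achievable'' using only coordinates $L+1,\dots,j$ --- i.e., realizable by non-negative integers $a_{L+1},\dots,a_j$ --- together with, for each such triple, one witness assignment $(a_{L+1},\dots,a_j)$. Set $T_L = \{(0,0,0)\}$ with the empty witness. To pass from $T_{j-1}$ to $T_j$: for every $(A,B,C) \in T_{j-1}$ and every integer $a$ with $0 \le a \le M$, form the candidate triple $(A + a^2,\, B + a m_j,\, C + a)$; if all three of its components lie in $[0,M]$, insert it into $T_j$ (discarding duplicates) with witness obtained by appending $a$ to the witness of $(A,B,C)$. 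Finally, output $T_n$ together with the tail vectors $(w''_{L+1},\dots,w''_n) = \kappa \cdot (a_{L+1},\dots,a_n)$ read off from the stored witnesses.

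Correctness is an induction on $j$: $T_j$ is exactly the set of triples with components in $[0,M]$ that are partially achievable using coordinates $L+1,\dots,j$. The base case is immediate, and for the inductive step note that $(A',B',C')$ is partially achievable using $L+1,\dots,j$ precisely when, for some non-negative integer $a$, the residual triple $(A'-a^2,\,B'-a m_j,\,C'-a)$ is partially achievable using $L+1,\dots,j-1$; since $a,m_j \ge 0$ the residual components are non-negative and no larger than $A',B',C' \le M$ respectively, hence conceivable, so by the induction hypothesis this residual triple lies in $T_{j-1}$ exactly when it is partially achievable, and the transition above enumerates precisely these extensions. In particular $T_n$ lists exactly the achievable triples, and the recorded witnesses achieve them. (No monotonicity within the tail is needed: Lemma~\ref{lem:2} only requires a $\kappa$-granular tail achieving the triple, and feasibility of the overall solution is governed solely by the value $C'$.)

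For the running time: there are at most $(M+1)^3 = \poly(1/\kappa)$ conceivable triples, so $|T_j| = \poly(1/\kappa)$ for every $j$; there are $n - L \le n$ phases; and each phase processes every triple in $T_{j-1}$ against each of the $M+1 = \poly(1/\kappa)$ choices of $a$ using $O(1)$ arithmetic operations, with the witness bookkeeping adding at most $O(n)$ per insertion. This gives total time $\poly(n,1/\kappa)$, which is $\poly(1/\kappa)$ since $\kappa = \poly(1/n,1/2^{\poly(L)})$ forces $1/\kappa \ge \poly(n)$. I do not expect a genuine obstacle here; the only points that need care are the observation that the partial sums are monotone non-decreasing in $j$ (so that capping the state components at $M$ throughout cannot prune any path leading to an achievable triple) and the bookkeeping needed to recover an explicit witness tail vector for each achievable triple.
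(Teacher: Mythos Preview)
Your proposal is correct and follows essentially the same dynamic-programming approach as the paper: both build a table indexed by the tail prefix and the integer triple, using the integrality of the $p_i$'s from (A2) to make the state space finite. Your write-up is in fact a bit more careful than the paper's---you spell out the forward transition, the witness bookkeeping, and the monotonicity observation that capping states at $M$ never prunes a valid path---but the underlying argument is the same.
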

Finally, for each achievable triple $(A,B,C)$ and corresponding
tail vector $(w''_{L+1},\dots,w''_n)$ that is generated
by {\tt Construct-Achievable-Tails}, the procedure
{\tt Find-Optimal-Junta} is used to find a setting of the head
coordinates that yields a high-quality solution.

\ignore{
}


\begin{framed}
\noindent {\bf Algorithm
{\tt Find-Near-Opt-Large-CI}:}

\medskip

\noindent {\bf Input:}  probability vector $p=(p_1,\dots,p_n)$ satisfying
(A1) and (A2); parameter $0 < \theta < 1$

\noindent {\bf Output:}  if $v^\ast$ is type $L+1$, a
set ${\cal FEAS}$ of feasible solutions $w$ such that one of them satisfies $\obj(w) \geq
\opt-\eps/2$

\begin{enumerate}

\item Run Algorithm {\tt Construct-Achievable-Tails} to obtain a
list ${\cal T}$ of all achievable triples $(A,B,C)$ and, for each one,
a tail vector $u=(u_{L+1},\dots,u_n)$ that achieves it.

\item For each triple $(A,B,C)$ in ${\cal T}$ and its associated tail
vector $u=(u_{L+1},\dots,u_n)$:
\begin{itemize}
\item  Run {\tt Find-Optimal-Junta}$((p_1,\dots,p_L),\theta - B \kappa \eps/(4n) + \newad{\kappa \cdot \sqrt{\ln(200/\epsilon) \cdot A}},
1-C \kappa)$ to obtain a head $(u_1,\dots,u_L)$.

\item Add the concatenated vector $(u_1,\dots,u_L,u_{L+1},\dots,u_n)$
to the set ${\cal FEAS}$ (initially empty)
of feasible solutions that will be returned.
\end{itemize}
\item Return the set ${\cal FEAS}$ of feasible solutions constructed as
described above.
\end{enumerate}
\end{framed}


\ifnum\shortsodaversion=1
We prove all the aforementioned lemmas, and use them to prove 
Theorem \ref{thm:type-Lplus1},
in the full version.
\else
We prove the aforementioned lemmas in the next subsection.
We conclude this subsection by showing how
Theorem \ref{thm:type-Lplus1} follows from these lemmas.
\medskip
\noindent {\bf Proof of Theorem \ref{thm:type-Lplus1} given
Lemmas \ref{lem:1}, \ref{lem:2}, and \ref{lem:dp}:}
The claimed running time bound is immediate from inspection of
{\tt Find-Near-Opt-Large-CI}, Lemma \ref{lem:dp} (to bound the running
time of {\tt Construct-Achievable-Tails})
and Theorem \ref{thm:L-junta} (to bound the
running time of {\tt Find-Optimal-Junta}).

To prove correctness, suppose that $v^\ast$ is of type $L+1$.  One
of the achievable triples that is listed by {\tt Construct-Achievable-Tails}
will be the $(A',B',C')$ triple that is achieved by the tail
$(w'_{L+1},\dots,w'_n)$ of the vector $w'=(w'_1,\dots,w'_n)$
whose existence is asserted by Lemma \ref{lem:1}.
By Lemma \ref{lem:dp}, {\tt Construct-Achievable-Tails} outputs
this $(A',B',C')$ along with a corresponding tail vector
$(w''_{L+1},\dots,w''_n)$ that achieves it;
by Lemma \ref{lem:2},
any combination $u=(u_1,\dots,u_L,w''_{L+1},\dots,w''_n)$
of a head vector with this tail vector will have $\obj(u) \geq
\Pr[u^{(H)} \cdot X^{(H)} \geq \theta - \mu' + \newad{\kappa \cdot \sqrt{\ln(200/\epsilon) \cdot A'}}
] - \eps/100.$
Lemma \ref{lem:1} ensures that there exists some head vector $w'^{(H)}$
that has $\sum_{i=1}^L w'_i \leq 1 - C' \kappa$ and
$\Pr[w'^{(H)} \cdot X^{(H)} \geq \theta - \mu' + \newad{\kappa \cdot \sqrt{\ln(200/\epsilon) \cdot A'}}] \geq
\opt - \eps/40$, so when {\tt Find-Optimal-Junta}
is called with input parameters $((p_1,\dots,p_L),\theta - B' \kappa
(\eps/(4n)) + \newad{\kappa \cdot \sqrt{\ln(200/\epsilon) \cdot A'}}
, 1 - C' \kappa)$,
by Theorem \ref{thm:L-junta} it
will construct a head $u^{(H)}=(u_1,\dots,u_L)$
with $u_1,\dots,u_L \geq 0$, $u_1 + \cdots + u_L \leq 1-C'\kappa$
which is such that
$\Pr[u^{(H)} \cdot X^{(H)} \geq \theta - \mu' + \newad{\kappa \cdot \sqrt{\ln(200/\epsilon) \cdot A'}}
] \geq
\opt - \eps/40$, and hence the resulting overall vector
$u = (u_1,\dots,u_L,w''_{L+1},\dots,w''_n)$
is a feasible solution which has
$\Pr[u \cdot X \geq \theta ] \geq \opt - 7\eps/200.$
This concludes the proof of Theorem \ref{thm:type-Lplus1}
(modulo the proofs of Lemmas \ref{lem:1}, \ref{lem:2}, and \ref{lem:dp}).
\subsection{Proof of Lemmas
\ref{lem:1}, \ref{lem:2}, and \ref{lem:dp}}
\subsubsection{Proof of Lemma \ref{lem:1}}
Recall from Equation (\ref{eq:L}) that
$
L=L(\eps,\gamma)=\min\{n, \Theta(1/(\eps^2 \gamma^2) \cdot (1/\gamma) \cdot
(\log 1/(\eps\gamma))\cdot(\log(1/\eps))\}$;
since we are in Case 2, we have that
$L=\Theta(1/(\eps^2 \gamma^2) \cdot (1/\gamma) \cdot
(\log 1/(\eps\gamma))\cdot(\log(1/\eps)).$
Since the $\eps\gamma/\new{200}$-critical index of $v^\ast$ is at least $L$,
Lemma 5.5 of \cite{DGJ+:10} gives us that there is a subsequence of weights
$v^{\ast}_{i_1},\dots,v^{\ast}_{i_s}$ with $i_s < L$ and $s \geq t/\gamma$,
where \newad{$t \eqdef \ln(200^2/\eps^3 \gamma)$}, such
that $v^{\ast}_{i_{j+1}} \le v^{\ast}_{i_j}/3$ for all $j=1,\dots,s-1$.
Given this, Claim 5.7 of \cite{DGJ+:10} implies that for any two
points $x \neq x' \in \{0,1\}^s$, we have
\begin{equation} \label{eq:spread}
\left|\sum_{\ell=1}^s v^{\ast}_{i_\ell} x_{i_\ell}- \sum_{\ell=1}^s
v^{\ast}_{i_\ell} x'_{i_\ell}
\right| \ge {\frac {v^{\ast}_{i_s}} 2}.
\end{equation}
(We note that both Lemma 5.5 and Claim 5.7 are simple results with proofs
of a few lines.)
Equation (\ref{eq:spread}) clearly implies that for every $\nu \in \R$
there is at most one $x \in \{0,1\}^s$ such that $\sum_{\ell=1}^s
v^{\ast}_{i_\ell} x_{i_\ell} = \nu$; recalling the definition of $\gamma$,
we further have that
$
\Pr_{(X_{i_1},\dots,X_{i_s}) \sim \mu_{p_{i_1}} \times \cdots \times
\mu_{p_{i_s}}}\left[
\sum_{\ell=1}^s v^{\ast}_{i_\ell} X_{i_\ell} = \nu \right]
\leq (1 - \gamma)^s
$
for every $\nu \in \R$.  Together with (\ref{eq:spread}), this gives
that for every $\nu \in \R$ and every integer $k \geq 0$, we have
\[
\Pr_{(X_{i_1},\dots,X_{i_s}) \sim \mu_{p_{i_1}} \times \cdots \times
\mu_{p_{i_s}}}\left[
\left|\sum_{\ell=1}^s v^{\ast}_{i_\ell} X_{i_\ell} -\nu \right| \le
kv^{\ast}_{i_s}/2 \right]
\leq (2k+1)(1 - \gamma)^s \leq (2k+1)e^{-t} = \newad{(2k+1)\eps^3 \gamma/200^2.}
\]
By independence, using Lemma \ref{lem:extension} we get that this
anti-concentration extends to the linear form over all of the first $L$
coordinates, and hence we get that for all $\nu \in \mathbb{R}$,
\begin{equation} \label{eq:first-L}
\Pr \left[
\left|(v^{\ast})^{(H)} X^{(H)} -\nu \right| \le
kv^{\ast}_{i_s}/2 \right]
\leq (2k+1)\eps^3 \gamma/200^2.
\end{equation}
\ignore{
and indeed to the linear form over all $n$ coordinates,
\begin{equation} \label{eq:all-n}
\Pr_{X \sim \calD_p} \left[
\left|\sum_{\ell=1}^n v^{\ast}_{\ell} X_{\ell} -\theta \right| \le
kv^{\ast}_{i_s}/2 \right]
\leq (2k+1)\eps/200^2.
\end{equation}
}
Now, recall that we are in Case~2 and hence
$\sum_{j > L} v^\ast_j \geq 1/(\new{(L+2)^{(L+2)/2}} + 1).$  Since $v^\ast_{i_s} \geq
v_j$ for all $j > L$, we have that
$v^\ast_{i_s} \geq 1/(n (\new{(L+2)^{(L+2)/2}} + 1))$.
Hence (\ref{eq:first-L}) yields that for all $\nu \in \mathbb{R}$,
\begin{equation} \label{eq:first-L2}
\Pr \left[
\left|(v^{\ast})^{(H)} \cdot X^{(H)} -\nu \right| \le
k/(2n(\new{(L+2)^{(L+2)/2}} + 1)) \right]
\leq \newad{ (2k+1)\eps^3 \gamma/200^2.}
\end{equation}

We now turn from analyzing the head of $v^\ast$ to analyzing the tail.
Recalling again that the $\eps\gamma/\new{200}$-critical
index of $v^\ast$ is greater
than $L$, another application of Lemma 5.5 of \cite{DGJ+:10} gives that
\newad{$\sigma_{L}^2(v^{\ast}) \eqdef \sum_{j > L} (v^\ast_j)^2 \leq
\new{200}^2(v^\ast_{i_s})^2/(\eps^2 \gamma^2)
.$}
The expected value of $(v^{\ast})^{(T)} \cdot X^{(T)}$ is
$\mu = \sum_{j > L} v^{\ast}_j p_j$; an additive Hoeffding bound gives that
for $r>0$,
\[
\Pr[|(v^{\ast})^{(T)} \cdot X^{(T)} - \mu| \geq r \cdot
\newad{\sigma_{L}(v^{\ast})}
]
\leq 2 e^{-r^2}.
\]
Fixing $r = \sqrt{\ln(200/\eps)}$, as a consequence of the above we get that
\[
\Pr[(v^{\ast})^{(T)} \cdot X^{(T)} \geq \mu + \sqrt{\ln(200/\eps)} \cdot
\newad{\sigma_{L}(v^{\ast})}
]
\leq 2 e^{-\ln(200/\eps)} = \eps/100.
\]
Since $\opt = \Pr[v^\ast \cdot X \geq \theta]$, we get that
$$
\Pr[(v^\ast)^{(H)} \cdot X^{(H)} \geq \theta - \mu -
\sqrt{\ln(200/\eps)} \cdot
\newad{\sigma_{L}(v^{\ast})}
] \geq
\opt - \eps/100.
$$
Combining with (\ref{eq:first-L}), we get that
\begin{equation}\label{eq:headalmost}
\Pr[(v^\ast)^{(H)} \cdot X^{(H)} \geq \theta - \mu +
\sqrt{\ln(200/\eps)} \cdot
\newad{\sigma_{L}(v^{\ast})}
] \geq
\opt - \eps/50.
\end{equation}
We are now ready to define the vector $w'$.  Its head coordinates are
the same as $v^\ast$, i.e., for $1 \leq i \leq L$ we have
$w'_i = v^\ast_i$.  We define the quantity
\[
\kappa = 1/(n^2 (\new{(L+2)^{(L+2)/2}}+1)).
\]
For $L+1 \leq i \leq n$, the tail coordinates $w'_i$ of $w'$
are obtained by rounding $v^\ast_i$ down to the nearest integer multiple
of $\kappa$.  It is immediate from this definition that part (1) of the
lemma holds, i.e., $w'$ has a $\kappa$-granular tail and there are non-negative
integers $A,B,C \leq M$ as specified in part (1).
Since $\sum_{i=1}^n w'_i \leq \sum_{i=1}^n v^\ast_i=1$, it must be the
case that $\sum_{i=1}^L w'_i \leq 1 - C' \kappa$, giving the first part of
Equation (\ref{eq:headgood}).

Write $\mu'$ to denote $\E[w'^{(T)} \cdot X^{(T)}] =
\sum_{j>L} w'_j p_j =  B' \kappa (\eps/(4n)).$ Define $\sigma_L^2(w) \eqdef \sum_{j>L} (w'_j)^2$.  \newad{By Hoeffding bound,
we get that  $(w')^{(T)} \cdot X^{(T)}$ is concentrated around its mean $\mu'$. More precisely,
\[
\Pr[|(w')^{(T)} \cdot X^{(T)} - \mu'| \geq
\sqrt{\ln(200/\eps)}  \cdot
\sigma_L(w)
]
\leq 2 e^{-\ln(200/\eps)} \leq \eps/100,
\]
giving part (2) of Lemma \ref{lem:1}.  Note that $\sigma_L^2(w) \le \sigma_L^2(v^{\ast}) \le \new{200}^2(v^\ast_{i_s})^2/(\eps^2 \gamma^2)$.  }

It remains only to establish the second part of
Equation (\ref{eq:headgood}).  Equation (\ref{eq:headalmost}) almost
gives us this -- it falls short only in having $\mu$ in place of
$\mu'$ in the lower bound for
$(w')^{(H)} \cdot X^{(H)}$ (recall that $(v^\ast)^{(H)}$ is identical
to $(w')^{(H)}$).  To get around this we use the anti-concentration
property of the head that was established in (\ref{eq:first-L2}) above.
Since $|\mu - \mu'| \leq n \kappa = 1/(n(\new{(L+2)^{(L+2)/2}}+1))$, equation
(\ref{eq:first-L2}) gives that
\[
\Pr[(w')^{(H)} \cdot X^{(H)} \in
[
\theta - \mu +
\sqrt{\ln(200/\eps)} \cdot
\newad{\sigma_L(w)} ,
\theta - \mu' +
\sqrt{\ln(200/\eps)} \cdot
\newad{\sigma_L(w)}
]
]
\leq \eps/200
\]
and combining this with (\ref{eq:headalmost}) gives
\[
\Pr[(w')^{(H)} \cdot X^{(H)} \geq \theta - \mu' +
\sqrt{\ln(200/\eps)} \cdot
\newad{\sigma_L(w)} ] \geq
\opt - 5\eps/200,
\]
the desired second part of
Equation (\ref{eq:headgood}).
This concludes the proof of Lemma \ref{lem:1}.

\subsubsection{Proof of Lemma \ref{lem:2}}
Since by assumption the tail of $w''$ achieves the triple $(A',B',C')$, we
have that the mean $\E[(w'')^{(T)} \cdot X^{(T)}]$ equals
$B'\kappa(\eps/(4n))$ and thus is the same as $\mu'$, the mean of
$(w')^{(T)} \cdot X^{(T)}]$.  Since $\sum_{j>L} (w''_j)^2=
\sum_{j>L}(w'_j)^2$, just as was the case for $w'$ we get that a Hoeffding
bound gives the desired concentration bound,
\[
\Pr[|w''^{(T)} \cdot X^{(T)} - \mu'| \geq \newad{\kappa \cdot \sqrt{\ln(200/\epsilon) \cdot A'}}
] \leq \eps/100.
\]
Thus, we have established Equation (\ref{eq:tailconc2}).

Equation (\ref{eq:tailconc2}) implies that $w''^{(T)} \cdot X^{(T)} <
\mu' -
\newad{\kappa \cdot \sqrt{\ln(200/\epsilon) \cdot A'}}
$ with probability at most $\eps/100.$  Since
$w''^{(H)} \cdot X^{(H)} \geq \theta - \mu' +
\newad{\kappa \cdot \sqrt{\ln(200/\epsilon) \cdot A'}}
$ and
$w''^{(T)} \cdot X^{(T)} \geq \mu' -
\newad{\kappa \cdot \sqrt{\ln(200/\epsilon) \cdot A'}}
$ together imply that
$w'' \cdot X \geq \theta$, we thus get
Equation (\ref{eq:wprimeprimegood}), and the lemma is proved.

\subsubsection{Proof of Lemma \ref{lem:dp}}

The {\tt Construct-Achievable-Tails} algorithm is based on dynamic programming.
Let $w=(w_{L+1},\dots,w_n)$ be a tail weight vector such that each
$w_i$ is a non-negative integer multiple of $\kappa$.
We define the quantities

\[
A(w) = \sum_{i>L} (w_i)^2/\kappa^2; \quad
B(w) = \sum_{i>L} w_i p_i / (\kappa \eps/(4n)); \quad
C(w) = \sum_{i>L} w_i/\kappa.
\]

Recalling Assumption (A2), we see that each of $A(w),B(w),
C(w)$ is a non-negative integer \newad{bounded by $\poly(1/\kappa)$}.

For each conceivable triple $(A,B,C)$ and for every $t \in \{L+1,\dots,n\}$,
we create a sub-problem in which the goal is to determine whether there is
a choice of weights $w_{L+1},\dots, w_t$
(each of which is a non-negative integer multiple of $\kappa$, with
all other weights $w_{t+1},\dots,w_n$ set to 0) such that $A(w)=A,B(w)=B,$
and $C(w)=C$.
Such a choice of weights $w_{L+1},\dots,w_t$ exists if and only if
there is a nonnegative-integer-multiple-of-$\kappa$ choice of $w_t$ for which
there is a nonnegative-integer-multiple-of-$\kappa$
choice of weights $w_{L+1},\dots,w_{t-1}$ (with all subsequent weights set to
0) such that $A(w)=A - (w_t)^2/\kappa$, $B(w)=B-w_tp_t/(\kappa \eps/(4n))$,
and $C(w)=C - w_t/\kappa.$

Thus, given the set of all triples that are achievable with only weights
$w_{L+1}, \dots,w_{t-1}$ allowed to be nonzero, it is straightforward to
efficiently
(in $\poly(1/\kappa)$ time)
identify the set of all triples that are achievable with only
weights $w_{L+1},\dots,w_t$ allowed to be nonzero.  This is because for a given
candidate (conceivable) triple $(A,B,C)$, one
can check over all possible values of $w_{t}$ (that are integer multiples
of $\kappa$ and upper bounded by 1) whether the triple
$(A - (w_t)^2/\kappa,B-w_tp_t/(\kappa \eps/(4n)),C - w_t/\kappa)$ is
achievable with only weights
$w_{L+1}, \dots,w_{t-1}$ allowed to be nonzero.  Since there are
only $O(1/\kappa)$ choices of the weight $w_{t}$ and the overall
number of sub-problems in this dynamic program is bounded by $\poly(n,1/\kappa)
= \poly(1/\kappa)$, the overall entire dynamic program runs
in $\poly(1/\kappa)$ time.  This concludes the proof of Lemma \ref{lem:dp}.

\fi


\section{Case 3:  $v^\ast$ is type $K$ for some $1 \leq K \leq L$}
\label{sec:type-i}
\ignore{
%
}
Recall that in Case 3 the optimal solution $v^\ast$ is not an $L$-junta,
so it satisfies
$\littlesum_{i=1}^L v^\ast_i \leq \new{(L+2)^{(L+2)/2}} \cdot \littlesum_{i=L+1}^n v^\ast_i $, 
and $c(v^\ast,\eps) = K$ for some $1 \leq K \leq L.$
For this case we prove the following theorem:
\begin{theorem} \label{thm:type-K}
There is a randomized algorithm {\tt Find-Near-Opt-Small-CI} with the
following performance guarantee:
The algorithm takes as input a vector of probabilities
$p=(p_1,\dots,p_n)$ satisfying
(A1) and (A2), a threshold value $0 < \theta < 1$, a value
$1 \leq K \leq L$, and a confidence parameter $0 < \delta < 1.$
It runs in $\new{\poly(n,2^{\poly(L)},\bit(\theta)) \cdot \log(1/\delta)}$ time
and outputs a set of $N \leq \poly(n,2^{\poly(L)})$ many feasible
solutions.
If $v^\ast$ is of type $K$ then with probability $1-\delta$
one of the feasible solutions $w$
that it outputs satisfies $\obj(w) \geq \opt - \eps/2.$
\end{theorem}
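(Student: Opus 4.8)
The plan is to exploit the structure of Case~3: since the $\eps\gamma/200$-critical index of $v^\ast$ equals $K\le L$, the vector $v^{\ast}=((v^\ast)^{(H)},(v^\ast)^{(T)})$ splits into a \emph{head} $(v^\ast)^{(H)}=(v^\ast_1,\dots,v^\ast_{K-1})$ on only $K-1<L$ coordinates and a \emph{tail} $(v^\ast)^{(T)}=(v^\ast_K,\dots,v^\ast_n)$ that is $\eps\gamma/200$-regular (note this split is at the critical index $K$, not at $L$ as in Case~2). Since $p_i(1-p_i)\ge\gamma/2$ for all $i$ by (A2), the Berry--Ess{\'e}en theorem (Corollary~\ref{cor:BE}) gives that $(v^\ast)^{(T)}\cdot X^{(T)}$ is within Kolmogorov distance $O(\eps)$ of the Gaussian $\mathcal N(\mu_T,\sigma_T^2)$, with $\mu_T=\sum_{i\ge K}v^\ast_i p_i$ and $\sigma_T^2=\sum_{i\ge K}(v^\ast_i)^2p_i(1-p_i)$; conditioning on the head, and replacing this Gaussian by a discretization $\widehat{\mathcal N}$ supported on $m=O(\sqrt{\log(1/\eps)}/\eps)$ atoms (round to a $\Theta(\eps\sigma_T)$-grid and truncate $\Theta(\sqrt{\log(1/\eps)})$ standard deviations out, at a further $O(\eps)$ cost in Kolmogorov distance), we get
\[
\opt=\obj(v^\ast)=\Pr\big[\langle (v^\ast)^{(H)},X^{(H)}\rangle+\mu_T+\widehat{\mathcal N}\ge\theta\big]\ \pm\ O(\eps).
\]
Hence the only information about the tail that matters is the triple $(\mu_T,\sigma_T^2,W_T)$, $W_T=\|(v^\ast)^{(T)}\|_1$ (the last entry being the head's remaining weight budget), and the only information about the head that matters is the $K-1\le L-1$ real numbers $(v^\ast_1,\dots,v^\ast_{K-1})$.

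The algorithm has three phases. \textbf{(1) Enumerate and realize tails.} For each triple $(\mu_T,\sigma_T^2,W_T)$ on a $\poly(n,1/\eps,1/\gamma)$-size grid, run a dynamic program in the style of {\tt Construct-Achievable-Tails} (Section~\ref{sec:type-Lplus1}), augmented to maintain regularity — e.g.\ first guess (a $\kappa$-granular upper bound on) the largest tail weight $w_{\max}$, admit only weights below it, and verify $\sum_{i\ge K}w_i^2\ge w_{\max}^2/(\eps\gamma/200)^2$ at the end — which either rejects the triple or certifies it is approximately achievable by a nonnegative $O(\eps\gamma)$-regular tail vector and outputs such a witness. \textbf{(2) Solve the head.} For each surviving triple and tail witness, consider the head objective $u^{(H)}\mapsto\Pr[\langle u^{(H)},X^{(H)}\rangle+\mu_T+\widehat{\mathcal N}\ge\theta]$ over $u^{(H)}\in\R^{K-1}_{\ge0}$, $\|u^{(H)}\|_1\le 1-W_T$. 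This function is piecewise constant — constant on each cell of the arrangement of the $\le 2^{K-1}m$ affine hyperplanes $\{u:\langle u,x\rangle=\theta-\mu_T-\sigma_T\hat g_j\}$ in $\R^{K-1}$, of which there are only $2^{O(L^2)}\cdot\poly(1/\eps)^{O(L)}=2^{\poly(1/\eps,1/\gamma)}$ — so we enumerate the cells, compute the (constant) objective on each, pick the best, and recover a representative $u^{(H)}$ obeying the feasibility constraints by a linear program ($O(2^L)$ constraints over $K-1$ variables, as in {\tt Find-Optimal-Junta}). \textbf{(3) Glue and estimate.} Concatenate each $(u^{(H)},\text{tail witness})$ into a feasible vector, collect all of them into ${\cal FEAS}$, and use random sampling ($O(\log n+\poly(L)+\log(1/\delta))$ draws per candidate, union-bounded) to estimate $\obj$ of each. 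For the ``correct'' choices — the triple of $v^\ast$'s tail and the cell containing $(v^\ast)^{(H)}$ — the displayed identity, a second application of Berry--Ess{\'e}en to the regular \emph{witness} tail (which shares $(\mu_T,\sigma_T^2,W_T)$ with $v^\ast$'s tail), and the closeness of the two discretized Gaussians together yield a feasible solution of value $\ge\opt-\eps/2$; hence ${\cal FEAS}$ contains such a solution with probability $\ge 1-\delta$.

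The part I expect to be delicate is not the head (the discretize-then-enumerate-cells step makes the head enumeration entirely $n$-free) but \textbf{the tail construction and the error accounting}. The dynamic program must output a tail that is genuinely regular enough for Berry--Ess{\'e}en to apply to \emph{it}, which forces the extra guessing of $w_{\max}$ and a careful argument that a $\kappa$-granular regular tail approximately matching $(\mu_T,\sigma_T^2,W_T)$ exists for $\kappa=\poly(1/n,2^{-\poly(L)})$ small enough that rounding $v^\ast$'s tail to $\kappa$-granularity perturbs $\mu_T$ by $\le\eps\sigma_T$; this is exactly where Theorem~\ref{thm:large-tail}, applied with $K=L$, is used, since it guarantees $W_T\ge 1/((L+2)^{(L+2)/2}+1)$ and hence a not-too-small $\sigma_T\ge\sqrt{\gamma/(2n)}\,W_T$ against which $\kappa$ (and the grid granularities) can be calibrated. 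The second delicate point is pure bookkeeping: four error sources — two Berry--Ess{\'e}en applications, the Gaussian discretization, and the grid granularity on $(\mu_T,\sigma_T^2,W_T)$, each controlled by the $1/\sigma_T$-Lipschitzness of the Gaussian cdf (hence again needing the $\sigma_T$ lower bound above) — must be tuned so that their sum stays below $\eps/2$. Everything else (the Berry--Ess{\'e}en reduction itself, the dynamic-programming template, the LP-plus-cell enumeration for the head, and the sampling-based estimation) follows the patterns already established in Cases~1 and~2.
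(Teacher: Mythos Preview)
Your proposal is correct and follows the same overall decomposition as the paper: split at the critical index $K$, apply Berry--Ess{\'e}en to the $\eps\gamma/200$-regular tail, enumerate candidate tails by a dynamic program that additionally enforces regularity, and then optimize over heads for each surviving tail. Your tail DP (guess the max weight, track sum-of-squares, verify regularity at the end) is exactly the paper's \texttt{Construct-Achievable-Regular-Tails} (Lemma~\ref{lem:CART}), which tracks quintuples $(A,B,C,D,E)$ with $D=\sum w_i^2/\kappa^2$ and $E=\max w_i/\kappa$ for the same purpose. Your $\sigma_T$ lower bound and its use to calibrate $\kappa$ match Lemmas~\ref{lem:closetogaussian}--\ref{lem:closetogranular}.

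The one genuine difference is in how the head is solved. The paper's \texttt{Find-Approximately-Best-Head} draws $m=\Theta((1/\eps)^2\log(1/\delta'))$ \emph{samples} from the witness tail's actual distribution, invokes the DKW inequality to get a finite-support proxy $R$, and then runs \texttt{Find-Best-Head}, which enumerates over $m$-tuples of $(K{-}1)$-variable halfspaces and solves an LP for each tuple. You instead \emph{deterministically} discretize the Gaussian $N(\mu_T,\sigma_T^2)$ onto $m=O(\sqrt{\log(1/\eps)}/\eps)$ atoms and enumerate cells of the resulting hyperplane arrangement. Structurally these are the same reduction (finite tail support $\Rightarrow$ enumerate halfspace-tuples/cells $\Rightarrow$ LP), but your version is derandomized and uses fewer atoms; in fact your phase~(3) sampling is extraneous for the theorem as stated, so your route actually yields a \emph{deterministic} \texttt{Find-Near-Opt-Small-CI}. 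The paper's DKW route is slightly more modular (it never re-invokes the Gaussian approximation for the head step), while yours exploits the CLT structure one more time to avoid randomness. One small correction: your ``as in \texttt{Find-Optimal-Junta}'' LP should really be ``as in \texttt{Find-Best-Head}'' --- with $m$ atoms you need an $m$-tuple of halfspaces, not a single one, so the LP has $O(m\cdot 2^K)$ constraints rather than $O(2^L)$.
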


\ifnum\shortsodaversion=1
This case uses a range of tools from probability (the 
Dvoretsky-Kiefer-Wolfowitz inequality, the Berry-Ess{\'e}en theorem, and Gaussian
anti-concentration) together with extensions of the {\tt Construct-Achievable-Tails}
and {\tt Find-Optimal-Junta} procedures described in previous sections.  Because of 
space constraints we omit it here (see the full version).

\else

\subsection{Useful probabilistic tools and notation.}

\paragraph{Kolmogorov distance.}  For $X,Y$ two real-valued random variables
we say the \emph{Kolmogorov distance} $\dK(X,Y)$ between $X$ and $Y$
is $\dK(X,Y) \eqdef \sup_{t \in \R}|\Pr[X \leq t] - \Pr[Y \leq t]|.$
\smallskip

\noindent{\bf Remark.} If $w$ is an optimal solution \new{of problem (P)} and the random variables
$w \cdot X, w' \cdot X$ have Kolmogorov distance at most $\eps$
then $\obj(w') \geq \opt - \eps.$

We recall the following useful elementary fact about Kolmogorov distance:
\begin{fact} \label{fact:dK-indep}
Let $X,Y,Z$ be real-valued random variables such that $X$ is independent
of $Y$ and independent of $Z$.  Then we have that $\dK(X+Y,X+Z) \leq
\dK(Y,Z).$
\end{fact}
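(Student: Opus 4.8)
The plan is to reduce the statement to a pointwise (in $t$) comparison by conditioning on the value of $X$ and exploiting independence. Since $\dK$ depends only on the one‑dimensional marginal laws of the random variables involved, it is harmless that $X, Y, Z$ need not be coupled in any particular way beyond the stated pairwise independences; in particular we never need joint independence of the triple.

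Concretely, first I would fix an arbitrary $t \in \R$ and write $g_Y(s) \eqdef \Pr[Y \le s]$ and $g_Z(s) \eqdef \Pr[Z \le s]$ for the two CDFs. Using that $X$ is independent of $Y$, the law of total probability gives $\Pr[X + Y \le t] = \E\bigl[\,\Pr[Y \le t - X \mid X]\,\bigr] = \E\bigl[ g_Y(t - X) \bigr]$, and likewise $\Pr[X + Z \le t] = \E\bigl[ g_Z(t - X) \bigr]$ using independence of $X$ and $Z$. Subtracting, applying Jensen's inequality to pull the absolute value inside the expectation, and then using the trivial bound $|g_Y(s) - g_Z(s)| \le \dK(Y,Z)$ valid for every real $s$, I obtain
\[
\bigl| \Pr[X + Y \le t] - \Pr[X + Z \le t] \bigr| \;\le\; \E\bigl[\, | g_Y(t - X) - g_Z(t - X) | \,\bigr] \;\le\; \dK(Y,Z).
\]
Since this holds for every $t$, taking the supremum over $t$ on the left‑hand side yields $\dK(X+Y, X+Z) \le \dK(Y,Z)$, which is exactly the claim.

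There is no serious obstacle here; the argument is entirely routine. The only points that merit a sentence of care are: (i) the justification of the identity $\Pr[X+Y \le t] = \E[g_Y(t-X)]$, which is an instance of Fubini/Tonelli applied to the indicator of a half‑line and uses \emph{only} the independence of the pair $(X,Y)$; and (ii) noting that the supremum defining $\dK$ ranges over all reals, so the bound $|g_Y(s) - g_Z(s)| \le \dK(Y,Z)$ may legitimately be applied at the random argument $s = t - X$ uniformly before taking expectations.
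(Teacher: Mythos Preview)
Your proof is correct. The paper itself does not supply a proof of this fact at all --- it simply states it as a ``useful elementary fact about Kolmogorov distance'' and moves on --- so there is nothing to compare against. Your argument (condition on $X$, use independence to write $\Pr[X+Y\le t]=\E[g_Y(t-X)]$, pull the absolute value inside the expectation, and bound pointwise by $\dK(Y,Z)$) is exactly the standard one-line justification, and your observation that only the two pairwise independences are used, not joint independence of the triple, is accurate and worth noting.
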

The \emph{Dvoretsky-Kiefer-Wolfowitz (DKW) inequality} is a
considerably more sophisticated fact about Kolmogorov distance that
will also be useful.
Given $m$ independent samples $t_1,\dots,t_m$ drawn from a real-valued
random variable $X$,
the {\em empirical distribution} $\wh{X}_m$ is defined as
the real-valued random variable which is uniform over the multiset
$\{t_1,\dots,t_m\}.$
The DKW inequality states that for $m=\Omega((1/\eps^2)\cdot \ln(1/\delta))$,
with probability $1-\delta$ the {empirical distribution} $\wh{X}_m$ will be
$\eps$-close to $p$ in Kolmogorov distance:
\begin{theorem}[\cite{DKW56,Massart90}] \label{thm:DKW}
For all $\eps>0$ and any real-valued random variable $X$, we have
$\Pr [ \dK(p, \wh{p}_m) > \eps ] \leq 2e^{-2m\eps^2}.$
\end{theorem}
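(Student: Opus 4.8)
The statement is the Dvoretzky--Kiefer--Wolfowitz inequality equipped with Massart's optimal prefactor, and the plan is to split the argument into (i) an elementary proof of a slightly weaker bound that already suffices for every use of the inequality in this paper, and (ii) an indication of the extra work needed to obtain the sharp constants $2$ and $2m\eps^2$. Let $F$ be the CDF of $X$ (the ``$p$'' in the statement) and $\wh F_m(t)=\frac1m\sum_{i=1}^m \mathbf 1[t_i\le t]$ the empirical CDF, so that $\dK(p,\wh p_m)=\sup_{t\in\R}|\wh F_m(t)-F(t)|$.

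\textbf{Step 1: a weaker bound by net plus union bound.} First I would fix $N=\lceil 2/\eps\rceil$ and pick generalized $(1/N)$-quantiles $x_1<\cdots<x_{N-1}$ of $F$, together with $x_0=-\infty$ and $x_N=+\infty$, so that every consecutive interval carries $F$-mass at most $1/N\le\eps/2$. Monotonicity of both $\wh F_m$ and $F$ then gives, after routine bookkeeping with left limits (this absorbs any atoms of $F$), that $\sup_t|\wh F_m(t)-F(t)|\le \Delta+\eps/2$, where $\Delta$ is the maximum of $|\wh F_m-F|$ over the $O(1/\eps)$ net points (and their left limits). For each such point $x$, the quantity $\wh F_m(x^\pm)$ is an average of $m$ i.i.d.\ $\{0,1\}$ variables with mean $F(x^\pm)$, so Hoeffding's inequality yields $\Pr[\,|\wh F_m(x^\pm)-F(x^\pm)|\ge\eps/2\,]\le 2e^{-m\eps^2/2}$. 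A union bound over the $O(1/\eps)$ events gives $\Pr[\dK(p,\wh p_m)>\eps]\le O(1/\eps)\cdot e^{-m\eps^2/2}$. This is already enough for the paper: taking $m=\Theta(\eps^{-2}\log(1/(\eps\delta)))$ makes the right-hand side at most $\delta$, which is all that the application requires.

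\textbf{Step 2: obtaining the sharp constants.} To recover exactly $2e^{-2m\eps^2}$ I would first reduce to $X$ uniform on $[0,1]$: $\dK$ is invariant under the generalized quantile transform, and atoms are removed by the standard randomization $t_i\mapsto F(t_i^-)+V_i(F(t_i)-F(t_i^-))$ with $V_i$ independent uniforms, which can only enlarge the sup. Then split $\{\sup_t(\wh F_m(t)-t)>\eps\}\cup\{\sup_t(t-\wh F_m(t))>\eps\}$; by symmetry it suffices to bound the one-sided probability by $e^{-2m\eps^2}$, and the factor $2$ in the statement is precisely this union bound. For uniform order statistics $U_{(1)}\le\cdots\le U_{(m)}$ one has the identity $\sup_t(\wh F_m(t)-t)=\max_{1\le k\le m}(k/m-U_{(k)})$ (the non-negativity caveat being immaterial since the relevant event is $\{\ge\eps\}$ with $\eps>0$), so the one-sided event is $\{\,U_{(k)}\le k/m-\eps\text{ for some }k\,\}$. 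Bounding its probability by $e^{-2m\eps^2}$ is Massart's theorem, which I would cite from \cite{DKW56,Massart90} rather than reproduce. The hard part is exactly this last estimate: the original DKW paper already yields $Ce^{-2m\eps^2}$ with an unspecified constant $C$, and Step 1 gives $O(1/\eps)e^{-m\eps^2/2}$ for free, but pinning down the \emph{optimal} exponent $2m\eps^2$ with constant $1$ requires a genuinely delicate argument --- either a Poissonization of the empirical process combined with an exponential supermartingale stopped at the first index $k$ with $U_{(k)}\le k/m-\eps$, followed by a careful optimization of the tilt, or Tak\'acs-type ballot/coincidence formulas for the law of $\max_k(k/m-U_{(k)})$ together with an induction on $m$. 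Since the paper invokes the inequality only to control a sample complexity, I would present Step 1 in the body and invoke \cite{DKW56,Massart90} for the stated constants.
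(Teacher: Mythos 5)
The paper does not prove Theorem~\ref{thm:DKW}; it states the Dvoretzky--Kiefer--Wolfowitz inequality with Massart's constant as a black box and cites \cite{DKW56,Massart90}. There is therefore no proof in the paper to compare against, and the right question is whether your argument is sound and whether its deferral to the literature is honest.

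It is. Your Step~1 is a correct, self-contained derivation of a weaker bound of the form $O(1/\eps)\,e^{-m\eps^2/2}$: the net at the $(1/N)$-quantile points (augmented with left limits to handle atoms), the monotonicity bookkeeping $\sup_t|\wh F_m-F|\le\Delta+\eps/2$, Hoeffding at each of the $O(1/\eps)$ net points, and a union bound are all standard and correctly assembled. You are also right that this weaker bound already suffices for the paper's only use of DKW, which is to get a sample-size bound $m=\Theta(\eps^{-2}\log(1/\delta))$; the sharp prefactor and exponent are not load-bearing here. Your Step~2 correctly reduces to the uniform case via the quantile/randomization transform, correctly identifies $\sup_t(\wh F_m(t)-t)=\max_k(k/m-U_{(k)})$ on the relevant event, correctly attributes the factor $2$ to a union bound over the two one-sided events, and correctly isolates the genuinely hard estimate $\Pr[\max_k(k/m-U_{(k)})\ge\eps]\le e^{-2m\eps^2}$ as Massart's contribution, deferred to the cited sources rather than reproved. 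That is the appropriate level of rigor for a result the paper itself only cites. One small remark for precision: Massart's sharp bound is proved for $\eps\ge\sqrt{\ln 2/(2m)}$, but below that threshold the right-hand side $2e^{-2m\eps^2}$ exceeds $1$ and the inequality is vacuously true, so the ``for all $\eps>0$'' in the statement is fine and requires no caveat.
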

\noindent We will also require a corollary of the Berry-Ess{\'e}en
theorem (see e.g.,~\cite{Feller}).  We begin by recalling the theorem:
\begin{theorem} \label{thm:be} (Berry-Ess{\'e}en)  Let
$X_1, \dots, X_n$ be a sequence of independent random variables
satisfying $\E[X_i] = 0$ for all $i$, ${\sum_i \E[X_i^2]} =
\sigma^2$, and $\sum_i \E[|X_i|^3] = \rho_3$.  Let $S = X_1 + \cdots
+ X_n$ and let $F$ denote the cumulative distribution
function (cdf) of $S$. Then
\[
\sup_x |F(x) - \Phi_\sigma(x)| \leq C\rho_3/\sigma^3,
\]
where $\Phi_\sigma$ is the cdf of a $N(0,\sigma^2)$
Gaussian random variable (with mean zero and variance $\sigma^2$),
and $C$ is a universal constant.
\cite{Shiganov:86} has shown that one can take $C = .7915$.
\end{theorem}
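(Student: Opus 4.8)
Theorem~\ref{thm:be} is the classical Berry--Ess{\'e}en theorem; since it is used in this paper only as a black box, any fixed universal constant suffices, and for the sharp value $C=0.7915$ I would simply cite~\cite{Feller,Shiganov:86}. The plan is to give the standard Fourier-analytic proof via Ess{\'e}en's smoothing inequality. First I would normalize: put $Y_i = X_i/\sigma$, so $\sum_i \E[Y_i^2]=1$ and $\E[Y_i]=0$; writing $\beta \eqdef \rho_3/\sigma^3 = \sum_i \E[|Y_i|^3]$ and $W \eqdef S/\sigma = \sum_i Y_i$, the statement reduces to $\sup_x|F_W(x) - \Phi(x)| \le C\beta$, where $\Phi=\Phi_1$ is the standard normal cdf and $F_W$ the cdf of $W$ (rescaling by $\sigma$ recovers the stated form). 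Since the left-hand side is always $\le 1$, we may assume $\beta$ below any fixed constant. I would record up front three elementary moment bounds, all from the monotonicity of $L^p$ norms ($(\E[Y_i^2])^{1/2}\le(\E[|Y_i|^3])^{1/3}$): $\max_i\E[Y_i^2]\le\beta^{2/3}$; $\sum_i(\E[Y_i^2])^2 \le \sum_i(\E[|Y_i|^3])^{4/3}\le\beta^{1/3}\sum_i\E[|Y_i|^3] = \beta^{4/3}$; and $\sum_i(\E[|Y_i|^3])^2 \le \beta^2$.

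Next I would invoke Ess{\'e}en's smoothing lemma: for any cdf $F$ and any $T>0$,
\[
\sup_x|F(x) - \Phi(x)| \;\le\; \frac1\pi\int_{-T}^{T}\left|\frac{\widehat F(t) - e^{-t^2/2}}{t}\right|dt \;+\; \frac{24}{\pi T\sqrt{2\pi}},
\]
proved by convolving $F-\Phi$ against a Fej{\'e}r-type kernel of bandwidth $T$ and using that $\Phi$ has density bounded by $(2\pi)^{-1/2}$. I would apply it to $F=F_W$, whose characteristic function is $\phi(t)\eqdef\prod_i\E[e^{itY_i}]$, taking $T=c/\beta$ for a small absolute constant $c$; then the second term is $O(\beta)$ and everything reduces to showing $\int_{-T}^{T}|\phi(t)-e^{-t^2/2}|/|t|\,dt = O(\beta)$.

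To control $\phi$ I would split $[-T,T]$ at $|t|=c'\beta^{-1/3}$. On the \emph{outer} range $c'\beta^{-1/3}\le|t|\le c/\beta$ I would use only a modulus bound: a third-order Taylor estimate applied to $|\E[e^{itY_i}]|^2=\E[e^{it(Y_i-Y_i')}]$ (with $Y_i'$ an independent copy) gives $|\E[e^{itY_i}]|^2 \le 1 - t^2\E[Y_i^2] + O(|t|^3\E[|Y_i|^3])$, hence $|\phi(t)|\le\exp(-t^2/2 + O(|t|^3\beta)) \le e^{-t^2/4}$ once $c$ is small; as $e^{-t^2/2}$ is also $\le e^{-t^2/4}$, this range contributes $\int_{|t|\ge c'\beta^{-1/3}}2e^{-t^2/4}/|t|\,dt$, which is super-exponentially small in $\beta^{-1/3}$ and in particular $O(\beta)$. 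On the \emph{inner} range $|t|\le c'\beta^{-1/3}$, each $|\E[e^{itY_i}]-1|\le\tfrac12 t^2\E[Y_i^2] + \tfrac16|t|^3\E[|Y_i|^3]$ is bounded well away from $1$, so the principal logarithm is defined and $\log\E[e^{itY_i}] = -\tfrac12 t^2\E[Y_i^2] + r_i(t)$ with $|r_i(t)| = O(|t|^3\E[|Y_i|^3] + t^4(\E[Y_i^2])^2 + t^6(\E[|Y_i|^3])^2)$ (third-order Taylor for $\E[e^{itY_i}]$ plus the $O(|z|^2)$ error of $\log(1+z)$). Summing over $i$ and using the three moment bounds, $|\sum_i r_i(t)| = O(|t|^3\beta + t^4\beta^{4/3} + t^6\beta^2) = O(|t|^3\beta)$ on the inner range, which is moreover $O(c'^3)$ hence small; so $\phi(t)=e^{-t^2/2}e^{\sum_i r_i(t)}$ with $|\phi(t)-e^{-t^2/2}| = O(|t|^3\beta)\,e^{-t^2/2}$, and $\int_{|t|\le c'\beta^{-1/3}}|\phi(t)-e^{-t^2/2}|/|t|\,dt \le O(\beta)\int_\R t^2 e^{-t^2/2}\,dt = O(\beta)$. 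Combining the two ranges with the smoothing lemma yields $\sup_x|F_W(x)-\Phi(x)| = O(\beta)$, i.e.\ the claim with a universal constant.

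\textbf{Expected main obstacle.} The delicate point is making the bound on $\phi(t)$ uniform over the whole range $|t|\le c/\beta$ while losing only $O(\beta)$, not $O(\beta^{1/3})$: one cannot hold each factor $\E[e^{itY_i}]$ near $1$ past $|t|\sim\beta^{-1/3}$, so the logarithm argument is confined to the inner range and the outer range must be handled purely through $|\phi(t)|\le e^{-t^2/4}$; and for the inner-range remainder to telescope down to $O(|t|^3\beta)$ one genuinely needs the sharpened sums $\sum_i(\E[Y_i^2])^2\le\beta^{4/3}$ and $\sum_i(\E[|Y_i|^3])^2\le\beta^2$ rather than the cruder $\beta^{2/3}$-type estimates. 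Obtaining Shiganov's optimal constant requires a much more careful smoothing kernel and sharper Taylor remainders, which I would not reproduce. A fully self-contained alternative is Stein's method: solve $g'(x)-xg(x)=\mathbf{1}_{(-\infty,z]}(x)-\Phi(z)$, bound $\|g\|_\infty,\|g'\|_\infty = O(1)$, and estimate $\E[g'(W)-Wg(W)]$ via the leave-one-out decomposition $W=W^{(i)}+Y_i$ and a second-order Taylor expansion, again giving $O(\beta)$ with an explicit, non-optimal constant.
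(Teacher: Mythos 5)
The paper does not prove Theorem~\ref{thm:be}: it is stated as the classical Berry--Ess{\'e}en theorem and cited as a black box (to~\cite{Feller} for the statement and~\cite{Shiganov:86} for the sharp constant), which is exactly the treatment you also recommend. Your Ess{\'e}en-smoothing sketch is the standard Fourier-analytic proof of this classical result and is correct in outline: the normalization, the three Lyapunov-type moment bounds $\max_i\E[Y_i^2]\le\beta^{2/3}$, $\sum_i(\E[Y_i^2])^2\le\beta^{4/3}$, $\sum_i(\E[|Y_i|^3])^2\le\beta^2$, the choice $T\asymp\beta^{-1}$, and the split at $|t|\asymp\beta^{-1/3}$ (so that each factor $\E[e^{itY_i}]$ stays bounded away from zero on the inner range, while the outer range is killed by the modulus bound $|\phi(t)|\le e^{-t^2/4}$) are precisely the ingredients of the textbook argument, and your identification of the sharpened moment sums as the point where the naive $O(\beta^{1/3})$ would otherwise appear is the right diagnosis of where the delicacy lies. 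You correctly flag that Shiganov's optimal constant requires a more careful kernel and is out of scope, and that Stein's method is a valid elementary alternative; none of this is needed for the paper, which uses the theorem only through Corollary~\ref{cor:BE} with an unspecified universal constant.
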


\begin{corollary}\label{cor:BE}
Let $X=(X_1, \dots, X_n) \sim \calD_p$ and suppose that
$\min_{i \in [n]}\{p_i,1-p_i\} \geq \gamma > 0.$
Let $w \in \R^n$ be $\tau$-regular.
Let $Z$ be the random variable $w \cdot X$ and define
$\mu = \E[w \cdot X] = \sum_{i=1}^n w_i p_i$,
$\sigma^2 = \Var[w \cdot X] = \sum_{i=1}^n w_i^2 \cdot p_i(1-p_i)$. Then
$\dK(Z,N(\mu,\sigma^2)) \leq \eta$ where $\eta = \tau/\gamma$.
\end{corollary}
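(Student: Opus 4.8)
The plan is to reduce directly to the Berry--Ess{\'e}en theorem (Theorem~\ref{thm:be}) applied to the centered summands $Y_i \eqdef w_i(X_i - p_i)$. These are independent, satisfy $\E[Y_i]=0$, and their sum is $S \eqdef w \cdot X - \mu = Z - \mu$, whose cdf is a translate by $\mu$ of the cdf of $Z$. Since Kolmogorov distance is translation-invariant (shifting both arguments by $\mu$), it suffices to bound $\dK(S, N(0,\sigma^2))$. To invoke Theorem~\ref{thm:be} we need $\sum_i \E[Y_i^2] = \sigma^2$, which is immediate since $\E[(X_i - p_i)^2] = p_i(1-p_i)$, and then the theorem gives $\dK(S, N(0,\sigma^2)) \le C\rho_3/\sigma^3$ with $\rho_3 = \sum_i \E[|Y_i|^3]$. (We may assume $w \not\equiv 0$; if $w \equiv 0$ then both $Z$ and $N(\mu,\sigma^2)$ are the point mass at $0$ and the claim is trivial.)

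The first substantive step is to bound the third absolute moment. Since $|X_i - p_i| \le 1$ always, we have $\E[|X_i - p_i|^3] \le \E[(X_i-p_i)^2] = p_i(1-p_i)$, so
$\rho_3 = \sum_i |w_i|^3\, \E[|X_i - p_i|^3] \le \bigl(\max_j |w_j|\bigr)\sum_i w_i^2 p_i(1-p_i) = \bigl(\max_j |w_j|\bigr)\,\sigma^2$, and therefore $C\rho_3/\sigma^3 \le C\bigl(\max_j |w_j|\bigr)/\sigma$.

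The second step is to relate $\max_j |w_j|$ to $\sigma$ using $\tau$-regularity and the lower bound on the biases. Since $\gamma = \min_i\{p_i, 1-p_i\} \le 1/2$, each $p_i$ lies in $[\gamma, 1-\gamma]$, and $t \mapsto t(1-t)$ is minimized on that interval at its endpoints, so $p_i(1-p_i) \ge \gamma(1-\gamma) \ge \gamma/2$. Hence $\sigma^2 = \sum_i w_i^2 p_i(1-p_i) \ge (\gamma/2)\|w\|_2^2$. Combining this with the regularity hypothesis $\max_j |w_j| \le \tau\|w\|_2$ yields $\max_j |w_j|/\sigma \le \tau\|w\|_2 / (\sqrt{\gamma/2}\,\|w\|_2) = \tau\sqrt{2/\gamma}$. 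Using $\gamma \le 1/2$ once more we have $\sqrt{2/\gamma} \le 1/\gamma$, and since $C = 0.7915 < 1$ we conclude $\dK(Z, N(\mu,\sigma^2)) = \dK(S, N(0,\sigma^2)) \le C\tau\sqrt{2/\gamma} \le \tau/\gamma = \eta$.

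I do not expect a genuine obstacle here: this is a routine instantiation of Berry--Ess{\'e}en, and the only points needing (minor) care are the translation-invariance reduction, the crude third-moment bound via $|X_i - p_i| \le 1$, and tracking constants so that $C\tau\sqrt{2/\gamma}$ collapses cleanly to the stated $\tau/\gamma$ — all of which follow from $\gamma \le 1/2$ and $C < 1$.
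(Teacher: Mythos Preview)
Your proposal is correct and follows essentially the same route as the paper: apply Berry--Ess{\'e}en to the centered summands $Y_i = w_i(X_i - p_i)$, bound $\rho_3 \le (\max_i |w_i|)\,\sigma^2$, and then control $\max_i |w_i|/\sigma$ via $\tau$-regularity and the lower bound on $p_i(1-p_i)$. The only cosmetic difference is that the paper uses the slightly cruder bound $p_i(1-p_i) \ge \gamma^2$ (from $p_i \ge \gamma$ and $1-p_i \ge \gamma$), which yields $\sigma \ge \gamma\|w\|_2$ and hence $\tau/\gamma$ directly, whereas you use the tighter $p_i(1-p_i) \ge \gamma/2$ and then recover $\tau/\gamma$ via $C<1$ and $\gamma \le 1/2$; both are fine.
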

\begin{proof}
Define the random variable $Y_i = w_i(X_i - p_i)$, so $\E[Y_i]=0.$
It suffices to show that the random variable
$Y = \sum_{i=1}^n w_i Y_i$ has $\dK(Y,N(0,\sigma^2)).$
We have
$\sum_i \E[Y_i^2] = \sigma^2 = \sum_{i=1}^n w_i^2 p_i(1-p_i)$
and
$$
\mathbf{E} [|y_i|^3] =   w_i^3 \left( p_i \cdot (1-p_i)^3 + (1-p_i)
\cdot (p_i)^3 \right) = w_i^3 p_i(1-p_i) \cdot (p_i^2 + (1-p_i)^2), \text{~so}
$$
$$
\sum_{i=1}^n \E[|Y_i|^3] = \sum_{i=1}^n w_i^3 p_i(1-p_i)
(p_i^2 + (1-p_i)^2)
\leq \sum_{i=1}^n w_i^3 p_i(1-p_i).
$$
The Berry-Ess{\'e}en theorem thus gives
\begin{eqnarray*}
\dK(Y,N(0,\sigma^2)) &\leq&
{\frac { \sum_{i=1}^n w_i^3 p_i(1-p_i)} {\left(
\sum_{i=1}^n w_i^2 p_i(1-p_i)\right)^{3/2}}}
\leq \max_{i=1}^n |w_i| \cdot
{\frac { \sum_{i=1}^n w_i^2 p_i(1-p_i)} {\left(
\sum_{i=1}^n w_i^2 p_i(1-p_i)\right)^{3/2}}}=
\max_{i=1}^n |w_i| \cdot {\frac 1 \sigma}.
\end{eqnarray*}
Recalling that (by regularity) we have $\max_i w_i \leq \tau \sqrt{\sum_i
w_i^2}$, and that by definition of $\gamma$ and $\sigma$ we have $\sigma \geq
\gamma \sqrt{\sum_i w_i^2}$, we get that
$\max_{i=1}^n |w_i| \cdot {\frac 1 \sigma} \leq \tau/\gamma$ as desired.
\end{proof}

Finally, we recall the well-known fact that an $N(\mu,\sigma^2)$ Gaussian
is $\eps$-anti-concentrated at radius $\eps \sigma$ (this follows
directly from the fact that the pdf of an $N(\mu,\sigma^2)$ Gaussian
is given by ${\frac 1 {\sigma \sqrt{2 \pi}}} \exp\left(
-{\frac {(x-\mu)^2}{2 \sigma^2}}\right)$).

\medskip

\noindent {\bf Notation.}
In this section our analysis will deal separately with the coordinates
$1,\dots,K$ and the coordinates $K+1,\dots,n$, so we use the following notational conventions.
For an $n$-dimensional vector $w \in \R^n$, in this section
we refer to $(w_1,\dots,w_{K-1})$ as the ``head''
of $w$ and we write $w^{(H)}$ to denote this vector; similarly
we write $w^{(T)}$ to denote the ``tail'' $(w_{K},\dots,w_n)$ of $w$.
We sometimes refer to a vector in $\R^{K-1}$ as a ``head vector'' and to a
vector in $\R^{n-K+1}$ as a ``tail vector.''
In a random variable $w^{(H)} \cdot X^{(H)}$ the randomness is over the
draw of $X^{(H)} \sim \bigotimes_{i=1}^{K-1} \mu_{p_i}$,
and similarly for the random variable $w^{(T)} \cdot X^{(T)}$
the randomness is over the draw of $X^{(T)}
\sim \bigotimes_{i=K}^n \mu_{p_i}.$

We additionally modify some of the terminology from Section
\ref{sec:type-Lplus1} dealing with granular vectors and achievable
triples.  
\ifnum\shortsodaversion=1
Fix $\kappa = \poly(1/n,1/2^{\poly(L)})$.
\else
Fix $\kappa = \poly(1/n,1/2^{\poly(L)})$
(we give a more precise value of $\kappa$ later).
\fi
We say that a vector $w=(w_1,\dots,w_n)
\in \R^n$ has a \emph{$\kappa$-granular tail} if each coordinate
$w_i$, $K \leq i \leq n$, is an integer multiple of $\kappa$.
It is easy to see that for any vector $w \in \R^{n}_{\leq 0}$
with $\sum_{i=1}^n w_i \leq 1$ that has a $\kappa$-granular
tail, for $M \eqdef \poly(1/\kappa)$
there must exist non-negative integers $A,B,C \leq M$ such that
$\E[w^{(T)} \cdot X^{(T)}] =
\sum_{i=K}^n w_i p_i = A \kappa(\eps/(4n))$,
$\Var[w^{(T)} \cdot X^{(T)}] = \sum_{i=K}^n w_i^2 p_i (1-p_i) = B \kappa^2
(\eps/(4n))^2$, and
$\sum_{i=K}^n w_i = C' \kappa.$
We say that a triple of non-negative
integers $(A,B,C)$ with $A,B,C \leq M$ is a \emph{conceivable} triple.
We say that a conceivable triple
$(A,B,C)$ is \emph{$\eps'$-regular achievable} if there exists an
$\eps'$-regular
vector $u^{(T)}=(u_{K+1},\dots,u_n) \in \R^{n-K+1}_{\geq 0}$ whose coordinates
are non-negative integer multiples of $\kappa$ such that
$\E[u^{(T)} \cdot X^{(T)}]  = A \kappa(\eps/(4n))$,
$\Var[u^{(T)} \cdot X^{(T)}] = B \kappa^2 (\eps/(4n))^2$, and
$\sum_{i=K}^n u_i = C \kappa,$
and we
say that such a vector $(u_{L+1},\dots,u_n)$ \emph{achieves} the
triple $(A,B,C).$

\subsection{The algorithm and an intuitive explanation of its performance.}


\begin{framed}
\noindent {\bf Algorithm
{\tt Find-Near-Opt-Small-CI}:}

\medskip

\noindent {\bf Input:}  probability vector $p=(p_1,\dots,p_n)$ satisfying
(A1) and (A2); parameter $0 < \theta < 1$; parameter $1 \leq K \leq L$;
confidence parameter $0 < \delta < 1$

\noindent {\bf Output:}  if $v^\ast$ is type $K$,
a set ${\cal FEAS}$ of feasible solutions $w$ such that
one of them satisfies $\obj(w) \geq \opt-\eps/2$

\begin{enumerate}

\item Run Algorithm {\tt Construct-Achievable-Regular-Tails}$(\eps\gamma
/\new{100})$
to obtain a list ${\cal T}$ of all triples $(A,B,C)$ that are achieved
by some $\eps\gamma/\new{100}$-regular tail vector and, and, for each one,
an $\eps\gamma/\new{100}$-regular tail vector $u=(u_{L+1},\dots,u_n)$ that achieves it.

\item For each triple $(A,B,C)$ in ${\cal T}$ and
its associated tail vector $u=(u_{K},\dots,u_n)$,

\begin{itemize}

\item Run {\tt Find-Approximately-Best-Head}$(u_K,\dots,u_n,\eps/200,\delta/(2|{\cal T}|))$ to obtain a head
vector $(u_1,\dots,u_{K-1})$

\item Add the concatenated vector $(u_1,\dots,u_{K-1},u_K,\dots,u_n)$
to the set ${\cal FEAS}$ (initially empty) of feasible solutions
that will be returned.

\end{itemize}

\item Return the set ${\cal FEAS}$
of feasible solutions constructed as described
above.

\end{enumerate}

\end{framed}


Similar to Case 2, the high level idea of this case is to decouple
the problem of finding a good solution into two pieces, namely
finding a good tail and finding a good head.  However,
in Case 2 the anti-concentration of the head random variable
(see Equation  (\ref{eq:first-L2})) played an essential role;
in contrast, here in Case 3 the fact that the tail random variable is close to a
Gaussian will play the key role.  At a high level,
the analysis for this case proceeds as follows.

First, using the facts that the vector $(v^\ast_{K},\dots,v^\ast_n)$
is $\eps\gamma/\new{200}$-regular and that
$\littlesum_{i=1}^L v^\ast_i \leq \new{(L+2)^{(L+2)/2}} \cdot \littlesum_{i=L+1}^n v^\ast_i$,
 we get that the tail random variable
$(v^\ast)^{(T)} \cdot X^{(T)}$
is $O(\eps)$-close to a Gaussian $N(\mu,\sigma^2)$
in Kolmogorov distance, where the variance $\sigma^2$ is ``not too small''
(see Lemma \ref{lem:closetogaussian}).
Next, we argue that for any head vector $(w')^{(H)}=
(w'_1,\dots,w'_{K-1})$, there exists a tail vector
$(w')^{(T)}=(w'_K,\dots,w'_n)$, obtained by rounding the
tail coordinates $v^\ast_K,\dots,v^\ast_n$ down to some not-too-small
granularity $\kappa$, which is ``nice'' (i.e., regular and with
not-too-small variance) and which
gives a solution of
almost equal quality to what would be obtained by having the
actual $(v^\ast_K,\dots,v^\ast_n)$ as the tail weights
(see Lemma \ref{lem:closetogranular}).
We then strengthen this by showing that for any head vector,
\emph{any} tail vector which is regular and has the right mean and
variance similarly gives a solution of
almost equal quality to what would be obtained by having the
actual $(v^\ast_K,\dots,v^\ast_n)$ as the tail weights
(see Lemma \ref{lem:anytail}).
This motivates the
{\tt Construct-Achievable-Regular-Tails} procedure (called in Step 1); it
uses dynamic programming to efficiently search across all conceivable triples
and identify precisely those that are achieved by some
$\eps\gamma/100$-regular $\kappa$-granular tail vector (and for each achievable
triple, identify a tail vector $(u_{K},\dots,u_n)$ that achieves it).

Intuitively, at this point the algorithm has identified a polynomial-sized
collection of tail vectors one of which ``is good'' (does almost
as well as the optimal tail vector $(v^\ast_K,\dots,v^\ast_n)$
if it were paired with the optimal head vector).  It remains to show
that it is possible to find a high-quality head vector and that
combining such a head vector with this ``good''
tail vector yields an overall high-quality solution.
We do this, and conclude the proof of Theorem \ref{thm:type-K},
in Section ~\ref{sec:case3-head}.

\ignore{

}

\ignore{


}

\subsection{Good tails exist and can be found efficiently:
Proofs of Lemmas \ref{lem:closetogaussian} -- \ref{lem:anytail}
and analysis of {\tt Construct-Achievable-Regular-Tails}}

Let
\begin{equation} \label{eq:mudef}
\mu \eqdef \E[
(v^\ast)^{(T)} \cdot X^{(T)}] = \sum_{i=K}^n v^\ast_i p_i
\quad \text{and} \quad
\sigma^2 \eqdef \Var[ (v^\ast)^{(T)} \cdot X^{(T)}] =
\sum_{i=K}^n (v^\ast_i)^2 p_i(1-p_i)
.
\end{equation}

\begin{lemma} \label{lem:closetogaussian}
Suppose $v^\ast$ is type $K$.  Then $\dK(
(v^\ast)^{(T)} \cdot X^{(T)},N(\mu,\sigma^2)) \leq \eps/\new{200}$, and
$\sigma \geq {\frac \gamma {(\new{(L+2)^{(L+2)/2}} + 1) n}}.$
\end{lemma}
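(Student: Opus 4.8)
The plan is to obtain the Kolmogorov bound directly from Corollary~\ref{cor:BE}, and to lower-bound $\sigma$ using the non-junta property of $v^\ast$ together with the monotonicity $v^\ast_1 \ge \cdots \ge v^\ast_n \ge 0$.

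First I would note that, since $v^\ast$ is of type $K$, the $\eps\gamma/200$-critical index of $v^\ast$ equals $K$, so $v^\ast_K \le (\eps\gamma/200)\cdot \sqrt{\sum_{i=K}^n (v^\ast_i)^2}$. Because the coordinates are non-increasing and non-negative, $\max_{K \le i \le n} |v^\ast_i| = v^\ast_K$, and hence the tail vector $(v^\ast_K,\dots,v^\ast_n)$ is $(\eps\gamma/200)$-regular by definition. Since $\min_{i\in[n]}\{p_i,1-p_i\} \ge \gamma$ by the definition of $\gamma$ together with (A1), Corollary~\ref{cor:BE}, applied to the linear form $(v^\ast)^{(T)}\cdot X^{(T)}$ with $\tau = \eps\gamma/200$, yields $\dK\big((v^\ast)^{(T)}\cdot X^{(T)},\, N(\mu,\sigma^2)\big) \le \tau/\gamma = \eps/200$, which is the first claim.

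For the lower bound on $\sigma$ I would argue as follows. Since $v^\ast$ is type $K$ it is not an $L$-junta, so $\sum_{i=1}^L v^\ast_i \le (L+2)^{(L+2)/2}\sum_{i=L+1}^n v^\ast_i$; combined with $\sum_{i=1}^n v^\ast_i = 1$ this gives $\sum_{i=L+1}^n v^\ast_i \ge 1/((L+2)^{(L+2)/2}+1)$. As the $v^\ast_i$ are non-increasing and there are at most $n$ tail indices, $v^\ast_{L+1} \ge \frac1n \sum_{i=L+1}^n v^\ast_i \ge \frac{1}{n((L+2)^{(L+2)/2}+1)}$. Since we are not in Case~1 we have $L < n$, and since $K \le L$ the index $L+1$ lies in $\{K,\dots,n\}$, so
\[
\sigma^2 = \sum_{i=K}^n (v^\ast_i)^2 p_i(1-p_i) \;\ge\; (v^\ast_{L+1})^2 \, p_{L+1}(1-p_{L+1}) \;\ge\; \gamma^2 (v^\ast_{L+1})^2,
\]
where the last inequality uses $p_i \in [\gamma, 1-\gamma]$ and $\gamma \le 1/2$, so that $p_i(1-p_i) \ge \gamma(1-\gamma) \ge \gamma^2$. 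Taking square roots gives $\sigma \ge \gamma\, v^\ast_{L+1} \ge \gamma/(n((L+2)^{(L+2)/2}+1))$, as claimed.

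I do not anticipate a genuine obstacle here; the two points that require a little care are (i) confirming that the critical-index condition at index $K$, combined with $v^\ast$ being sorted, is exactly what certifies $(v^\ast_K,\dots,v^\ast_n)$ to be $(\eps\gamma/200)$-regular, so Corollary~\ref{cor:BE} applies with precisely the parameter that produces the bound $\eps/200$; and (ii) verifying $\gamma \le 1/2$ — which holds because $\gamma = \min\{p_n, 1-p_1\}$ and $p_1 \ge p_n$ force at least one of $p_n$, $1-p_1$ to be at most $1/2$ — so that the bound $p_i(1-p_i) \ge \gamma^2$ is legitimate.
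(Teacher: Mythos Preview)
Your proof is correct and follows the paper's approach for the Kolmogorov-distance bound (regularity of the tail at index $K$ plus Corollary~\ref{cor:BE}). For the lower bound on $\sigma$ you take a slightly different but equally valid route: the paper lower-bounds $\sum_{i\ge K}v^\ast_i$ and then applies Cauchy--Schwarz to pass from the $\ell_1$ to the $\ell_2$ norm of the tail, whereas you use monotonicity to extract a single large coordinate $v^\ast_{L+1}\ge \frac{1}{n}\sum_{i>L}v^\ast_i$ and bound $\sigma$ by that term alone; both arguments start from the same ``heavy tail'' inequality and yield the same bound, so the difference is cosmetic.
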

\begin{proof}
Since $v^\ast$ is type $K$, we have that $(v^\ast)^{(T)}$ is
$\new{\eps\gamma/200}$-regular, and hence
Corollary \ref{cor:BE} gives that $\dK
((v^\ast)^{(T)} \cdot X^{(T)},N(\mu,\sigma^2)) \leq \eps/200.$

For the lower bound on $\sigma$, we observe that since $K \leq L$,
$\littlesum_{i=1}^L v^\ast_i \leq \new{(L+2)^{(L+2)/2}}  \littlesum_{i=L+1}^n v^\ast_{i}$, and $\littlesum_{i=1}^n v^\ast_i=1$, we have
\[
v^\ast_K + \cdots + v^\ast_n \geq v^\ast_{L+1} + \cdots + v^\ast_n
\geq {\frac 1 {(\new{(L+2)^{(L+2)/2}}  + 1)}}.
\]
Hence Cauchy-Schwarz implies that
\[
\sqrt{\sum_{i=K}^n (v^\ast_i)^2}
\geq {\frac 1 {(\new{(L+2)^{(L+2)/2}}  + 1) (n-K)}}
\geq {\frac 1 {(\new{(L+2)^{(L+2)/2}}  + 1) n}}\]  
so
\[\sigma = \sqrt{\sum_{i=K}^n (v^\ast_i)^2 p_i(1-p_i)} \geq
{\frac \gamma {(\new{(L+2)^{(L+2)/2}}  + 1) n}}.
\]
\end{proof}

We now define the value of $\kappa$ to be
\[
\kappa = \new{{\frac {\eps \gamma^2}{200(\new{(L+2)^{(L+2)/2}} +1)^2 n^3}}}.
\]

\begin{lemma} \label{lem:closetogranular}
As above suppose $v^\ast$ is type $K$.  Let $w' \in\R^n_{\geq 0}$ be
a feasible solution which is such that for $K \leq i \leq n$, the value
$w'_i$ is obtained from $v^\ast_i$ by rounding down to the nearest
integer multiple of $\kappa$.
Then
\begin{enumerate}
\item The vector $(w')^{(T)}=(w'_K,\dots,w'_n)$ is
$\eps\gamma/\new{100}$-regular;

\item The variance $(\sigma')^2 \eqdef
\Var[(w')^{(T)} \cdot X^{(T)}]$ is at least
${\frac 1 2} \sigma^2 \geq
{\frac 1 2} \cdot {\frac {\gamma^2} {(\new{(L+2)^{(L+2)/2}} +1)^2n^2}}$; and

\item $\obj(w') \geq \obj(w'_1,\dots,w'_{K-1},v^\ast_K,
\dots,v^\ast_n) - \eps/40.$

\end{enumerate}

\end{lemma}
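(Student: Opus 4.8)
The plan is to dispatch the three parts in turn; each reduces to bookkeeping about how far the coordinatewise rounding moves things. Throughout, recall that for $K \le i \le n$ the value $w'_i$ is $v^\ast_i$ rounded \emph{down} to the nearest multiple of $\kappa$, so $0 \le v^\ast_i - w'_i < \kappa$; since there are at most $n$ tail coordinates, $\|(v^\ast)^{(T)} - (w')^{(T)}\|_2 < \sqrt{n}\,\kappa$ and $\|(v^\ast)^{(T)} - (w')^{(T)}\|_1 < n\kappa$, and moreover $0 \le w'_i \le v^\ast_i$ coordinatewise. For part~(1), I would start from the lower bound $\|(v^\ast)^{(T)}\|_2 \ge \frac{1}{((L+2)^{(L+2)/2}+1)n}$, which is exactly the Cauchy--Schwarz step in the proof of Lemma~\ref{lem:closetogaussian}. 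The value of $\kappa$ is chosen so that $\sqrt{n}\,\kappa \le \frac12 \|(v^\ast)^{(T)}\|_2$ — this is precisely the role of the extra factors of $(L+2)^{(L+2)/2}+1$ and $n^3$ in $\kappa$ — so the reverse triangle inequality gives $\|(w')^{(T)}\|_2 \ge \frac12\|(v^\ast)^{(T)}\|_2$. Since $(v^\ast)^{(T)}$ is $\frac{\eps\gamma}{200}$-regular (because $v^\ast$ is type $K$), we get $\max_{i\ge K}|w'_i| \le \max_{i\ge K}v^\ast_i \le \frac{\eps\gamma}{200}\|(v^\ast)^{(T)}\|_2 \le \frac{\eps\gamma}{100}\|(w')^{(T)}\|_2$, which is part~(1).

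For part~(2), I would write
$$\sigma^2 - (\sigma')^2 \;=\; \sum_{i=K}^n\big((v^\ast_i)^2 - (w'_i)^2\big)\,p_i(1-p_i) \;\le\; \frac14\Big(\|(v^\ast)^{(T)}\|_2^2 - \|(w')^{(T)}\|_2^2\Big),$$
using $p_i(1-p_i)\le\frac14$ and $w'_i\le v^\ast_i$. Factoring the difference of squares and bounding one factor by $\sqrt n\,\kappa$ (reverse triangle inequality) and the other by $2\|(v^\ast)^{(T)}\|_2$ yields $\sigma^2 - (\sigma')^2 \le \frac{\sqrt n\,\kappa}{2}\,\|(v^\ast)^{(T)}\|_2$. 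Since $p_i(1-p_i)\ge\gamma^2$ (by (A1) and the definition of $\gamma$) we have $\sigma^2 \ge \gamma^2\|(v^\ast)^{(T)}\|_2^2$, and $\kappa$ is small enough that $\sqrt n\,\kappa \le \gamma^2\|(v^\ast)^{(T)}\|_2$; hence $\sigma^2 - (\sigma')^2 \le \frac12\sigma^2$, i.e.\ $(\sigma')^2\ge\frac12\sigma^2$. The displayed numeric bound then follows immediately from the estimate $\sigma \ge \frac{\gamma}{((L+2)^{(L+2)/2}+1)n}$ of Lemma~\ref{lem:closetogaussian}.

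Part~(3) is the only substantive step. Put $\tilde w \eqdef (w'_1,\dots,w'_{K-1},v^\ast_K,\dots,v^\ast_n)$, so that $\tilde w$ and $w'$ share the head $(w')^{(H)}$ while $\tilde w^{(T)}=(v^\ast)^{(T)}$. Pointwise, $0 \le \tilde w\cdot X - w'\cdot X = \sum_{i=K}^n(v^\ast_i-w'_i)X_i \le n\kappa$, so $\{\tilde w\cdot X \ge \theta+n\kappa\}\subseteq\{w'\cdot X\ge\theta\}$, and therefore $\obj(w') \ge \obj(\tilde w) - \Pr[\theta\le\tilde w\cdot X < \theta+n\kappa]$. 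To bound the last probability by $\eps/40$, I would write $\tilde w\cdot X = Y + Z$ with $Y\eqdef (w')^{(H)}\cdot X^{(H)}$ and $Z\eqdef(v^\ast)^{(T)}\cdot X^{(T)}$ independent; by Lemma~\ref{lem:closetogaussian}, $\dK(Z,N(\mu,\sigma^2))\le\eps/200$ and $\sigma\ge\frac{\gamma}{((L+2)^{(L+2)/2}+1)n}$. Conditioning on $Y$, the probability that $Z$ lands in any window of width $n\kappa$ is at most the $N(\mu,\sigma^2)$-mass of such a window plus $2\cdot(\eps/200)$, hence at most $\frac{n\kappa}{\sigma\sqrt{2\pi}} + \frac{\eps}{100}$ by the uniform density bound $\frac{1}{\sigma\sqrt{2\pi}}$ on the Gaussian. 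Averaging over $Y$ and using that $\kappa$ was chosen so that $n\kappa/\sigma \le \eps/200$, the probability is at most $\frac{\eps}{200}+\frac{\eps}{100} < \frac{\eps}{40}$, which finishes part~(3).

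The main obstacle is really just part~(3), and within it the bulk of the work is verifying that the single explicit value of $\kappa$ simultaneously meets all the smallness requirements used above: $\sqrt n\,\kappa \lesssim \|(v^\ast)^{(T)}\|_2$ for~(1), $\sqrt n\,\kappa \lesssim \gamma^2\|(v^\ast)^{(T)}\|_2$ for~(2), and $n\kappa/\sigma \lesssim \eps$ for~(3). Since both $\|(v^\ast)^{(T)}\|_2$ and $\sigma$ are bounded below by $\Theta\!\left(\frac{\gamma}{((L+2)^{(L+2)/2}+1)n}\right)$ while $\kappa$ carries an additional factor of order $\frac{\eps\gamma}{((L+2)^{(L+2)/2}+1)n^2}$ beyond that, all three requirements hold comfortably. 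Conceptually, the content of~(3) is that shaving the near-Gaussian (hence anti-concentrated) tail weights downward by the negligible total amount $n\kappa$ cannot create a concentration spike of probability mass just above $\theta$, so $\obj$ barely moves.
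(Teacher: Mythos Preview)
Your proposal is correct and follows essentially the same approach as the paper: bound the change in $\ell_2$-norm/variance from rounding to get (1) and (2), then for (3) condition on the head and use Berry--Ess\'een plus Gaussian anti-concentration to show the tail random variable puts at most $\eps/40$ mass in any interval of width $n\kappa$. The only cosmetic difference is that in (3) you invoke the Gaussian approximation for $(v^\ast)^{(T)}\cdot X^{(T)}$ directly from Lemma~\ref{lem:closetogaussian}, whereas the paper instead uses part~(1) to get $\eps\gamma/100$-regularity of $(w')^{(T)}$ and applies Berry--Ess\'een to $(w')^{(T)}\cdot X^{(T)}$; both choices lead to the same $\eps/40$ bound.
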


\begin{proof}
We start by lower bounding $(\sigma')^2$ as follows.  Since\ignore{$\sigma^2
= \sum_{i=K}^n (v^\ast_i)^2 p_i(1-p_i) \geq
{\frac {\gamma^2}{(L!+1)^2n^2}}$
and}
each $w'_i$, $K \leq i \leq n$, is less than $v^\ast_i$ by at most $\kappa$,
we have that $\sum_{i=K}^n (w'_i)^2$ is less than
$\sum_{i=K}^n (v^\ast_i)^2$ by at most $2 \kappa n$
and hence
\[
\sigma^2 - (\sigma')^2 \leq 2 \kappa n \cdot \max_{i=K}^n p_i(1-p_i)
\leq  \new{{\frac {\kappa n} 2} <
{\frac 1 2} \cdot {\frac {\gamma^2}{(\new{(L+2)^{(L+2)/2}} +1)^2 n^2}}
\leq {\frac 1 2} \cdot \sigma^2}
\quad \text{so}\quad
(\sigma')^2 \geq {\frac 1 2} \sigma^2,
\]
giving (2).
Part (1) follows easily from (2) and the fact that
$w'_i \leq v^\ast_i$ for $K \leq i \leq n.$

For part (3) we use the fact that the tail $w'^{(T)} \cdot X^{(T)}$
is anti-concentrated (since, by regularity, it is close to a Gaussian).
In more detail, fix an outcome $(y_1,\dots,y_{K-1}) \in \{0,1\}^{K-1}$
for the head bits.  Since $\sum_{i=K}^n w'_i y_i \geq
\sum_{i=K}^n v^\ast_i y_i - \kappa n$ for all $(y_K,\dots,y_n)
\in \{0,1\}^{n-k+1},$
we have
\begin{eqnarray}
&&
\Pr\left[\sum_{j=1}^{K-1} w'_j y_j + (v^\ast_i)^{(T)} \cdot X^{(T)} \geq \theta \right]
-
\Pr\left[ \sum_{j=1}^{K-1} w'_j y_j + (w')^{(T)} \cdot X^{(T)} \geq \theta\right]
\nonumber\\
&\leq&
\Pr \left[ (w')^{(T)} \cdot X^{(T)} \in \Big[ \theta - \sum_{j=1}^{K-1} w'_j y_j -
 \kappa n, \theta - \sum_{j=1}^{K-1} w'_j y_j \Big] \right].
\label{eq:z}
\end{eqnarray}
Since by (1) we know that $(w')^{(T)}$ is $\eps \gamma/\new{100}$-regular,
Corollary \ref{cor:BE} gives us that $$\dK \left( (w')^{(T)} \cdot X^{(T)},
N\big( \E[(w')^{(T)} \cdot X^{(T)}], (\sigma')^2 \big) \right) \leq \eps/100.$$
Since $\kappa n / 2 \leq \eps \sigma'/\new{200}$,
as noted after Lemma \ref{lem:extension}
a random variable $Z \sim
N(\E[(w')^{(T)} \cdot X^{(T)}],(\sigma')^2)$ has
$\Pr[Z \in I] \leq \eps/200$ for any interval $I$ of length $\kappa n.$
Hence (\ref{eq:z}) is at most ${\frac \eps {100}} + {\frac \eps {100}}
+ {\frac \eps {200}} = {\frac \eps {40}}$.  Since this holds for
each fixed $(y_1,\dots,y_{K-1}) \in \{0,1\}^{K-1}$, we get (3).
\end{proof}

\begin{lemma} \label{lem:anytail}
As above suppose $v^\ast$ is type $K$.  Fix $(w'')^{(T)}
= (w''_K,\dots,w''_n)
\in \R^{n-K+1}_{\geq 0}$ to be \emph{any}
$\eps\gamma/\new{100}$-regular tail vector such that
$\mu'' \eqdef \E[(w'')^{(T)} \cdot X^{(T)}]$ equals
$\mu' \eqdef \E[(w')^{(T)} \cdot X^{(T)}]$,
and $(\sigma'')^2 \eqdef \Var[(w'')^{(T)} \cdot X^{(T)}]$
equals $(\sigma')^2$ (see part (2) of
\newad{Lemma \ref{lem:closetogranular}).}
Then for any head vector $(w'')^{(H)}=(w''_1,\dots,w''_{K-1})$,
we have that $\obj((w''_1,\dots,w''_{K-1},w''_K,\dots,w''_n)) \geq
\obj(w''_1,\dots,w''_{K-1},v^\ast_K,\dots,v^\ast_n)- \eps/40.$
\end{lemma}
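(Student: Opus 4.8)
The plan is to reduce the statement to a single Kolmogorov-distance estimate between the two tail random variables, and then transfer that estimate through the (arbitrary) head by independence. The target inequality is
\[
\dK\big((w'')^{(T)} \cdot X^{(T)},\ (v^\ast)^{(T)} \cdot X^{(T)}\big) \le \eps/40 .
\]
Granting this, observe that under $\calD_p$ the head bits $X_1,\dots,X_{K-1}$ are independent of the tail bits $X_K,\dots,X_n$, so the random variable $(w'')^{(H)} \cdot X^{(H)}$ is independent of both $(w'')^{(T)} \cdot X^{(T)}$ and $(v^\ast)^{(T)} \cdot X^{(T)}$. Hence Fact~\ref{fact:dK-indep} upgrades the estimate to $\dK\big((w'')^{(H)} \cdot X^{(H)} + (w'')^{(T)} \cdot X^{(T)},\ (w'')^{(H)} \cdot X^{(H)} + (v^\ast)^{(T)} \cdot X^{(T)}\big) \le \eps/40$, and by the same elementary observation as in the Remark following the definition of $\dK$ this bounds $|\obj(w''_1,\dots,w''_{K-1},w''_K,\dots,w''_n) - \obj(w''_1,\dots,w''_{K-1},v^\ast_K,\dots,v^\ast_n)|$ by $\eps/40$, which is even stronger than the claimed one-sided bound. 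The head vector enters only through independence, so the conclusion indeed holds for \emph{any} head.

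To prove the Kolmogorov estimate I would use a three-term triangle inequality through the Gaussians $N(\mu',(\sigma')^2)$ and $N(\mu,\sigma^2)$. By hypothesis $(w'')^{(T)}$ is $\eps\gamma/100$-regular with mean $\mu'$ and variance $(\sigma')^2$, and $\min_{i \ge K}\{p_i,1-p_i\} \ge \gamma$ by the definition of $\gamma$, so Corollary~\ref{cor:BE} (applied to the last $n-K+1$ coordinates) gives $\dK\big((w'')^{(T)} \cdot X^{(T)}, N(\mu',(\sigma')^2)\big) \le (\eps\gamma/100)/\gamma = \eps/100$. Since $v^\ast$ is type $K$, Lemma~\ref{lem:closetogaussian} gives $\dK\big((v^\ast)^{(T)} \cdot X^{(T)}, N(\mu,\sigma^2)\big) \le \eps/200$ and the lower bound $\sigma \ge \gamma/\big(((L+2)^{(L+2)/2}+1)n\big)$. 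It therefore remains only to establish $\dK\big(N(\mu,\sigma^2),N(\mu',(\sigma')^2)\big) \le \eps/100$; the three bounds then sum to $\eps/100 + \eps/200 + \eps/100 = \eps/40$.

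For the Gaussian-to-Gaussian term I would split it as $\dK\big(N(\mu,\sigma^2),N(\mu',\sigma^2)\big) + \dK\big(N(\mu',\sigma^2),N(\mu',(\sigma')^2)\big)$. The first piece is at most $|\mu-\mu'|/(\sigma\sqrt{2\pi})$ because the $N(\mu',\sigma^2)$ cdf is $1/(\sigma\sqrt{2\pi})$-Lipschitz. For the second piece, since $\sigma' \le \sigma$ a change of variables reduces it to $\sup_u |\Phi(u) - \Phi(u/\rho)|$ with $\rho = \sigma/\sigma' \ge 1$, which is at most $(\rho-1)/\sqrt{2\pi e}$ by an elementary one-variable calculation, and $\rho - 1 = (\sigma-\sigma')/\sigma' \le (\sigma^2-(\sigma')^2)/(2(\sigma')^2)$. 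Now I invoke the construction of $w'$ underlying Lemma~\ref{lem:closetogranular}: since $0 \le v^\ast_i - w'_i < \kappa$ for $K \le i \le n$, we have $0 \le \mu - \mu' \le \kappa n$ and $0 \le \sigma^2 - (\sigma')^2 \le \kappa n/2$, while part (2) of that lemma gives $(\sigma')^2 \ge \sigma^2/2$. Plugging in $\kappa = \eps\gamma^2/\big(200((L+2)^{(L+2)/2}+1)^2 n^3\big)$ and the lower bound on $\sigma$ above, a one-line computation gives $\kappa n/\sigma \le \eps/200$ and $\kappa n/\sigma^2 \le \eps/200$, so each of the two pieces is at most $\eps/200$ and hence $\dK\big(N(\mu,\sigma^2),N(\mu',(\sigma')^2)\big) \le \eps/100$, as needed.

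The main obstacle is precisely this Gaussian-to-Gaussian comparison: a mean shift is absorbed by the Lipschitz constant of the Gaussian cdf, but a variance mismatch is \emph{not} controlled by any Lipschitz bound, so one genuinely needs the relative variance defect $(\sigma^2-(\sigma')^2)/\sigma^2$ to be $O(\eps)$. This is exactly why $\kappa$ was chosen polynomially below the lower bound on $\sigma$ guaranteed by Lemma~\ref{lem:closetogaussian} (which itself traces back to the heavy-tail structural theorem, Theorem~\ref{thm:large-tail}). Apart from that, the only care needed is the additive accounting so that the three error terms sum to at most $\eps/40$; with the constants above they sum to exactly $\eps/40$.
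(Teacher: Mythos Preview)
Your proof is correct. The paper's own proof is a single sentence, ``The proof is identical to part (3) of Lemma~\ref{lem:closetogranular},'' but this is a bit loose: the argument for part~(3) uses the coordinate-wise inequality $v^\ast_i-\kappa\le w'_i\le v^\ast_i$ to couple the two tail sums deterministically (so that they differ by at most $\kappa n$ on every sample point) and then invokes anti-concentration of $(w')^{(T)}\cdot X^{(T)}$ on an interval of length $\kappa n$. That coupling is unavailable for an arbitrary $\eps\gamma/100$-regular tail $(w'')^{(T)}$ that only matches $(w')^{(T)}$ in mean and variance, so the part-(3) argument cannot be repeated verbatim.

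Your route---approximate each of $(w'')^{(T)}\cdot X^{(T)}$ and $(v^\ast)^{(T)}\cdot X^{(T)}$ by its Gaussian via Corollary~\ref{cor:BE} and Lemma~\ref{lem:closetogaussian}, bound $\dK\!\big(N(\mu,\sigma^2),N(\mu',(\sigma')^2)\big)$ by a mean-shift plus variance-shift estimate, and lift the resulting Kolmogorov bound through the head via Fact~\ref{fact:dK-indep}---is the natural way to complete the argument and is almost certainly what the authors intended. Compared to part~(3), you trade the deterministic coupling for a second Berry--Ess\'een application and an explicit Gaussian-to-Gaussian comparison; this is exactly what the weaker hypothesis on $(w'')^{(T)}$ forces. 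Your constants are consistent with the paper's choice of $\kappa$ and the lower bound on $\sigma$ from Lemma~\ref{lem:closetogaussian}, and as you note you in fact obtain the two-sided bound $|\obj(\cdot)-\obj(\cdot)|\le\eps/40$, slightly stronger than the one-sided statement.
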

\begin{proof} The proof is identical to part (3) of Lemma
\ref{lem:closetogranular}.
\end{proof}

Having established the existence of a ``good'' tail (the vector $(w')^{(T)}$
from Lemma \ref{lem:closetogranular}), we now argue that
{\tt Construct-Achievable-Regular-Tails}
can efficiently construct a list containing some such good tail vector.
Lemma \ref{lem:anytail} ensures that finding any such good tail vector
is as good as finding the actual tail vector $(w')^{(T)}$
obtained from $(v^\ast)^{(T)}$ by rounding down as
described in Lemma \ref{lem:closetogranular}.

\begin{lemma} \label{lem:CART}
There is a (deterministic) algorithm
{\tt Construct-Achievable-Regular-Tails}($\eps')$
that, given input parameters $\eps'$ \newad{and $K$}, outputs a
list consisting precisely of all the $\eps'$-regular achievable $(A,B,C)$
triples, and for each achievable triple it outputs a corresponding tail
vector $(w''_{K},\dots,w''_n)$ that achieves it.
The algorithm runs in time $\poly(n,1/\kappa)=\poly(1/\kappa).$
\end{lemma}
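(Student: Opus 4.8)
The plan is to prove Lemma \ref{lem:CART} by a dynamic program that follows the {\tt Construct-Achievable-Tails} algorithm from the proof of Lemma \ref{lem:dp} essentially verbatim, augmented to also track the squared $\ell_2$-norm of the partially-built tail vector so that the regularity constraint can be enforced. For a tail vector $w=(w_K,\dots,w_n)$ whose coordinates are non-negative integer multiples of $\kappa$, write $A(w)=\sum_{i\geq K} w_i p_i/(\kappa\eps/(4n))$, $B(w)=\sum_{i\geq K} w_i^2 p_i(1-p_i)/(\kappa\eps/(4n))^2$, $C(w)=\sum_{i\geq K} w_i/\kappa$, and additionally $D(w)=\sum_{i\geq K} w_i^2/\kappa^2$; by Assumption (A2) each of these is a non-negative integer bounded by $\poly(1/\kappa)$ (the argument is identical to the corresponding point in the proof of Lemma \ref{lem:dp}, using that $\sum w_i\leq 1$ to bound $D(w)\leq 1/\kappa^2$). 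Since $\|w\|_2=\kappa\sqrt{D(w)}$, the vector $w$ is $\eps'$-regular precisely when every coordinate satisfies $w_i\leq \eps'\kappa\sqrt{D(w)}$, so the extra coordinate $D$ is exactly what is needed to talk about regularity; note that $D$ cannot be recovered from $B$ because the variance weights $p_i(1-p_i)$ differ across coordinates.

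The main obstacle is that, unlike $A,B,C,D$, the regularity predicate is not additive: whether a partially-built vector may legally be extended by a coordinate $w_t$ depends on the $\ell_2$-norm of the \emph{final} vector, which is unknown mid-computation. I would get around this with an outer loop over the final value of $D$. Concretely, for each candidate integer $D^\ast\in\{0,1,\dots,\poly(1/\kappa)\}$, run a restricted dynamic program over sub-problems indexed by $(A,B,C,D,t)$ with $t\in\{K,\dots,n\}$, which determines for which tuples there is a choice of weights $w_K,\dots,w_t$ — each a non-negative integer multiple of $\kappa$ and, crucially, each at most $\eps'\kappa\sqrt{D^\ast}$ — with all remaining weights $0$, such that $A(w)=A$, $B(w)=B$, $C(w)=C$, $D(w)=D$. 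Exactly as in the proof of Lemma \ref{lem:dp}, the set of tuples reachable using only $w_K,\dots,w_t$ is obtained from the set reachable using only $w_K,\dots,w_{t-1}$ by trying, for each of the $O(1/\kappa)$ admissible values of $w_t$, whether the appropriately decremented tuple was reachable; a backpointer at each step lets us reconstruct an explicit witness vector $(w_K,\dots,w_n)$ for any reachable tuple. For this $D^\ast$ we then read off every reachable tuple of the form $(A,B,C,D^\ast,n)$ and, for each, output the triple $(A,B,C)$ together with the reconstructed witness; the final list is the union over all $D^\ast$, with duplicate triples removed.

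It remains to check correctness and running time. For soundness, any witness produced in the $D^\ast$-iteration has all coordinates $\leq\eps'\kappa\sqrt{D^\ast}$ and squared $\ell_2$-norm $D^\ast\kappa^2$, hence is $\eps'$-regular, and by construction it achieves its reported $(A,B,C)$; so every output triple is genuinely $\eps'$-regular achievable with the reported witness. For completeness, if $(A,B,C)$ is $\eps'$-regular achievable via some $u=(u_K,\dots,u_n)$, set $D^\ast=D(u)$, an integer in the scanned range; since $u$ is $\eps'$-regular, every $u_i\leq\eps'\kappa\sqrt{D^\ast}$, so $u$ is a legal sequence of moves in the $D^\ast$-iteration, the tuple $(A,B,C,D^\ast,n)$ is found, and $(A,B,C)$ is output. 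Thus the list is exactly the set of $\eps'$-regular achievable triples. For the running time, there are $\poly(1/\kappa)$ choices of $D^\ast$, each spawning a dynamic program with $\poly(n,1/\kappa)$ sub-problems, each solved by examining $O(1/\kappa)$ extensions; since $1/\kappa\geq n$ the total is $\poly(n,1/\kappa)=\poly(1/\kappa)$, as claimed. The only delicate bookkeeping is the integrality and $\poly(1/\kappa)$-boundedness of $B(w)$ and $D(w)$, handled exactly as the analogous point in Lemma \ref{lem:dp}.
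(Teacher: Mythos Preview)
Your proof is correct, but the paper handles the non-additive regularity constraint differently. Rather than an outer loop over the target value $D^\ast$ of the squared $\ell_2$-norm with a per-coordinate cap of $\eps'\kappa\sqrt{D^\ast}$, the paper simply adds one more coordinate to the DP state: it runs a single dynamic program over quintuples $(A,B,C,D,E)$ where $E(w)=\max_{i\geq K} w_i/\kappa$ is the (normalized) largest coordinate, with the obvious recurrence $E=\max\{E(w_K,\dots,w_{t-1}),\,w_t/\kappa\}$. After the DP terminates, regularity of a quintuple is tested by checking whether $E/\sqrt{D}\leq \eps'$, and the $\eps'$-regular achievable triples $(A,B,C)$ are then read off. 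Both approaches correctly cope with the fact that regularity depends on the final $\ell_2$-norm rather than the partial one, and both cost $\poly(1/\kappa)$: your outer loop over $D^\ast$ contributes a $\poly(1/\kappa)$ factor, while the paper's extra state variable $E$ enlarges the state space by the same factor. The paper's version is a bit more streamlined (one DP rather than $\poly(1/\kappa)$ many), while yours avoids introducing a state variable with a non-additive update; either argument suffices.
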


\begin{proof}
Similar to the earlier {\tt Construct-Achievable-Tails} algorithm, the
main idea is to use dynamic programming; however the details are
somewhat different, chiefly because of the need to ensure
regularity (and also because the numerical quantities involved
are somewhat different from before).

Let \newad{$w=(w_{K},\dots,w_n)$} be a tail weight vector such that each
$w_i$ is a non-negative integer multiple of $\kappa$.
We define the quantities
\begin{eqnarray*}
&&
A(w) = \sum_{i=K}^n w_i p_i / (\kappa \eps/(4n)); \quad
B(w) = \sum_{i=K}^n w_i^2 p_i(1-p_i) / (\kappa^2 (\eps/(4n))^2); \quad
C(w) = \sum_{i=K}^n w_i/\kappa; \\
&&
D(w) = \sum_{i=K}^n w_i^2 / \kappa^2; \quad
E(w) = \max_{i=K}^n w_i/\kappa.
\end{eqnarray*}
Recalling Assumption (A2), we see that each of $A(w),B(w),
C(w),D(w),E(w)$ is a non-negative integer.
We say that a quintuple $(A,B,C,D,E)$ is \emph{conceivable} if all values are
non-negative integers at most $M$.

For each conceivable quintuple $(A,B,C,D,E)$
and for every $t \in \{K,\dots,n\}$,
we create a sub-problem in which the goal is to determine whether there is
a choice of weights $w_K,\dots, w_t$
(each of which is a non-negative integer multiple of $\kappa$, with
all other weights $w_{t+1},\dots,w_n$ set to 0) such that $A(w)=A,$
$B(w)=B,$ $C(w)=C,$ $D(w)=D$ and $E(w)=E.$
Such a choice of weights $w_{K},\dots,w_t$ exists if and only if
there is a nonnegative-integer-multiple-of-$\kappa$ choice of $w_t$ for which
there is a nonnegative-integer-multiple-of-$\kappa$
choice of weights $w_{K},\dots,w_{t-1}$ (with all subsequent weights set to
0) such that $A(w)=A -w_tp_t/(\kappa \eps/(4n))$, $B(w)=B-w_t^2p_t(1-p_t)/
(\kappa^2 (\eps/(4n))^2)$, $C(w)=C - w_t/\kappa$,
$D(w) = D - w_t^2/\kappa^2$, and $E=\max\{E(w),w_t/\kappa\}.$

Thus, given the set of all quintuples that are achievable with only weights
$w_{K}, \dots,w_{t-1}$ allowed to be nonzero, it is straightforward to
efficiently
(in $\poly(1/\kappa)$ time)
identify the set of all quintuples that are achievable with only
weights $w_{K},\dots,w_t$ allowed to be nonzero.
\ignore{ This is because for a given
candidate (conceivable) quintuple $(A,B,C,D,E)$, one
can check over all possible values of $w_{t}$ (that are integer multiples
of $\kappa$ and upper bounded by 1) whether the triple
$(A - (w_t)^2/\kappa,B-w_tp_t/(\kappa \eps/(4n)),C - w_t/\kappa)$ is
achievable with only weights
$w_{L+1}, \dots,w_{t-1}$ allowed to be nonzero.}Since there are
only $O(1/\kappa)$ choices of the weight $w_{t}$ and the overall
number of sub-problems in this dynamic program is bounded by $\poly(n,1/\kappa)
= \poly(1/\kappa)$, the overall entire dynamic program runs
in $\poly(1/\kappa)$ time.

Once the set of all achievable quintuples has been obtained, it is
straightforward for each quintuple $(A,B,C,D,E)$ to determine whether
or not it is $\eps'$-regular (by computing $E/\sqrt{D}$ and comparing
against $\eps'$).  Having identified the set of all $\eps'$-regular
quintuples, it is \newad{easy} to output a list consisting of
all the $\eps'$-regular achievable $(A,B,C)$ triples (and from the
dynamic program it is easy to maintain a tail vector achieving the
triple in the usual way).  This concludes the proof of Lemma \ref{lem:CART}.
\end{proof}

\ignore{

\begin{lemma} \label{lem:anyhead}
As above suppose $v^\ast$ is type $K$.
Fix any head vector $(w'')^{(H)}=(w''_1,\dots,w''_{K-1})$ such that
$\sum_{i=1}^{K-1} |w''_i - v^\ast_i| \leq \new{blah}$ and
fix the tail vector $(w'')^{(T)}=(w''_K,\dots,w''_n)$ to be as
in Lemma \ref{lem:anytail}.
Then $w''=(w''_1,\dots,w''_n)$ has $\obj(w'') \geq \opt - \eps/20.$
\end{lemma}

}

\subsection{
Finding a good head vector:  The {\tt Find-Approximately-Best-Head} procedure
and the proof of Theorem \ref{thm:type-K}}
\label{sec:case3-head}

By Lemma \ref{lem:CART}
the {\tt Construct-Achievable-Regular-Tails} procedure generates a tail
vector $(w'')^{(T)}$ that matches the mean, variance and $L_1$-norm
of the $(w')^{(T)}$ vector whose existence is asserted by
Lemma \ref{lem:closetogranular}.
In the rest of this section we consider the execution of {\tt
Find-Approximately-Best-Head} when it is run on this tail
vector $(w'')^{(T)}$ as input.

By the DKW inequality (Theorem \ref{thm:DKW}), with high probability the random variable $R$
has $\dK(R,(w'')^{(T)} \cdot X^{(T)}) \leq \eps/200$; we henceforth
assume that this is indeed the case.
Fact \ref{fact:dK-indep} implies that
$\dK((v^\ast)^{(H)} \cdot X^{(H)} + R,
(v^\ast)^{(H)} \cdot X^{(H)} + (w'')^{(T)} \cdot X^{(T)}) \leq
\eps/200.$
Since $\obj(v^\ast_1,\dots,v^\ast_{K-1},w''_K,\dots,w''_n)
\geq \opt - \eps/40$ by Lemma \ref{lem:anytail},
we get that
$\Pr[(v^\ast)^{(H)} \cdot X^{(H)} + R \geq \theta] \geq \opt - 6\eps/200.$

By Lemma \ref{lem:findbestheadworks}, the {\tt Find-Best-Head}
procedure returns a head vector $u^{(H)}=(u_1,\dots,u_{K-1})$
such that
$\Pr[u^{(H)}\cdot X^{(H)} + R \geq \theta]  \geq
\Pr[(v^\ast)^{(H)} + R \geq \theta]$, so
$\Pr[u^{(H)}\cdot X^{(H)} + R \geq \theta]  \geq \opt - 6\eps/200.$
Now recalling that $\dK(R,(w'')^{(T)} \cdot X^{(T)}) \leq \eps/200$,
applying Fact \ref{fact:dK-indep} again gives us that
$\dK(u^{(H)} \cdot X^{(H)} + R,
u^{(H)} \cdot X^{(H)} + (w'')^{(T)} \cdot X^{(T)}) \leq
\eps/200.$
Hence it must be the case that
$\Pr[u^{(H)}\cdot X^{(H)} + (w'')^{(T)} \cdot X^{(T)} \geq \theta]
\geq \opt - 7\eps/200.$
Since $u_1 + \cdots + u_{k-1} + w''_k + \cdots + w''_n \leq 1$
by Lemma \ref{lem:findbestheadworks}, this vector is a near-optimal
feasible solution.
This concludes the proof of Theorem \ref{thm:type-K}, modulo the
proof of Lemma \ref{lem:findbestheadworks}.
\qed


\begin{framed}
\noindent {\bf Algorithm
{\tt Find-Approximately-Best-Head:}}

\medskip

\noindent {\bf Input:}  vector of tail weights $(u_K,\dots,u_n)$
with $u_K + \cdots + u_n \leq 1$;
parameters $\eps',\delta'$

\noindent {\bf Output:}  if $v^\ast$ is type $K$,
with probability $1-\delta'$ a head vector such that
$\Pr[u \cdot X \geq \theta] \geq \Pr[(u'_1,\dots,u'_{K-1},u_K,\dots,u_n)
\cdot X \geq \theta] - \eps'$ for all $(u'_1,\dots,u'_{K-1}) \in \R^{K-1}_{\geq 0}$
such that $u'_1 + \cdots + u'_{K-1} + u_k + \cdots + u_n \leq 1$

\begin{enumerate}

\item Sample $m=\Theta(\log(1/\delta')/(\eps')^2)$ points $t_1,\dots,t_m$
from the random variable $(u_K,\dots,u_n) \cdot X^{(T)}$.
Let $R$ be the random variable which is uniform over the multiset
$\{t_1,\dots,t_m\}.$

\item
Run Algorithm {\tt Find-Best-Head}$(t_1,\dots,t_m,1-\sum_{j=K}^n u_j,K)$
and return the head vector $(u_1,\dots,u_{K-1})$ that it returns.

\end{enumerate}

\end{framed}



\begin{framed}
\noindent {\bf Algorithm {\tt Find-Best-Head:}}

\medskip

\noindent {\bf Input:}  points $t_1,\dots,t_m$,
weight value $0 \leq W \leq 1$, parameter $K$

\noindent {\bf Output:}  Returns the non-negative head vector $u^{(H)} =
(u_1,\dots,u_{K-1})$ that maximizes $\Pr[u^{(H)} \cdot X^{(H)} +
R \geq \theta]$ subject to $u_1 + \cdots + u_{K-1} \leq W$,
where $R$ is the random variable that is uniform over multiset $\{t_1,\dots,t_m\}$

\begin{enumerate}

\item Let ${\cal S}$ be the set of all $2^{\Theta(K^2)}$ sets $S \subseteq \{0,1\}^{K-1}$ such that
$S=\{x \in \{0,1\}^{K-1}:  u \cdot x \geq c \}$ for some $u \in \R^{K-1}, c \in \R.$

\item For each $S=(S_1,\dots,S_m) \in {\cal S}^m$, check whether the following linear program over variables $w_1,\dots,w_{K-1}$
is feasible and if so let $w^{(S)} \in \R^L$ be a feasible solution:

\begin{enumerate}

\item For each $i \in [m]$ and each $x \in S_i$, $w \cdot x + t_i \geq \theta$;

\item $w_1, \cdots, w_{K-1} \geq 0$;

\item $w_1 + \cdots + w_{K-1} \leq W$.

\end{enumerate}

\item For each $w^{(S)}$ obtained in the previous step, compute
$\Pr[w^{(S)} \cdot X^{(H)} + R \geq \theta]$
and output the vector $w^{(S)}$ for which this is the largest.

\end{enumerate}

\end{framed}

\begin{lemma} \label{lem:findbestheadworks}
The (deterministic) algorithm {\tt Find-Best-Head} runs in time $2^{\poly(m,K)}$
and outputs a vector $u^{(H)}=(u_1,\dots,u_{K-1})
\in \R^{K-1}_{\geq 0}$ with $\|u^{(H)}\|_1 \leq W$
which is such that for every $(u')^{(H)} \in \R^{K-1}_{\geq 0}$
with $\|(u')^{(H)}\|_1 \leq W,$ we have
$\Pr[u^{(H)} \cdot X^{(H)} + R] \geq \Pr[(u') \cdot X^{(H)} + R].$
\end{lemma}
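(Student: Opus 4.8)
The plan is to adapt the enumeration-plus-linear-programming strategy already used for {\tt Find-Optimal-Junta} (see Theorem~\ref{thm:L-junta}), the only new ingredient being the additive shifts $t_1,\dots,t_m$. The key observation is that for any head vector $w^{(H)}\in\R^{K-1}$, writing $R$ for the random variable that is uniform over the multiset $\{t_1,\dots,t_m\}$ (and independent of $X^{(H)}$), the objective value of $w^{(H)}$,
\[
\Pr[w^{(H)}\cdot X^{(H)}+R\geq\theta]=\frac1m\sum_{i=1}^m\Pr_{X^{(H)}}\big[w^{(H)}\cdot X^{(H)}\geq\theta-t_i\big],
\]
depends only on the $m$-tuple of Boolean functions $(S_1,\dots,S_m)$ it induces, where $S_i=\{x\in\{0,1\}^{K-1}:w^{(H)}\cdot x\geq\theta-t_i\}$ is a halfspace over $\{0,1\}^{K-1}$. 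Thus it suffices to enumerate all tuples in $\mathcal{S}^m$, where $\mathcal{S}$ is the finite set of all halfspaces over $\{0,1\}^{K-1}$, and for each one decide, via the linear program in Step~2, whether it can be ``realized'' by a non-negative head vector of $\ell_1$-weight at most $W$.

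For the running time, I would recall (exactly as in the proof of Theorem~\ref{thm:L-junta}) that there are only $2^{\Theta(K^2)}$ Boolean functions over $\{0,1\}^{K-1}$ representable as halfspaces~\cite{Chow:61}, each admitting an integer-weight representation of magnitude $2^{\Theta(K\log K)}$~\cite{MTT:61}; hence $\mathcal{S}$ can be listed in time $2^{\Theta(K^2\log K)}$ and $\mathcal{S}^m$ enumerated in time $|\mathcal{S}|^m=2^{\Theta(K^2m)}$. For each tuple the linear program has $K-1$ variables and $O(m2^K)$ constraints, and for each feasible $w^{(S)}$ the quantity $\Pr[w^{(S)}\cdot X^{(H)}+R\geq\theta]$ is computed exactly by summing over the $2^{K-1}$ outcomes of $X^{(H)}$ and the $m$ shifts; all arithmetic is on rationals of polynomially bounded bit-complexity. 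Multiplying these bounds gives the claimed $2^{\poly(m,K)}$ running time, and constraints (b)--(c) of the linear program guarantee directly that the returned $u^{(H)}$ lies in $\R^{K-1}_{\geq 0}$ and has $\|u^{(H)}\|_1\leq W$.

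For correctness, fix any $(u')^{(H)}\in\R^{K-1}_{\geq 0}$ with $\|(u')^{(H)}\|_1\leq W$ and set $S_i^\ast=\{x\in\{0,1\}^{K-1}:(u')^{(H)}\cdot x\geq\theta-t_i\}$ for $i\in[m]$. Then $(S_1^\ast,\dots,S_m^\ast)\in\mathcal{S}^m$ and $w=(u')^{(H)}$ witnesses that the linear program for this tuple is feasible, so {\tt Find-Best-Head} produces some feasible $w^{(S^\ast)}$ for it. Constraint~(a) of the linear program says $w^{(S^\ast)}\cdot x+t_i\geq\theta$ for every $x\in S_i^\ast$, i.e.\ $S_i^\ast\subseteq\{x:w^{(S^\ast)}\cdot x\geq\theta-t_i\}$ for all $i$, whence
\[
\Pr[w^{(S^\ast)}\cdot X^{(H)}+R\geq\theta]=\frac1m\sum_{i=1}^m\Pr_{X^{(H)}}[w^{(S^\ast)}\cdot X^{(H)}\geq\theta-t_i]\geq\frac1m\sum_{i=1}^m\Pr_{X^{(H)}}[X^{(H)}\in S_i^\ast]=\Pr[(u')^{(H)}\cdot X^{(H)}+R\geq\theta].
\]
Since Step~3 returns the feasible $w^{(S)}$ of maximum (exactly evaluated) objective value, its output $u^{(H)}$ satisfies $\Pr[u^{(H)}\cdot X^{(H)}+R\geq\theta]\geq\Pr[w^{(S^\ast)}\cdot X^{(H)}+R\geq\theta]\geq\Pr[(u')^{(H)}\cdot X^{(H)}+R\geq\theta]$, which is the desired inequality.

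I do not expect a genuine obstacle; the single point that needs care is that the linear program in Step~2 is \emph{one-directional} --- it forces the points of $S_i$ onto the side $w^{(S)}\cdot x\geq\theta-t_i$ without forcing points outside $S_i$ to stay off it. This is deliberate, and is precisely what makes the argument go through: it delivers the containment $S_i^\ast\subseteq\{x:w^{(S^\ast)}\cdot x\geq\theta-t_i\}$ --- hence an objective value that can only be larger than that of $(u')^{(H)}$ --- rather than an exact realization of $S_i^\ast$, which could be infeasible once the non-negativity and $\ell_1$-budget constraints are imposed. A minor bookkeeping point to state explicitly is that $u^{(H)}$ is guaranteed feasible only with respect to the sub-budget $W$; it becomes a feasible solution to the overall problem only when concatenated with the tail weights summing to $1-W$, which is how {\tt Find-Best-Head} is invoked.
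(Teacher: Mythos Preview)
Your proposal is correct and follows essentially the same enumeration-plus-LP strategy as the paper's proof. If anything, your treatment is more careful: you make explicit the ``one-directional'' containment $S_i^\ast\subseteq\{x:w^{(S^\ast)}\cdot x\geq\theta-t_i\}$ that justifies why the feasible $w^{(S^\ast)}$ returned for the optimal tuple has objective at least that of $(u')^{(H)}$, a point the paper leaves implicit.
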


\begin{proof}
The claimed running time bound follows easily from the fact that $|{\cal S}|=2^{\Theta(mK^2)}$ (note that
the running time of the linear program and the time required to explicitly compute the probabilities in Step
3 are both dominated by the enumeration over all elements of ${\cal S}^m.$).

The correctness argument is similar to the proof of Theorem \ref{thm:L-junta}.  As in that proof, ${\cal S}$
consists of all possible sets of satisfying assignments to a $(K-1)$-variable halfspace.
The optimal head vector that maximizes
$\Pr[u^{(H)} \cdot X^{(H)} + R \geq \theta]$ subject to $u_1 + \cdots + u_{K-1} \leq W$
must be such that there is some $S=(S_1,\dots,S_m) \in {\cal S}^m$ such that for $1 \leq i \leq m$, $S_i$ is precisely the set of those $x \in \{0,1\}^L$ for which $u^{(H)} \cdot x + t_i \geq \theta.$  By searching over all
$S=(S_1,\dots,S_m) \in {\cal S}^m$ in Step 2, the algorithm will encounter this $S$
and will construct a feasible head vector for it.  Such a feasible head vector will be identified as maximizing the probability in Step 3, and hence {\tt Find-Best-Head} will indeed output an optimal head vector as claimed.
This concludes the proof of Theorem \ref{thm:L-junta}.
\end{proof}


\fi


\section{Putting it together:  proof of Theorem \ref{thm:main}}
\label{sec:together}

In this section we prove Theorem \ref{thm:main} using Theorems
\ref{thm:L-junta}, \ref{thm:type-Lplus1} and \ref{thm:type-K}.

The overall algorithm works as follows.  First, it runs
{\tt Find-Optimal-Junta}$((p_1,\dots,p_L),\theta,1)$
to obtain a feasible solution $w^{\small{\mathrm{junta}}}$.
Next, for each $K=1,\dots,L$
it runs Algorithm {\tt Find-Near-Opt-Small-CI}$((p_1,\dots,p_n),\theta,K,
\delta/(2L))$
to obtain a set ${\cal FEAS}^{(K)}$ of feasible solutions.
Finally, it runs Algorithm {\tt Find-Near-Opt-Large-CI}$((p_1,\dots,p_n),
\theta)$ to obtain a final set ${\cal FEAS}^{(L+1)}$ of feasible
solutions.
It is easy to see from Theorems
\ref{thm:L-junta}, \ref{thm:type-Lplus1} and \ref{thm:type-K} that the running
time of the overall algorithm is as claimed.

Let ${\cal ALL}$ denote the union of the sets
$\{w^{\small{\mathrm{junta}}}\},$ ${\cal FEAS}^{(1)},$ $\dots,$
${\cal FEAS}^{(L)}$ and ${\cal FEAS}^{(L+1)}$.
Since $v^\ast$ must fall in either Case 1, Case 2 or Case 3,
Theorems \ref{thm:L-junta}, \ref{thm:type-Lplus1} and \ref{thm:type-K}
together guarantee that ${\cal ALL}$
is a set of $\poly(n,2^{\poly(L)})$
many feasible solutions that with probability at least $1-\delta/2$
contains a feasible solution $w$ with $\obj(w) \geq \opt - \eps/2.$

Next, we sample $m=\Theta( (1/\eps)^2  \cdot (\log |{\cal ALL}|/\delta))$
points independently from $\calD_p$.  For each feasible solution
$w \in {\cal ALL}$
we use these $m$ points to obtain an empirical estimate
$\widetilde{\obj}(w)$ of $\obj(w)$
(recall that $\obj(w)=\Pr_{X \sim \calD_p}[w \cdot X
\geq \theta]$), i.e., we set $\widetilde{\obj}(w)$ to be the fraction of the
$m$ points that satisfy $w \cdot X \geq \theta.$
A straightforward Chernoff bound implies
that with probability at least $1-\delta/2$, for each $w$ we have
$
|\widetilde{\obj}(w) - \obj(w)| \leq \eps/4.
$

Finally, we output the vector $w^{\ast} \in {\cal ALL}$ that
maximizes $\widetilde{\obj}(w)$
(breaking ties arbitrarily), together with the value
$\widetilde{\obj}(w)$.  With overall probability at least $1-\delta$
this $w^\ast$ has
$\obj(w^\ast) \geq \opt - 3\eps/4$
and $|\widetilde{\obj}(w) - \opt| \leq \eps$ as desired.
This proves Theorem \ref{thm:main}.

\qed

\bibliography{allrefs}
\bibliographystyle{alpha}

\end{document}